\documentclass[sigconf, nonacm]{acmart}

\usepackage[linesnumbered,ruled,vlined]{algorithm2e}
\usepackage{amsmath,amsfonts}
\usepackage{balance}

\usepackage[dvipsnames]{xcolor}

\definecolor{darkred}{rgb}{0.55,0.0,0.0}
\definecolor{navy}{HTML}{000080}
\definecolor{teal}{HTML}{008B8B}
\definecolor{indigo}{HTML}{4B0082}
\definecolor{skyblue}{HTML}{1E90FF}
\definecolor{slateblue}{HTML}{6A5ACD}
\definecolor{darkteal}{rgb}{0.0, 0.4, 0.4}
\definecolor{royalpurple}{rgb}{0.4, 0.2, 0.6}
\definecolor{slategray}{rgb}{0.36, 0.44, 0.53}
\definecolor{darkviolet}{rgb}{0.35, 0.0, 0.6}
\definecolor{royalgold}{rgb}{0.55, 0.45, 0.1}
\definecolor{steelblue}{rgb}{0.27, 0.51, 0.71}

\usepackage{pdfpages}
\usepackage{subfigure}
\usepackage{twemojis}
\usepackage{url}
\usepackage{enumitem}

\usepackage{multirow}
\usepackage{booktabs}

\newcommand{\kw}[1]{{\ensuremath {\mathsf{#1}}}\xspace}
\newcommand{\stitle}[1]{\vspace{1ex} \noindent{\bf #1}}

\newcommand{\re}[1]{\textcolor{black}{#1}}
\newcommand{\rev}[1]{\textcolor{black}{#1}}
\newcommand{\revi}[1]{\textcolor{black}{#1}}
\newcommand{\revis}[1]{\textcolor{black}{#1}}
\newcommand\vldbdoi{XX.XX/XXX.XX}
\newcommand\vldbpages{XXX-XXX}
\newcommand\vldbvolume{14}
\newcommand\vldbissue{1}
\newcommand\vldbyear{2020}
\newcommand\vldbauthors{\authors}
\newcommand\vldbtitle{\shorttitle} 
\newcommand\vldbavailabilityurl{URL_TO_YOUR_ARTIFACTS}
\newcommand\vldbpagestyle{plain}

\begin{document}
\title{Density Decomposition on Hypergraphs}

%%
%% The "author" command and its associated commands are used to define the authors and their affiliations.
\author{Xiaoyu Leng}
\affiliation{%
  \institution{Beijing Institute of Technology, China}
}
\email{366148267@qq.com}

\author{Hongchao Qin}
\affiliation{%
  \institution{Beijing Institute of Technology, China}
}
\email{hcqin@bit.edu.cn}

\author{Rong-Hua Li}
\affiliation{%
  \institution{Beijing Institute of Technology, China}
}
\email{rhli@bit.edu.cn}

%%
%% The abstract is a short summary of the work to be presented in the
%% article.
\begin{abstract}
Decomposing hypergraphs is a key task in hypergraph analysis with broad applications in community detection, pattern discovery, and task scheduling.
Existing approaches such as $k$-core and neighbor-$k$-core rely on vertex degree constraints, which often fail to capture true density variations induced by multi-way interactions and may lead to sparse or uneven decomposition layers.
To address these issues, we propose a novel \((k,\delta)\)-dense subhypergraph model for decomposing hypergraphs based on integer density values. Here, $k$ represents the density level of a subhypergraph, while \(\delta\) sets the upper limit for each hyperedge's contribution to density, allowing fine-grained control over density distribution across layers.
Computing such dense subhypergraphs is algorithmically challenging, as it requires identifying an egalitarian orientation under bounded hyperedge contributions, which may incur an intuitive worst-case complexity of up to $O(2^{m\delta})$.
To enable efficient computation, we develop a fair-stable-based algorithm that reduces the complexity of mining a single $(k,\delta)$-dense subhypergraph from $O(m^{2}\delta^{2})$ to $O(nm\delta)$. Building on this result, we further design a divide-and-conquer decomposition framework that improves the overall complexity of full density decomposition from $O(nm\delta \cdot d^E_{\max} \cdot k_{\max})$ to $O(nm\delta \cdot d^E_{\max} \cdot \log k_{\max})$.
Experiments on nine real-world hypergraph datasets demonstrate that our approach produces more continuous and less redundant decomposition hierarchies than existing baselines, while maintaining strong computational efficiency. Case studies further illustrate the practical utility of our model by uncovering cohesive and interpretable community structures.
\end{abstract}

\maketitle

%%% do not modify the following VLDB block %%
%%% VLDB block start %%%
\pagestyle{\vldbpagestyle}
\begingroup\small\noindent\raggedright\textbf{PVLDB Reference Format:}\\
\vldbauthors. \vldbtitle. PVLDB, \vldbvolume(\vldbissue): \vldbpages, \vldbyear.\\
\href{https://doi.org/\vldbdoi}{doi:\vldbdoi}
\endgroup
\begingroup
\renewcommand\thefootnote{}\footnote{\noindent
This work is licensed under the Creative Commons BY-NC-ND 4.0 International License. Visit \url{https://creativecommons.org/licenses/by-nc-nd/4.0/} to view a copy of this license. For any use beyond those covered by this license, obtain permission by emailing \href{mailto:info@vldb.org}{info@vldb.org}. Copyright is held by the owner/author(s). Publication rights licensed to the VLDB Endowment. \\
\raggedright Proceedings of the VLDB Endowment, Vol. \vldbvolume, No. \vldbissue\ %
ISSN 2150-8097. \\
\href{https://doi.org/\vldbdoi}{doi:\vldbdoi} \\
}\addtocounter{footnote}{-1}\endgroup
%%% VLDB block end %%%

%%% do not modify the following VLDB block %%
%%% VLDB block start %%%
\ifdefempty{\vldbavailabilityurl}{}{
\vspace{.3cm}
\begingroup\small\noindent\raggedright\textbf{PVLDB Artifact Availability:}\\
The source code, data, and/or other artifacts have been made available at \url{https://github.com/xiaoyu-ll/IDH}.
\endgroup
}
%%% VLDB block end %%%

\section{Introduction}

Hypergraphs naturally model multi-way relationships in real-world systems, such as multi-party transactions in financial networks, group interactions in social platforms, and collaborative activities in recommendation systems. These higher-order structures offer richer expressive power than simple graphs. A central task in hypergraph analysis is to decompose dense subhypergraphs—regions with strong internal connectivity—which often correspond to tightly-knit communities or functional units. 
\revi{Such decompositions are crucial for discovering cohesive groups, summarizing multi-way behaviors, and enabling interpretable analysis in domains where relations naturally involve more than two participants \cite{DBLP:journals/corr/abs-2204-05646, DBLP:journals/corr/abs-2301-04235, DBLP:journals/pvldb/QinZLLYW25, 9458645, DBLP:journals/pvldb/ArafatKRG23, DBLP:conf/icde/QinL0WD23}.}

Among the recent researches, $k$-core is the most typical kind of decomposition, which requires all vertices' degrees to be no less than $k$~\cite{9458645}. However, $k$-core decomposition may fail to achieve a good subhypergraph decomposition, as the number of layers in this decomposition is often related to $k$. Given that $k$ is usually not large in real applications, the number of layers in $k$-core decomposition is often limited. As shown in Figure \ref{exam}(b), the entire hypergraph is divided into two layers, where all vertices in layer \(C_1\) have a degree of at least 1, and all vertices in layer \(C_2\) have a degree of at least 2. Therefore, how can we find a decomposition that better targets the size of density? The effect of our proposed method can be seen in Figure \ref{exam}(a), where the hypergraph is eventually decomposed into four layers. The subhypergraph in \(D_{4,4}\) has a density of 4; the subhypergraph in \(D_{3,4} \setminus D_{4,4}\) has a density of 3; the subhypergraph in \(D_{1,4} \setminus D_{2,4}\) has a density of 1. 
More excitingly, our proposed method can decompose this hypergraph according to the integer value of density.
To our best knowledge, this is the first work on hypergraph decomposition based on density.
 (There is a recent important work, the nbr-$k$-core model~\cite{DBLP:journals/pvldb/ArafatKRG23, DBLP:journals/pacmmod/ZhangYWLZL25}, which also performs hypergraph decomposition, but it mainly uses the size of neighbors for decomposition, not density.)

\revi{\textbf{Motivation.} 
Despite the success of core-based or nbr-core–based decompositions, these models inherently depend on local degree thresholds rather than true structural density. 
As a result, they tend to produce coarse layers with uneven density distribution, making it difficult to reveal fine-grained, hierarchically nested communities that are crucial in many real-world applications. 
In particular, many hypergraphs---such as \emph{financial transaction networks}, \emph{legislative co-sponsorship graphs}, and \emph{biological interaction systems}---exhibit overlapping and multi-party relationships \cite{DBLP:journals/corr/abs-2204-05646}, where identifying cohesive and hierarchically organized groups is vital for downstream tasks \cite{DBLP:journals/corr/abs-2301-04235, sun2021higher, qian2024cascading} (e.g., fraud detection, influence analysis, and function discovery). 
However, degree-based decompositions fail to distinguish such patterns, often merging dense and sparse regions together. 
Addressing this limitation requires a decomposition model that directly targets \emph{density itself} rather than vertex degree, and does so in a computationally efficient and interpretable manner. 
The case studies presented later in this paper (Section~6.4) further demonstrate how density-driven decomposition reveals meaningful structures---such as cross-party alliances in legislative networks and compact suspicious-account clusters in anti–money-laundering scenarios---that core-based methods completely overlook.}

To decompose hypergraphs by density, we propose the \((k,\delta)\)-dense subhypergraph model. In this model, each vertex must accumulate at least \(k\) units of indegree (under a constrained orientation) or be reachable from such a vertex via a contribution-preserving hyperpath.
This formulation unifies structural and directional density constraints, offering a more nuanced characterization of dense regions.
We further show that $(k,\delta)$-dense subhypergraphs form a nested and hierarchical decomposition of the hypergraph, which we refer to as the $(k,\delta)$-density decomposition. This decomposition supports fine-grained structural analysis, enabling scalable, multi-resolution exploration of hypergraphs.

\revi{To bridge the gap between degree-based and truly density-oriented decompositions, we propose a principled framework that characterizes hypergraph cohesiveness through integer-valued density levels, leading to a new notion of hierarchical and interpretable structure discovery.}
Our contributions are summarized as follows:

\begin{itemize}[leftmargin=10pt, topsep=0pt]
\item We propose the $(k,\delta)$-dense subhypergraph model, which unifies vertex and structural density constraints to enable fine-grained hierarchical decomposition of hypergraphs. Interestingly, both $k$ and $\delta$ hold significant practical implications: $k$ represents the integer density level at each layer, while $\delta$ denotes the upper limit of each hyperedge’s contribution to the density.
\revi{Compared with the state-of-the-art nbr-$k$-core frameworks~\cite{DBLP:journals/pvldb/ArafatKRG23, DBLP:journals/pacmmod/ZhangYWLZL25}, our model introduces a fundamentally different decomposition principle. 
The nbr-$k$-core model~\cite{DBLP:journals/pvldb/ArafatKRG23} decomposes hypergraphs based on the minimum degrees of neighbors, while its large-scale variant~\cite{DBLP:journals/pacmmod/ZhangYWLZL25} improves computational scalability but retains the same degree-based criterion. 
Such formulations cannot reflect the true density of multi-way relationships and often produce uneven layers.}
%In contrast, our $(k,\delta)$-dense model directly decomposes a hypergraph by integer-valued density levels, yielding smoother hierarchical transitions, stronger interpretability, and natural extensibility to bipartite graphs as a special case (Section 2.7).}

\item We propose a hyperpath-based approach for mining $(k,\delta)$-dense subhypergraphs, and further develop a flow-based method that improves the complexity from $O(m^2\delta^2)$ to $O(m^{1.5}\,\bar d_e^{1.5})$,
\re{\emph{where $n$ and $m$ denote the numbers of vertices and hyperedges and $\bar d_e$ is the average hyperedge size}.}
Building further, we introduce a fair-stable method with an improved time complexity of $O(nm\delta)$.
Building on this, we design two efficient decomposition algorithms: one based on a multi-layer peeling framework and another using divide-and-conquer, which reduce the overall decomposition cost from $O(nm\delta\cdot d^E_{max}\cdot k_{max})$ to $O(nm\delta\cdot d^E_{max}\cdot \log k_{max})$,
\re{\emph{where $d^E_{max}$ is the maximum hyperedge size and $k_{max}$ is the maximum density level explored in the decomposition}.}

%\item We propose a hyperpath-based approach for mining $(k, \delta)$-dense subhypergraphs, and further develop a flow-based method that enhances computational efficiency—reducing the time complexity from $O(m^2 \delta^2)$ to $O(m^{1.5}  \bar{d}_e^{1.5})$. Building upon this, we introduce a fair-stable-based method with an improved time complexity of $O(n  m\delta)$. Building on this, we design two efficient decomposition algorithms: one based on a multi-layer peeling framework, and the other is a divide-and-conquer strategy. These methods reduce the overall decomposition complexity from $O(n m  \delta \cdot d^E_{max}  \cdot  k_{max})$ to $O(n  m \delta  \cdot d^E_{max} \cdot \log k_{max})$, where $d^E_{max}$ denotes the maximum hyperedge size in the hypergraph.

\item We conduct extensive experiments on nine real-world hypergraph datasets to evaluate the effectiveness and efficiency of our methods. Compared with the state-of-the-art nbr-$k$-core decomposition algorithms~\cite{DBLP:journals/pvldb/ArafatKRG23, DBLP:journals/pacmmod/ZhangYWLZL25}, our approach consistently achieves up to a 20× increase in the number of decomposition layers and a 10× improvement in computational efficiency.

\item The code are available at \href{https://github.com/xiaoyu-ll/IDH}{https://github.com/xiaoyu-ll/IDH}. 
\end{itemize}

\begin{figure}
	\centering
	%\hspace{-2mm}
	\subfigure[($k,4$)-density decomposition]{
		\label{4or}
		\includegraphics[width=0.47\columnwidth]{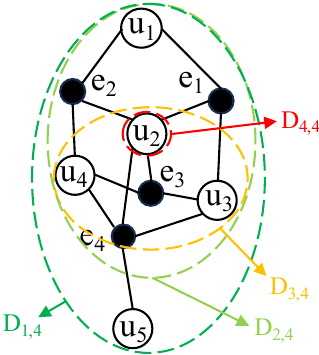}
	}
	%\hspace{-2mm}
	\subfigure[Core decomposition]{
		\label{core}
		\includegraphics[width=0.47\columnwidth]{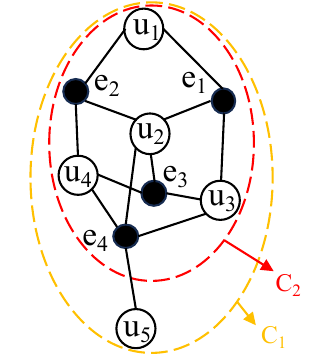}
	}
	\vskip -11pt
	\caption{Density decomposition and core decomposition. \re{Each $e_i$ denotes a hyperedge, each $u_j$ denotes a vertex. An ``edge'' between $e_i$ and $u_j$ indicates that vertex $u_j$ is contained in hyperedge $e_i$. (Note: this is a schematic representation.)}}% rather than the standard hypergraph drawing.)} }%\rev{This figure illustrates how varying $\delta$ (2, 3, 4) produces different levels of decomposition granularity: smaller $\delta$ yields coarser partitions, while larger $\delta$ highlights finer-grained dense layers.}}
	\label{exam}
\end{figure}

\section{Problem Definition}
An undirected and unweighted hypergraph is defined as $\mathcal{H} = (V, E)$, where $V$ is the set of vertices and $E$ is the set of hyperedges, with each hyperedge $e \in E$ being a subset of $V$ (i.e., $e \subseteq V$ and $|e| \ge 1$).
\re{Let $n = |V|$ and $m = |E|$ denote the numbers of vertices and hyperedges, respectively.
We denote by $d^E_{max}$ and $d^E_{min}$ the maximum and minimum hyperedge sizes in $\mathcal{H}$, and by $\bar{d}_e = \frac{1}{m}\sum_{e \in E} |e|$ the \emph{average hyperedge size}.
A parameter $\delta \in [1, d^E_{max}]$ is introduced to cap the contribution of each hyperedge in the density formulation.}
The degree of a vertex $u \in V$, denoted $d_u$, is the number of hyperedges that contain $u$.
For a subset $S \subseteq V$, we use $E[S]$ to denote the set of hyperedges entirely contained within $S$.
These definitions establish the notation for dense structures in hypergraphs.

\subsection{Directed Hypergraphs and Orientations}
To achieve a fair allocation of density contributions among vertices, we introduce an orientation mechanism for hyperedges. 
Assigning a direction to each hyperedge of a hypergraph $\mathcal{H} = (V, E)$ transforms it into a directed hypergraph, denoted by $\mathop{\mathcal{H}}\limits ^{\rightarrow} = (V, \mathop{E}\limits ^{\rightarrow})$, where $\mathop{E}\limits ^{\rightarrow}$ is the set of directed hyperedges. % obtained through such orientations. 
%We use $\vec{e} = (T, H)$ to denote a directed hyperedge, where $T = \{v_{x_1}, v_{x_2}, \dots, v_{x_i}\}$ is the set of source vertices and $H = \{v_{x_{i+1}}, v_{x_{i+2}}, \dots, v_{x_n}\}$ is the set of target vertices. 
%For example, in Figure~\ref{ori}, the directed hyperedge $\vec{e}_4 = (\{u_3, u_4, u_5\}, \{u_2\})$ has $u_3$, $u_4$, and $u_5$ as source vertices and $u_2$ as the target vertex. 
%In the oriented hypergraph $\mathop{\mathcal{H}}\limits ^{\rightarrow}$, the indegree of a vertex $u \in V$ is denoted by ${\vec{d}}_u(\mathop{\mathcal{H}}\limits ^{\rightarrow}) = |\{u \mid u \in H, (T, H) \in \mathop{E}\limits ^{\rightarrow}\}|$, or simply ${\vec{d}}_u$ for brevity.
For example, the directed hypergraphs depicted in Figure \ref{hyperpath}(b)-(c) are orientations of the undirected hypergraph shown in Figure \ref{hyperpath}(a). We use \re{$\vec{e}$ = ($T, H$) }($T$= \{$v_{x_1}$, $v_{x_2}$, ... $v_{x_i}$\}, $H$ = \{$v_{x_{i+1}}$, $v_{x_{i+2}}$, $, ...v_{x_n}$\}) to denote a directed hyperedge. $T$ is a set represents the set of source vertices of this hyperedge, while $H$ represents the set of target vertices. For example, directed hyperedge \re{$\vec{e_4}$ = (\{$u_3,u_4, u_5$\}\{$u_2$\})} in Figure \ref{ori}, vertices $u_3,u_4, u_5$ are sources vertices of $\mathop{e_4}\limits ^{\rightarrow}$, vertex $u_2$ is target vertex. In the oriented hypergraph $\mathop{\mathcal{\mathcal{H}}}\limits ^{\rightarrow}$, the indegree of a vertex $\re{u} \in V$ is denoted by ${\vec{d}}_u$($\mathop{\mathcal{H}}\limits ^{\rightarrow}$) = $|\{u|u \in H, \re{(T, H)} \in \mathop{E}\limits ^{\rightarrow}\}|$, or simply ${\vec{d}}_u$ for brevity.% or, more succinctly, ${\vec{d}}_u$.  %we first give the definitions of \emph{semi-orientation}, \emph{reversible hyperpath}, \emph{hypergraph rotation} and \emph{egalitarian orientation} as follows.

%\re{Here $d_E^{\max}$ denotes the maximum size of a hyperedge in $H$, and we consider $\delta \in [1, d_E^{\max}]$}.

\begin{definition}[\boldmath{$\delta$}-Orientation]
\label{deltapr}
\re{Given a hypergraph $\mathcal{H}$ = ($V$, $E$) and its oriented hypergraph $\mathop{\mathcal{H}}\limits ^{\rightarrow}$ = ($V$, $\mathop{E}\limits ^{\rightarrow}$), if for each directed hyperedge $\vec{e}$ = ($T, H$) $\in$ $\mathop{E}\limits ^{\rightarrow}$, we have $|H|=\min\{\delta, |e|\}$, where $\delta \in [1,d^E_{max}]$, then $\mathop{\mathcal{H}}\limits ^{\rightarrow}$ is a $\delta$-orientation hypergraph of $\mathcal{H}$.}
\end{definition}

\subsection{Hyperpaths and Egalitarian Orientation}

%Before introducing the density decomposition, we define hyperpath and reversible hyperpathin a directed hypergraph.% and introduce a concept that enables the balancing of indegrees among vertices.

Before introducing the density decomposition, we first define the \emph{hyperpath} and \emph{reversible hyperpath} in a directed hypergraph.

\begin{definition}[Hyperpath, Reversible Hyperpath]
\label{repa}
In a directed hypergraph $\mathop{\mathcal{H}}\limits ^{\rightarrow} = (V, \mathop{E}\limits ^{\rightarrow})$, a hyperpath from vertex $s$ to $t$ is a sequence: $s = u_0, \vec{e}_0, u_1, …, \vec{e}_{l-1}, u_l = t$, where each $\vec{e}_i = (T_i, H_i)$ satisfies $u_i \in T_i$ and $u_{i+1} \in H_i$. The hyperpath is \emph{reversible} if $\vec{d}_t - \vec{d}_s \geq 2$. %Reversing such a hyperpath reduces this difference by 2.
\end{definition}

If a hyperpath $s$ $\rightsquigarrow$ $t$ exists, then we say that $s$ can $reach$ $t$. When a hyperpath is $reversed$, all hyperedges and vertices in the hyperpath undergo a reversal. Intuitively, an \emph{egalitarian $\delta$-orientation} distributes the indegree of all vertices in the most equitable manner, i.e., minimizing the indegree difference between vertices as much as possible. Note that if reverse a reversible hyperpath $s$ $\rightsquigarrow$ $t$, ${\vec{d}}_s$  increases by 1, ${\vec{d}}_t$  decreases by 1, and the indegree of other vertices does not change, making the indegree difference between $s$ and $t$ reduced by 2. When no reversible hyperpath exists, the indegree difference can not be reduced anymore. 

\begin{definition}[Egalitarian \boldmath{$\delta$}-Orientation]
A $\delta$-orientation $\mathop{\mathcal{H}}\limits ^{\rightarrow}$ is said to be an \emph{egalitarian $\delta$-orientation} 
if there exists no reversible hyperpath in $\mathop{\mathcal{H}}\limits ^{\rightarrow}$.
\end{definition}

%\begin{example}
%Figure~\ref{ori} illustrates an arbitrary $1$-orientation in which there exist hyperpaths  $u_4 \rightsquigarrow u_1$ ($u_4, \vec{e}_2, u_1$) and $u_5 \rightsquigarrow u_2$ ($u_5, \vec{e}_4, u_2$). The indegree differences between $u_4$ and $u_1$, and between $u_5$ and $u_2$, are both equal to 2. Hence, these two hyperpaths are \emph{reversible}. In Figure~\ref{egaori}, we reverse the corresponding hyperedges along each hyperpath, resulting in $u_1 \rightsquigarrow u_4$ ($u_1, \vec{e}_2, u_4$) and $u_2 \rightsquigarrow u_5$ ($u_2, \vec{e}_4, u_5$). These reversals reduce the indegree difference in each pair by 2. After these operations, no reversible hyperpaths remain, indicating that the hypergraph becomes an \emph{egalitarian $1$-orientation}.
%\end{example}

\begin{example}
Figure \ref{ori} illustrates an arbitrary $1$-orientation in which there exists hyperpaths $u_4$ $\rightsquigarrow$ $u_1$ ($u_4, \vec{e}_2, u_1$) and $u_5$ $\rightsquigarrow$ $u_2$ ($u_5, \vec{e}_4, u_2$). The indegree difference between $u_4$ and $u_1$ is equal to 2, the same to $u_5$ and $u_2$. Therefore, the hyperpath $u_4$ $\rightsquigarrow$ $u_1$ and hyperpath $u_5$ $\rightsquigarrow$ $u_2$ are \emph{reversible hyperpaths}. In Figure \ref{egaori}, we reverse the two reversible hyperpaths by inverting the direction of all hyperedges along the hyperpaths, resulting in the hyperpaths $u_1$ $\rightsquigarrow$ $u_4$ ($u_1, \vec{e}_2, u_4$) and $u_2$ $\rightsquigarrow$ $u_5$ ($u_2, \vec{e}_4, u_5$). These reversals reduces the indegree difference between $u_4$ and $u_1$ by 2, the same to $u_5$ and $u_2$. Notably, after these operations, no reversible hyperpaths remain, indicating that the hypergraph is an egalitarian $1$-orientation.
\end{example}

\begin{figure}
 %\vspace{0.3cm}
	\centering
	\hspace{-4mm}
	\subfigure[\scriptsize{initial hypergraph}]{
		\label{init}
		\includegraphics[width=0.338\linewidth]{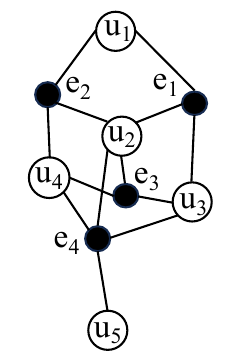}
	}
	\hspace{-6mm}
	\subfigure[\scriptsize{$1$-orientation}]{
		\label{ori}
		\includegraphics[width=0.338\linewidth]{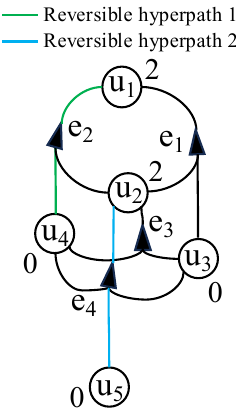}
	}
	%\hspace{-0.5mm}
	\subfigure[\scriptsize{egalitarian $1$-orientation}]{
		\label{egaori}
		\includegraphics[width=0.338\linewidth]{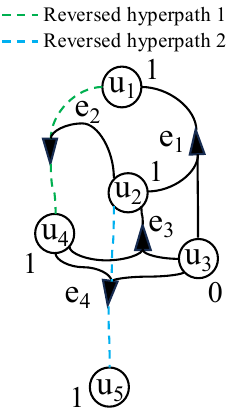}
	}
	\vskip -16pt

	\caption{Arbitrary and egalitarian $1$-orientation.}  
	\label{hyperpath}
\end{figure}

\subsection{\boldmath{$(k,\delta)$}-Dense Subhypergraph}
Building on the egalitarian orientation, we now define a density-based model characterizing cohesive substructures in hypergraphs.

%Building on the egalitarian orientation, we now define a density-based model that characterizes cohesive substructures in hypergraphs.

%We now build on the egalitarian orientation to define the $(k,\delta)$-dense subhypergraph model.

%We now define our density-based model for subhypergraph:

%\begin{definition}[{$(k,\delta)$}-Dense Subhypergraph~($D_{k, \delta}$)]
%\label{kdense}
%Given an undirected and unweighted hypergraph $\mathcal{H}$ = ($V, E$), two non-negative integers $k$ and $\delta$, let $\mathop{\mathcal{H}}\limits ^{\rightarrow}$ be an arbitrary egalitarian $\delta$-orientation of $\mathcal{H}$, and let $S$ = \{$u \in V|{\vec{d}}_u(\mathop{\mathcal{H}}\limits ^{\rightarrow}) \geq k$\}, the ($k, \delta$)-dense subhypergraph $D_{k, \delta}$ is the subhypergraph induced by the vertices set $S$ $\cup$ \{$v|v$ can reach a vertex in $S$\}.
%\end{definition}

\begin{definition}[{$(k,\delta)$}-Dense Subhypergraph~($D_{k,\delta}$)]
\label{kdense}
Given an undirected and unweighted hypergraph $\mathcal{H} = (V, E)$ and two non-negative integers $k$ and $\delta$, 
let $\mathop{\mathcal{H}}\limits ^{\rightarrow}$ be an arbitrary egalitarian $\delta$-orientation of $\mathcal{H}$. 
Define the vertex set 
$S = \{u \in V \mid {\vec{d}}_u(\mathop{\mathcal{H}}\limits ^{\rightarrow}) \ge k\}$. 
The $(k,\delta)$-dense subhypergraph $D_{k,\delta}$ is the subhypergraph induced by 
$S \cup \{v \mid v \text{ can reach a vertex in } S\}$.
\end{definition}

\begin{example}
Consider the egalitarian $1$-orientation in Figure~\ref{egaori}. Let $k = 1$. Then $S = \{u_1, u_2, u_4, u_5\}$. Since vertex $u_3$ can reach $u_2$ via the hyperpath $u_3 \rightsquigarrow u_2$ ($u_3, \vec{e}_3, u_2$), the $(1,1)$-dense subhypergraph is induced by $\{u_1, u_2, u_3, u_4, u_5\}$.
\end{example}

%\rev{The role of $\delta$ will be illustrated in Figure~\ref{exam} and further analyzed in Section~2.4.}

By Definition~\ref{kdense}, we can derive the following basic properties.

\begin{lemma}
\label{degree}
Given an egalitarian $\delta$-orientation $\mathop{\mathcal{H}}\limits ^{\rightarrow}$ and its corresponding $(k,\delta)$-dense subhypergraph $D_{k,\delta}$:
\begin{itemize}[leftmargin=12pt]
\item[1)] All hyperedges crossing from $D_{k,\delta}$ to $V \setminus D_{k,\delta}$ are oriented outward.
\item[2)] For any $u \in D_{k,\delta}$, we have $\vec{d}_u \ge k - 1$.
\item[3)] For any $u \notin D_{k,\delta}$, we have $\vec{d}_u \le k - 1$.
\end{itemize}
\end{lemma}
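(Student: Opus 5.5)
Write $R = S \cup \{v \mid v \text{ can reach a vertex in } S\}$ for the vertex set of $D_{k,\delta}$. The plan is to use only two facts: $R$ is closed under ``can reach'' in the backward direction, and the fixed orientation $\mathop{\mathcal{H}}\limits ^{\rightarrow}$ admits no reversible hyperpath. Part 1 follows from closure, Part 3 from the definition of $S$, and Part 2 from the egalitarian property.

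\textbf{Part 1.} Let $\vec{e}=(T,H)$ be a hyperedge with vertices both in $R$ and outside $R$. Suppose some $w\in T$ lies outside $R$ while some $u\in H$ lies in $R$. Then $w,\vec{e},u$ is a hyperpath. Since $u$ is in $S$ or reaches $S$, concatenating shows that $w$ reaches $S$, so $w\in R$, a contradiction. Hence no crossing hyperedge has a tail outside $R$ and a head inside $R$. In other words, every crossing hyperedge has all of its heads outside $R$ or all of its tails inside $R$; this is the precise sense in which crossing hyperedges point ``outward''. I would state it in exactly this form: no vertex outside $D_{k,\delta}$ is a source of a hyperedge that has a target inside $D_{k,\delta}$.

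\textbf{Part 3.} If $u\notin R$, then in particular $u\notin S$, so $\vec{d}_u<k$. Since indegrees are integers, $\vec{d}_u\le k-1$. This is immediate.

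\textbf{Part 2.} Take $u\in R$. If $u\in S$, then $\vec{d}_u\ge k\ge k-1$. Otherwise $u$ reaches some $t\in S$ by a hyperpath, with $\vec{d}_t\ge k$. Because the orientation is egalitarian, this hyperpath is not reversible, so $\vec{d}_t-\vec{d}_u\le 1$. Therefore $\vec{d}_u\ge \vec{d}_t-1\ge k-1$.

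The only delicate step is Part 1: "oriented outward" must be made precise for hyperedges with several heads. There, the point to check is that a single tail–head pair across the boundary already yields a hyperpath of length one. Parts 2 and 3 are direct consequences of the definitions.
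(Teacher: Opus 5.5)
Your proof is correct and takes the same route as the paper, which gives no separate argument beyond saying the three properties follow from Definition~\ref{kdense}; as in your Part 2, the egalitarian condition (no reversible hyperpath) is what yields $\vec{d}_u \ge k-1$. Your precise reading of Part 1, that no vertex outside $D_{k,\delta}$ is a source of a hyperedge with a target inside $D_{k,\delta}$, is the right way to formalize ``oriented outward'' when a hyperedge has several targets, or no sources at all when $|e|\le\delta$.
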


The following theorem establishes that $D_{k,\delta}$ is cohesive internally and well-separated externally. 

\begin{theorem}
\label{in-outsidedense}
Let $D_{k,\delta}$ be a $(k,\delta)$-dense subhypergraph. Then:
\begin{itemize}[leftmargin=10pt]
\item For any non-empty $X \subseteq D_{k,\delta}$, 
$\sum_{x \in X} \vec{d}_x > (k - 1)|X|$.
\item For any $Y \subseteq V \setminus D_{k,\delta}$, 
$\sum_{y \in Y} \vec{d}_y \le (k - 1)|Y|$.
\end{itemize}
\end{theorem}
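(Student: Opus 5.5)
The second bullet follows directly from Lemma~\ref{degree}, so I would dispose of it first. By part~3), every $y \in V\setminus D_{k,\delta}$ satisfies $\vec{d}_y \le k-1$. Summing over any $Y \subseteq V\setminus D_{k,\delta}$ gives $\sum_{y\in Y}\vec{d}_y \le (k-1)|Y|$, and the empty case is trivial.

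For the first bullet I would start the same way. By part~2) of Lemma~\ref{degree}, every $x \in D_{k,\delta}$ has $\vec{d}_x \ge k-1$, so $\sum_{x\in X}\vec{d}_x \ge (k-1)|X|$ for every $X \subseteq D_{k,\delta}$. Strictness then reduces to exhibiting some $x\in X$ with $\vec{d}_x \ge k$, that is, $X \cap S \neq \emptyset$, where $S$ is the set from Definition~\ref{kdense}. This holds in the central case $X = D_{k,\delta}$. Indeed, $D_{k,\delta}$ is non-empty only if $S \neq \emptyset$, because every vertex of $D_{k,\delta}\setminus S$ reaches some vertex of $S$. More generally, the same argument gives strictness for any $X$ that meets $S$, and in particular for any $X$ closed under following hyperpaths inside $D_{k,\delta}$, since such an $X$ contains an endpoint in $S$. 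To make the argument rigorous I would use the non-existence of reversible hyperpaths together with part~1) of Lemma~\ref{degree}. This establishes that every vertex of $D_{k,\delta}\setminus S$ has indegree exactly $k-1$ and reaches $S$ inside $D_{k,\delta}$.

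The main obstacle, and the step I would examine first, is extending strictness to \emph{every} non-empty $X$, as the statement is worded. A singleton $X=\{u\}$ with $u \in D_{k,\delta}\setminus S$ has $\sum_{x\in X}\vec{d}_x = \vec{d}_u = k-1 = (k-1)|X|$, which is equality rather than strict inequality. Such vertices genuinely occur. In the example following Definition~\ref{kdense} with $k=1$, the vertex $u_3$ lies in $D_{1,1}$ but not in $S$, so it has indegree $0 = k-1$. Before writing the proof I would therefore check against that example whether the intended quantity counts indegree differently (for instance, only over hyperedges inside $X$), or whether $X$ is implicitly required to meet $S$ or be reach-closed. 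Under the literal reading, the argument above proves the non-strict bound for all $X$ and the strict bound exactly when $X \cap S \neq \emptyset$. I do not see a way to close that gap, since the singleton case appears to be a genuine counterexample.
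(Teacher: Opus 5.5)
Your second bullet is exactly the paper's argument: sum the pointwise bound $\vec{d}_y \le k-1$ from Lemma~\ref{degree}(3) over $Y$.

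On the first bullet your objection is right. The gap lies in the statement and in the paper's proof, not in your reasoning. The paper justifies strictness only by saying it holds ``to satisfy the reachability condition.'' Reachability guarantees that $S$ is non-empty, not that $S$ meets the particular $X$. Every $u \in D_{k,\delta}\setminus S$ has $\vec{d}_u = k-1$ exactly. The absence of a reversible hyperpath from $u$ to the vertex $s\in S$ it reaches gives $\vec{d}_u \ge \vec{d}_s - 1 \ge k-1$, and $u\notin S$ gives $\vec{d}_u \le k-1$. Hence every non-empty $X \subseteq D_{k,\delta}\setminus S$ attains equality. Your witness $X=\{u_3\}$ from the example after Definition~\ref{kdense} is valid. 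A minimal one is a single hyperedge $\{a,b\}$ with $\delta=k=1$, oriented toward $b$: then $D_{1,1}=\{a,b\}$, and $X=\{a\}$ would require $0>0$. Reinterpreting the indegree does not rescue the claim, since counting only hyperedges inside $X$ can only lower the left-hand side.

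So do not try to close the gap. What you prove is the correct form of the first bullet: $\sum_{x\in X}\vec{d}_x \ge (k-1)|X|$ for every $X\subseteq D_{k,\delta}$, with strict inequality if and only if $X\cap S\neq\emptyset$. In particular it is strict for $X=D_{k,\delta}$ and for every non-empty reach-closed $X$.

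The defect propagates to the rest of the section:
\begin{itemize}
\item Lemma~\ref{pd1} fails for the same $X$.
\item The proof of Theorem~\ref{uniqueness} applies the strict bound to a difference of two candidate dense sets. That set need not meet $S$, and its indegrees come from different orientations, so uniqueness needs a separate argument.
\item Theorem~\ref{hierarchical} survives your corrected version under a common orientation. For $k^+>k$ and $D=D_{k^+,\delta}\setminus D_{k,\delta}$, the non-strict bounds $(k^+-1)|D| \le \sum_{x\in D}\vec{d}_x \le (k-1)|D|$ already conflict unless $D$ is empty.
\item Lemma~\ref{pd2} also survives, because a non-empty layer $D_{k,\delta}\setminus D_{k+1,\delta}$ always meets the set $S$ for parameter $k$. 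If $u$ is in the layer but not in $S$, the vertex $s\in S$ it reaches cannot lie in $D_{k+1,\delta}$: concatenating hyperpaths would then put $u$ in $D_{k+1,\delta}$. So $s$ itself lies in the layer.
\end{itemize}
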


\begin{proof}
The first claim follows directly from Lemma~\ref{degree} and Definition~\ref{kdense}: 
all hyperedges contributing to the indegrees of vertices in $X$ lie within $D_{k,\delta}$, 
and the inequality is strict to satisfy the reachability condition. 
The second claim follows from the upper bound on indegrees for vertices outside $D_{k,\delta}$.
\end{proof}

\stitle{\rev{Remark on $\delta$.}}
\rev{The parameter $\delta$ controls the resolution of the decomposition by bounding the maximum contribution of each hyperedge to vertex density.
Larger values of $\delta$ emphasize fine-grained, localized dense structures, while smaller values lead to coarser, more aggregated regions.
This tunable parameter naturally supports multi-resolution analysis of hypergraphs (Exp-9).}

\subsection{Properties of the Decomposition}
We now analyze the theoretical properties of the $(k,\delta)$-dense subhypergraph model and its induced decomposition.

\stitle{Uniqueness of decomposition.}
The definition of $D_{k,\delta}$ relies solely on the existence of an egalitarian $\delta$-orientation.
Therefore, regardless of which specific egalitarian orientation is adopted, the resulting $D_{k,\delta}$ remains exactly the same.

\begin{theorem}
\label{uniqueness}
Given a hypergraph $\mathcal{H}$ and parameters $k$ and $\delta$, the $(k,\delta)$-dense subhypergraph $D_{k,\delta}$ is unique.
\end{theorem}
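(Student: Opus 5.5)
We need to show that $D_{k,\delta}$ is the same no matter which egalitarian $\delta$-orientation is used. Take two egalitarian $\delta$-orientations $\vec{\mathcal{H}}_1,\vec{\mathcal{H}}_2$ with dense vertex sets $D_1,D_2$. The plan is to characterize $D_{k,\delta}$ intrinsically, independently of the orientation, and then use Theorem~\ref{in-outsidedense} and Lemma~\ref{degree} to show $D_1\subseteq D_2$; the reverse inclusion follows by symmetry.

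\textbf{Step 1: $D_1$ is closed in every $\delta$-orientation.} By Lemma~\ref{degree}(1), no hyperedge of $\vec{\mathcal{H}}_1$ has a target in $D_1$ and a source outside $D_1$. Every hyperedge giving indegree to a vertex of $D_1$ therefore lies in $E[D_1]$. So the total indegree of $D_1$ in $\vec{\mathcal{H}}_1$ equals
\[
\Phi(D_1)=\sum_{e\in E[D_1]}\min\{\delta,|e|\}.
\]
This is the maximum possible in any $\delta$-orientation, because only hyperedges meeting $D_1$ can contribute, and a hyperedge not inside $D_1$ could point some targets inward. The exact bound I will use is simpler: in any $\delta$-orientation, $\sum_{x\in D_1}\vec d_x\ge\Phi(D_1)$, since each $e\in E[D_1]$ sends all $\min\{\delta,|e|\}$ targets into $D_1$.

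\textbf{Step 2: compare the two orientations on $Y=D_1\setminus D_2$.} Suppose $Y\neq\emptyset$ and write $X=D_1\cap D_2$. Apply Lemma~\ref{degree}(1) to $\vec{\mathcal{H}}_2$: every hyperedge with a target in $D_2$ lies inside $D_2$. Every hyperedge inside $D_1$ that has a target in $Y$ is not contained in $D_2$, so all its targets lie outside $D_2$. The aim is the following count:
\[
\sum_{y\in Y}\vec d^{(2)}_y \;\ge\; \Phi(D_1)-\Phi(X)\;=\;\sum_{y\in Y}\vec d^{(1)}_y+\Big(\sum_{x\in X}\vec d^{(1)}_x-\Phi(X)\Big)\;\ge\;\sum_{y\in Y}\vec d^{(1)}_y .
\]
The first inequality holds because hyperedges of $E[D_1]\setminus E[X]$ that are not inside $D_2$ point entirely into $Y$ under $\vec{\mathcal{H}}_2$. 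The hyperedges lying inside $D_2$ but not in $X$ need care. The second inequality uses Step~1 applied to $X$.

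\textbf{Step 3: derive the contradiction.} Theorem~\ref{in-outsidedense} gives two bounds. For $\vec{\mathcal{H}}_1$, the set $Y\subseteq D_1$ is non-empty, so $\sum_{Y}\vec d^{(1)}>(k-1)|Y|$. For $\vec{\mathcal{H}}_2$, $Y\subseteq V\setminus D_2$, so $\sum_Y\vec d^{(2)}\le(k-1)|Y|$. Combined with Step~2, these give
\[
(k-1)|Y|<\sum_Y\vec d^{(1)}\le\sum_Y\vec d^{(2)}\le(k-1)|Y|,
\]
which is a contradiction. Hence $D_1\subseteq D_2$, and by symmetry $D_1=D_2$.

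\textbf{Main obstacle.} The delicate part is the counting in Step~2. A hyperedge of $E[D_1]$ may have vertices in both $X$ and $Y$. It can direct targets into $X$ under $\vec{\mathcal{H}}_1$ but into $Y$ under $\vec{\mathcal{H}}_2$, or the other way round. Both cutting properties must be used so that the total mass on $Y$ can only increase from $\vec{\mathcal{H}}_1$ to $\vec{\mathcal{H}}_2$.

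If the direct count fails, the fallback is an exchange argument. Consider the symmetric difference of the two orientations restricted to hyperedges touching $Y$. Decompose it into hyperpaths and show that any net outflow from $Y$ in $\vec{\mathcal{H}}_2$ yields either a hyperpath leaving $D_2$ toward $D_2$, which contradicts Lemma~\ref{degree}(1), or a reversible hyperpath, which contradicts egalitarianism.
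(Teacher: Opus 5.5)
Your skeleton matches the paper's. Its one-line proof asserts that Theorem~\ref{in-outsidedense} gives contradictory bounds on $D_{a,\delta}\setminus D_{b,\delta}$, silently using two different orientations. You rightly see that the missing piece is a cross-orientation comparison $\sum_{y\in Y}\vec d^{(1)}_y\le\sum_{y\in Y}\vec d^{(2)}_y$, but the comparison you give only works for $\delta=1$.

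Lemma~\ref{degree}(1) only says that a hyperedge with a head in $D$ has all its \emph{tails} in $D$. For $\delta\ge 2$ its other heads may lie outside $D$, so it feeds $D$ without belonging to $E[D]$. For instance, take $\delta=2$ and hyperedges $\{a,b\},\{a,c\},\{a,d\}$: every vertex is a head, so no nontrivial hyperpath exists and the orientation is egalitarian. Then $D_{2,2}=\{a\}$ has indegree $3$ while $\Phi(\{a\})=0$, so the equality in Step~1 is false. Note also that $D_1$ carries the \emph{minimum}, not the maximum, possible load.

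For the same reason the first inequality of Step~2 is unjustified. Under $\vec{\mathcal H}_2$, a hyperedge of $E[D_1]\setminus E[X]$ may have its tails in $X$ and heads in both $X$ and $Y$, giving $Y$ only $|e\cap Y|<\min\{\delta,|e|\}$. The hyperedges inside $D_2$ that you single out are harmless, since they put nothing on $Y$.

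The repair is to replace $\Phi$ by the forced load $\Psi(A)=\sum_{e\in E}\max\{0,\min\{\delta,|e|\}-|e\setminus A|\}$. Three facts make it work:
\begin{itemize}
\item every $\delta$-orientation satisfies $\sum_{x\in A}\vec d_x\ge\Psi(A)$;
\item Lemma~\ref{degree}(1) gives $\sum_{x\in D_i}\vec d^{(i)}_x=\Psi(D_i)$ exactly, because each hyperedge then places exactly its forced number of heads in $D_i$;
\item $\Psi$ is supermodular, because each summand is a convex function of $|e\cap A|$.
\end{itemize}
With $X=D_1\cap D_2$ this yields
\[
\sum_{y\in Y}\vec d^{(1)}_y\le\Psi(D_1)-\Psi(X)\le\Psi(D_1\cup D_2)-\Psi(D_2)\le\sum_{y\in Y}\vec d^{(2)}_y .
\]
The last step must pass through $D_1\cup D_2$ rather than stay inside $D_1$.

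Step~3 has a second gap, which the paper's proof shares. The strict bound in Theorem~\ref{in-outsidedense} is false for arbitrary nonempty subsets. Every $v\in D_{k,\delta}\setminus S$ has $\vec d_v=k-1$ exactly (e.g.\ $u_3$ in the paper's $(1,1)$ example), so $\{v\}$ violates it, and nothing a priori stops $Y$ from consisting of such vertices.

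Instead, use only $\vec d^{(1)}_y\ge k-1\ge\vec d^{(2)}_y$ from Lemma~\ref{degree}(2),(3); these force equality throughout the chain above. Equality gives two facts, with $S_1=\{u:\vec d^{(1)}_u\ge k\}$:
\begin{itemize}
\item $\vec d^{(1)}_y=k-1$ for every $y\in Y$, so $S_1\cap Y=\emptyset$ and hence $S_1\subseteq X$;
\item $\sum_{x\in X}\vec d^{(1)}_x=\Psi(X)$, so in $\vec{\mathcal H}_1$ every hyperedge with a head in $X$ has all its tails in $X$, and no hyperpath can enter $X$ from outside.
\end{itemize}
Yet each $y\in Y\subseteq D_1\setminus S_1$ must reach $S_1\subseteq X$, a contradiction. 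Hence $D_1\subseteq D_2$, and by symmetry $D_1=D_2$. Your fallback exchange argument is too vague to serve as a substitute for this.
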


\begin{proof}
Suppose two different decompositions $D_{a,\delta}$ and $D_{b,\delta}$ exist, with non-empty difference $D = D_{a,\delta} \setminus D_{b,\delta}$. According to Theorem~\ref{in-outsidedense}, we obtain contradictory bounds on the aggregate indegree of vertices in $D$. Thus, $D_{k,\delta}$ must be unique.
\end{proof}

\stitle{Hierarchy of $(k,\delta)$-dense subhypergraphs.}
The $(k,\delta)$-dense subhypergraphs form a natural nested hierarchy, with higher $k$ values corresponding to smaller and denser regions.

\begin{theorem}
\label{hierarchical}
For any $k^+ \ge k$, we have $D_{k^+,\delta} \subseteq D_{k,\delta}$.
\end{theorem}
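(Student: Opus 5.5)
The key observation is that Definition~\ref{kdense} lets us compute both $D_{k,\delta}$ and $D_{k^+,\delta}$ from the \emph{same} egalitarian $\delta$-orientation $\mathop{\mathcal{H}}\limits ^{\rightarrow}$. By Theorem~\ref{uniqueness}, the resulting subhypergraphs do not depend on which egalitarian orientation we pick. So we fix one egalitarian $\delta$-orientation once and for all. It exists because each reversal of a reversible hyperpath strictly decreases $\sum_u \vec{d}_u^{\,2}$, a nonnegative integer, so repeated reversal terminates. We then compare the two vertex sets directly under this common orientation.

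Write $S_k = \{u \in V \mid \vec{d}_u \ge k\}$ and $S_{k^+} = \{u \in V \mid \vec{d}_u \ge k^+\}$. Since $k^+ \ge k$, any vertex with $\vec{d}_u \ge k^+$ also has $\vec{d}_u \ge k$, so $S_{k^+} \subseteq S_k$.

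Now take any vertex $v$ in the vertex set of $D_{k^+,\delta}$. There are two cases. If $v \in S_{k^+}$, then $v \in S_k$. Otherwise $v$ reaches some $w \in S_{k^+}$ by a hyperpath in $\mathop{\mathcal{H}}\limits ^{\rightarrow}$. Reachability is a property of the orientation alone and does not involve $k$, and $w \in S_k$, so $v$ reaches a vertex of $S_k$. In both cases $v$ lies in $S_k \cup \{v \mid v \text{ can reach a vertex in } S_k\}$. Hence the vertex set of $D_{k^+,\delta}$ is contained in that of $D_{k,\delta}$.

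Because both are \emph{induced} subhypergraphs, containment of vertex sets gives containment of hyperedge sets: $E[S'] \subseteq E[S]$ whenever $S' \subseteq S$. This yields $D_{k^+,\delta} \subseteq D_{k,\delta}$.

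The only delicate point is the appeal to uniqueness, which justifies evaluating both dense subhypergraphs on one orientation. I will cite Theorem~\ref{uniqueness} for this rather than reprove it. If one preferred to avoid it, the fallback would be a direct argument: for two different egalitarian orientations, suppose $X = D_{k^+,\delta} \setminus D_{k,\delta} \neq \emptyset$. Then Theorem~\ref{in-outsidedense} gives $\sum_{x\in X}\vec{d}_x > (k^+-1)|X| \ge (k-1)|X|$ in one orientation and $\le (k-1)|X|$ in the other. The difficulty is that these sums are taken in different orientations, so the bounds do not contradict each other until the two indegree sums over $X$ are shown to be comparable. For that reason I would rely on the uniqueness route.
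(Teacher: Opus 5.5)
Your proof is correct, but it takes a different route from the paper. The paper argues by contradiction. It sets $X = D_{k^+,\delta}\setminus D_{k,\delta}\neq\emptyset$ and plays the aggregate lower bound $\sum_{x\in X}\vec{d}_x > (k^+-1)|X|$ from the first item of Theorem~\ref{in-outsidedense} against the upper bound $\sum_{x\in X}\vec{d}_x \le (k-1)|X|$ from the second item. You instead read both sets off one fixed egalitarian orientation and prove the inclusion directly from two monotonicity facts: $S_{k^+}\subseteq S_k$, and reachability does not depend on $k$. You then pass to induced subhypergraphs.

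Your route is more elementary, and it makes explicit something the paper leaves implicit. The paper's contradiction only makes sense if both sums are taken in the same orientation, so it too rests on Theorem~\ref{uniqueness}. That is exactly the point you raise about your fallback.

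Your route also avoids the first item of Theorem~\ref{in-outsidedense}, whose strict inequality fails for general subsets. Take $\delta=1$, hyperedges $\{a,b\},\{b,c\}$ oriented $a\to b$ and $b\to c$, and $k=1$. This orientation is egalitarian, and $a\in D_{1,1}$ because $a$ reaches $b$. Yet $\vec{d}_a = 0 = (k-1)\,|\{a\}|$, so the strict inequality fails. Under a common orientation the paper's argument is easily repaired with the per-vertex bounds of Lemma~\ref{degree}: $\vec{d}_x \ge k^+-1 \ge k$ on $D_{k^+,\delta}$, versus $\vec{d}_x \le k-1$ off $D_{k,\delta}$ when $k^+>k$. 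Your direct inclusion does not even need that.

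Your $\sum_u \vec{d}_u^{\,2}$ potential argument for the existence of an egalitarian orientation is a useful addition the paper omits. It relies on a reversal changing only $\vec{d}_s$ and $\vec{d}_t$, which is safe if you reverse a shortest hyperpath.

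Be aware that your proof is only as strong as Theorem~\ref{uniqueness}. The paper's proof of that theorem compares aggregate indegrees across two different egalitarian orientations, which is the comparison you correctly identified as not directly yielding a contradiction.
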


\begin{proof}
Suppose $D = D_{k^+,\delta} \setminus D_{k,\delta}$ is non-empty. 
By Theorem~\ref{in-outsidedense}, the aggregate indegree of vertices in $D$ cannot simultaneously satisfy both the lower and upper bounds implied by $D_{k^+,\delta}$ and $D_{k,\delta}$, leading to a contradiction.
\end{proof}

\stitle{Density metrics and interpretability.}
We next analyze the $(k,\delta)$-dense subhypergraph using indegree-based density.
%We next analyze the structural meaning of $(k,\delta)$-dense subhypergraphs through an indegree-based notion of density.

\begin{definition}[Indegree-Based Density]
\label{density1}
Given a directed hypergraph $\mathop{\mathcal{H}}\limits ^{\rightarrow} = (V, \mathop{E}\limits ^{\rightarrow})$ and a vertex set $X$, 
its indegree-based density is defined as $\rho_d(X) = \sum_{x \in X} \vec{d}_x / |X|$.
\end{definition}

\begin{lemma}
\label{pd1}
For any $X \subseteq D_{k,\delta}$ and $Y \subseteq V \setminus D_{k,\delta}$, we have $\rho_d(X) > k-1 \geq \rho_d(Y)$.
\end{lemma}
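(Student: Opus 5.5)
This is essentially a restatement of Theorem~\ref{in-outsidedense} in terms of the indegree-based density $\rho_d$ from Definition~\ref{density1}. The plan is to divide the two inequalities of that theorem by the cardinality of the vertex set involved. Throughout, $X$ and $Y$ are taken to be non-empty, since $\rho_d$ is only defined for non-empty sets. All indegrees are measured in the fixed egalitarian $\delta$-orientation $\mathop{\mathcal{H}}\limits ^{\rightarrow}$ that defines $D_{k,\delta}$. By Theorem~\ref{uniqueness}, the set $D_{k,\delta}$ does not depend on which egalitarian orientation is chosen, so the statement is well posed.

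\emph{Left inequality.} Let $X \subseteq D_{k,\delta}$ be non-empty. The first item of Theorem~\ref{in-outsidedense} gives $\sum_{x\in X}\vec{d}_x > (k-1)|X|$. Dividing by $|X|>0$ yields $\rho_d(X) > k-1$.

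\emph{Right inequality.} Let $Y \subseteq V\setminus D_{k,\delta}$ be non-empty. There are two routes:
\begin{itemize}
\item Apply the second item of Theorem~\ref{in-outsidedense}, $\sum_{y\in Y}\vec{d}_y \le (k-1)|Y|$, and divide by $|Y|$.
\item Use Lemma~\ref{degree}(3) directly: each $y \notin D_{k,\delta}$ satisfies $\vec{d}_y \le k-1$, and averaging over $Y$ gives $\rho_d(Y)\le k-1$.
\end{itemize}
The second route is more transparent, so I will use it. Chaining the two inequalities gives $\rho_d(X) > k-1 \ge \rho_d(Y)$.

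\emph{Main obstacle.} The only delicate point is the strict inequality for $X$, since Lemma~\ref{degree}(2) alone gives only $\vec{d}_u \ge k-1$, which yields $\rho_d(X) \ge k-1$. I will rely on Theorem~\ref{in-outsidedense} as stated, since it is available.

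If a self-contained justification were wanted, I would argue as follows.
\begin{enumerate}
\item Every $u \in D_{k,\delta}$ either has $\vec{d}_u \ge k$ or reaches some $s\in S$, that is, a vertex with $\vec{d}_s \ge k$.
\item By egalitarianism, a vertex $u$ with $\vec{d}_u = k-1$ cannot reach a vertex of indegree $\ge k+1$, since that hyperpath would be reversible.
\item So for $X$ consisting entirely of vertices of indegree $k-1$, one must argue that hyperedges leaving $X$ toward $S$ force some vertex of $X$ to have indegree $\ge k$. Here one would use the orientation structure together with Lemma~\ref{degree}(1).
\end{enumerate}
This subcase is where the argument is genuinely nontrivial, and I would defer to Theorem~\ref{in-outsidedense} for it.
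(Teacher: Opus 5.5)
Your reduction matches the paper's proof, which consists only of the remark that the lemma follows from Theorem~\ref{in-outsidedense}, and your handling of the right-hand inequality via Lemma~\ref{degree}(3) is correct. The trouble is exactly the point you flagged. The strict inequality $\rho_d(X) > k-1$ is \emph{false} for a general nonempty $X \subseteq D_{k,\delta}$. So it cannot come from your step~3, nor from deferring to the first item of Theorem~\ref{in-outsidedense}, which fails in the same way.

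Take any $v \in D_{k,\delta}\setminus S$. Since $v \notin S$, $\vec{d}_v \le k-1$. Since $v$ reaches some $s$ with $\vec{d}_s \ge k$ and that hyperpath is not reversible, $\vec{d}_v \ge \vec{d}_s - 1 \ge k-1$. Hence every vertex of $D_{k,\delta}\setminus S$ has indegree exactly $k-1$, and every nonempty $X \subseteq D_{k,\delta}\setminus S$ has $\rho_d(X) = k-1$. The paper's own example shows this: for $k=\delta=1$, $u_3 \in D_{1,1}$ but $u_3 \notin S$, so $\vec{d}_{u_3}=0$ and $\rho_d(\{u_3\}) = 0 = k-1$.

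Step~3 cannot be repaired. The hyperedge through which a hyperpath leaves $X$ toward $S$ has its $X$-vertex in the tail set $T$. It therefore raises an indegree outside $X$ and adds nothing to $\sum_{x\in X}\vec{d}_x$. A strict bound for arbitrary $X$ would have to count such exiting hyperedges, which $\rho_d$ does not see.

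Your ingredients do prove a corrected statement:
\begin{itemize}
\item $\rho_d(X) \ge k-1$ for every nonempty $X \subseteq D_{k,\delta}$, by Lemma~\ref{degree}(2);
\item the inequality is strict whenever $X \cap S \neq \emptyset$, since then $\sum_{x\in X}\vec{d}_x \ge k + (k-1)(|X|-1) = (k-1)|X|+1$;
\item $\rho_d(Y) \le k-1$ for every nonempty $Y \subseteq V\setminus D_{k,\delta}$.
\end{itemize}

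This version covers the cases the paper actually uses. The first is $X = D_{k,\delta}$, which contains $S \neq \emptyset$ whenever it is nonempty. The second is a nonempty layer $D_{k,\delta}\setminus D_{k+1,\delta}$ in Lemma~\ref{pd2}, which always meets $S$. To see this, compute both sets from the same egalitarian orientation. A layer vertex not in $S$ has indegree $k-1$ and reaches some $s \in S$. If $s$ lay in $D_{k+1,\delta}$, that vertex would reach a vertex of indegree at least $k+1$, and this hyperpath would be reversible, a contradiction. So $s$ itself lies in the layer.
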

\begin{proof}
Directly follows from Theorem~\ref{in-outsidedense}.
\end{proof}

\begin{definition}[Layer Density]
\label{layerdensity}
For any non-negative integer $k$, the density of the $(k,\delta)$-layer is defined as  $\rho_d(L_{k, \delta})=\rho_d(D_{k+1, \delta}, D_{k, \delta}) \\= \sum_{x \in D_{k,\delta} \setminus D_{k+1,\delta}} \vec{d}_x / |D_{k,\delta} \setminus D_{k+1,\delta}|$.
\end{definition}

\begin{lemma}
\label{pd2}
For any non-negative integer $k$, we have $(k - 1) < \rho_d(D_{k+1,\delta}, D_{k,\delta}) \leq k$. Thus,  $\lceil \rho_d(L_{k,\delta}) \rceil =\lceil \rho_d(D_{k+1,\delta}, D_{k,\delta}) \rceil =k$, indicating that layer $L_{k,\delta}$ corresponds to density level $k$.
\end{lemma}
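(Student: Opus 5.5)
We need to show $(k-1) < \rho_d(L_{k,\delta}) \le k$, where $L_{k,\delta} = D_{k,\delta}\setminus D_{k+1,\delta}$. The plan is to apply the bounds of Theorem~\ref{in-outsidedense} twice: once to $D_{k,\delta}$ with parameter $k$, and once to $D_{k+1,\delta}$ with parameter $k+1$. Throughout we fix one egalitarian $\delta$-orientation $\mathop{\mathcal{H}}\limits ^{\rightarrow}$. By Theorem~\ref{uniqueness} this single orientation realizes both $D_{k,\delta}$ and $D_{k+1,\delta}$, so the indegrees $\vec{d}_x$ are the same in both applications. By Theorem~\ref{hierarchical}, $D_{k+1,\delta}\subseteq D_{k,\delta}$, so $L_{k,\delta}$ is a genuine set difference. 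We assume $L_{k,\delta}$ is non-empty; otherwise the layer density is undefined and the statement is vacuous.

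\textbf{Lower bound.} Take $X = L_{k,\delta}$. It is a non-empty subset of $D_{k,\delta}$, so the first item of Theorem~\ref{in-outsidedense} (equivalently Lemma~\ref{pd1}) gives $\sum_{x\in X}\vec{d}_x > (k-1)|X|$. Dividing by $|X|$ yields $\rho_d(L_{k,\delta}) > k-1$.

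\textbf{Upper bound.} Take $Y = L_{k,\delta}$ again. Since $Y\subseteq V\setminus D_{k+1,\delta}$, the second item of Theorem~\ref{in-outsidedense}, applied with parameter $k+1$, gives $\sum_{y\in Y}\vec{d}_y \le ((k+1)-1)|Y| = k|Y|$. Hence $\rho_d(L_{k,\delta})\le k$. One can also get this vertex by vertex: item 3) of Lemma~\ref{degree}, applied to $D_{k+1,\delta}$, gives $\vec{d}_y \le k$ for every $y\notin D_{k+1,\delta}$.

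\textbf{Ceiling.} Combining the two bounds gives $k-1<\rho_d(L_{k,\delta})\le k$. The ceiling of any real number in the half-open interval $(k-1,k]$ is exactly $k$, so $\lceil \rho_d(L_{k,\delta})\rceil = k$.

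The main obstacle is making sure the two invocations of Theorem~\ref{in-outsidedense} refer to the same orientation, so that the indegrees can be compared across the two levels. This is exactly what Theorem~\ref{uniqueness} supplies, together with the fact that Definition~\ref{kdense} builds $D_{k,\delta}$ from an arbitrary egalitarian orientation, which can be the fixed one.
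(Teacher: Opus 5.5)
Your argument is essentially the paper's own proof. The strict lower bound comes from the first item of Theorem~\ref{in-outsidedense} applied to $X=L_{k,\delta}$, and the upper bound from $\vec{d}_y\le k$ for every $y\notin D_{k+1,\delta}$. The upper bound and the ceiling step are fine. The lower bound, however, inherits a genuine gap from the paper: the first item of Theorem~\ref{in-outsidedense}, and with it Lemma~\ref{pd1}, is false for arbitrary non-empty $X\subseteq D_{k,\delta}$. Write $S_k=\{u:\vec{d}_u\ge k\}$. Any $v\in D_{k,\delta}\setminus S_k$ has $\vec{d}_v\le k-1$ and reaches some $t\in S_k$. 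Since no hyperpath is reversible, $\vec{d}_v\ge\vec{d}_t-1\ge k-1$, so $\vec{d}_v=k-1$ exactly. Then $X=\{v\}$ gives equality, not strict inequality. The paper's own example after Definition~\ref{kdense} exhibits this: $k=1$, $X=\{u_3\}$, $\vec{d}_{u_3}=0$. So the fact that $L_{k,\delta}\subseteq D_{k,\delta}$ cannot by itself give $\rho_d(L_{k,\delta})>k-1$; vertex by vertex you only get $\ge k-1$.

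The missing step is to show that a non-empty layer contains a vertex of indegree exactly $k$. By items 2) and 3) of Lemma~\ref{degree} (both correct, with 3) applied at level $k+1$), every $x\in L_{k,\delta}$ has $\vec{d}_x\in\{k-1,k\}$. If every vertex of the layer has indegree $k$, there is nothing to prove. Otherwise pick $x\in L_{k,\delta}$ with $\vec{d}_x=k-1$. Then $x\notin S_k$, so $x$ reaches some $t\in S_k$. If $t\in D_{k+1,\delta}$, then either $t\in S_{k+1}$ or $t$ reaches $S_{k+1}$, and concatenating hyperpaths puts $x$ in $D_{k+1,\delta}$, a contradiction. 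Hence $t\in L_{k,\delta}$ with $k\le\vec{d}_t\le k$, i.e.\ $\vec{d}_t=k$. Therefore $\sum_{x\in L_{k,\delta}}\vec{d}_x\ge (k-1)|L_{k,\delta}|+1>(k-1)|L_{k,\delta}|$, which is the strict bound you need, and the rest of your argument then goes through.
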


\begin{proof}
According to Theorem~\ref{in-outsidedense}, for any vertex $x \in D_{k,\delta} \setminus D_{k+1,\delta}$ we have $\vec{d}_x \ge k{-}1$. 
Furthermore, the total indegree of any non-empty subset of $D_{k,\delta}$ strictly exceeds $(k{-}1)|X|$, implying that $\rho_d(D_{k+1,\delta}, D_{k,\delta}) > k{-}1$. 
On the other hand, all vertices outside $D_{k+1,\delta}$ have indegree less than $k{+}1$, so $\vec{d}_x \le k$ for all $x \in D_{k,\delta} \setminus D_{k+1,\delta}$. 
Hence, $\rho_d(D_{k+1,\delta}, D_{k,\delta}) \le k$. 
Therefore, the layer density lies in $(k{-}1, k]$, and it follows that $\lceil \rho_d(L_{k,\delta}) \rceil = k$.
\end{proof}

\begin{definition}[Integral Dense Number (IDN)]
\label{idn}
For $u \in D_{k,\delta} \setminus D_{k+1,\delta}$, 
its \emph{Integral Dense Number (IDN)} is defined as $\bar{r}^\delta_u = k$.
\end{definition}

% \textbf{Our proposed method can decompose this hypergraph according to the integer value of density.}

\textbf{This property indicates that our method can decompose a hypergraph according to integer-valued density levels.}

\medskip
\noindent
In summary, these properties yield several key implications:

\begin{itemize}[leftmargin=12pt]
\item[(i)] Owing to the uniqueness property, a valid $(k,\delta)$-dense subhypergraph can be obtained from any egalitarian $\delta$-orientation.%, enabling efficient and deterministic computation.
\item[(ii)] The integer density value of the decomposition corresponds directly to the parameter $k$, providing clear interpretability.
\item[(iii)] The overall $(k,\delta)$-density decomposition can be efficiently computed using a divide-and-conquer framework.
\end{itemize}

%(i) Due to the uniqueness property of the $(k,\delta)$-dense subhypergraph, one can obtain a valid $(k,\delta)$-dense subhypergraph by computing any egalitarian orientation, which enables efficient mining of such subhypergraphs;

%(ii) Based on Lemma \ref{pd2}, the integer value of the density at each level of the $(k,\delta)$-dense subhypergraph corresponds to the parameter $k$;

%(iii) Owing to the inheritance property of the $(k,\delta)$-dense subhypergraph, a divide-and-conquer framework can be employed for hierarchical decomposition.

\subsection{\rev{Density and Conductance Guarantee}}
\rev{Since the hypergraph to be decomposed is originally undirected, we define the degree-based density for undirected hypergraphs.}

\begin{definition}[\rev{degree-Density}]
\label{adensity}
\rev{Given a hypergraph $\mathcal{H}$ and a subhypergraph $X$, the density of $X$ is defined as $\rho(X)=\frac{\sum_{x\in X} d_x(X)}{|X|}$.}
\end{definition}

\rev{The subhypergraph $X \subseteq V$ that maximizes the density $\rho(X)$ is recognized as the \emph{densest subhypergraph} of $\mathcal{H}$.}

\begin{lemma}
\label{rdt}
\rev{For any $D_{k,\delta}\neq \emptyset$, we have $\rho(D_{k,\delta}) \geq \rho_d(D_{k,\delta}) > k-1$.}
\end{lemma}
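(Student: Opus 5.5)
The statement has two parts, and I will prove them in order.

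\textbf{The strict bound.} The inequality $\rho_d(D_{k,\delta}) > k-1$ is immediate. Take $X = D_{k,\delta}$ itself, which is non-empty by hypothesis, in the first item of Theorem~\ref{in-outsidedense}. This gives $\sum_{x\in D_{k,\delta}} \vec{d}_x > (k-1)|D_{k,\delta}|$. Dividing by $|D_{k,\delta}|$ is exactly the bound (equivalently, apply Lemma~\ref{pd1} with $X = D_{k,\delta}$).

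\textbf{The comparison $\rho(D_{k,\delta}) \ge \rho_d(D_{k,\delta})$.} Write $D = D_{k,\delta}$. It suffices to show $\sum_{x\in D}\vec{d}_x \le \sum_{x\in D} d_x(D)$, since both densities share the denominator $|D|$. Here $d_x(D)$ is the number of hyperedges of $E[D]$ containing $x$. I will use a double counting over hyperedges.

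\begin{itemize}
\item \emph{Left-hand side.} Each unit of indegree of $x\in D$ comes from a directed hyperedge $\vec e=(T,H)$ with $x\in H$. So $\sum_{x\in D}\vec{d}_x = \sum_{e\in E} |H_e\cap D|$.
\item \emph{Hyperedges meeting $D$ outside $E[D]$.} Such a hyperedge either has all its vertices outside $D$ and contributes nothing, or it crosses the boundary. By Lemma~\ref{degree}(1), every crossing hyperedge is oriented outward, so its heads lie in $V\setminus D$ and $H_e\cap D=\emptyset$.
\item \emph{Reduction.} Hence only hyperedges in $E[D]$ contribute, and $\sum_{x\in D}\vec{d}_x=\sum_{e\in E[D]}|H_e|$.
\item \emph{Right-hand side.} Also $\sum_{x\in D} d_x(D)=\sum_{e\in E[D]}|e|$.
\end{itemize}

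Since $|H_e|=\min\{\delta,|e|\}\le |e|$ by Definition~\ref{deltapr}, the comparison follows term by term.

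\textbf{The delicate step.} The only point needing care is the reduction step: a crossing hyperedge must place \emph{all} of its heads outside $D$. I read ``oriented outward'' in Lemma~\ref{degree}(1) as saying exactly this. It can also be justified directly from Definition~\ref{kdense}. Suppose a crossing hyperedge had a head $h\in D$ and a tail $t\notin D$. Then $t,\vec e,h$ is a hyperpath from $t$ to $h$. Since $h$ is in $S$ or reaches $S$, so does $t$, which puts $t$ in $D$, a contradiction. The hyperedge is crossing, so some vertex lies outside $D$. That vertex cannot be a tail, so it is a head, and then every tail lies outside $D$ by the same contradiction. Thus all tails lie in $D$. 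If some head were also in $D$, it would be a head in $D$ with all tails in $D$, which is not excluded by this argument alone. The reachability argument therefore only shows that no head in $D$ coexists with a tail outside $D$.

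I will therefore rely on Lemma~\ref{degree}(1) as stated, interpreting ``oriented outward'' as ``all heads outside $D_{k,\delta}$''. Under that reading, the crossing hyperedges contribute nothing to the left-hand side and the proof goes through.
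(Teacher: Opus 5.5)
Your route is the same as the paper's proof. The paper also discards crossing hyperedges on the grounds that they are ``oriented outward and do not increase indegree inside'', and compares internal hyperedges termwise. Your suspicion about the reduction step is well founded, and reading Lemma~\ref{degree}(1) as ``all heads lie outside $D_{k,\delta}$'' does not close the gap. For $\delta\ge 2$ that reading is false, and the inequality $\rho(D_{k,\delta})\ge\rho_d(D_{k,\delta})$ fails with it. Two mechanisms break it.

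\textbf{Hyperedges with $|e|\le\delta$.} Here $H=e$ and $T=\emptyset$, so $e$ lies on no hyperpath and can straddle the boundary with heads on both sides. Take $\delta=2$, $V=\{a,b,c\}$, $E=\{\{a,b\},\{a,c\}\}$, $k=2$. The unique $2$-orientation has no tails, hence no hyperpaths, hence is egalitarian. Then $\vec{d}_a=2$ and $\vec{d}_b=\vec{d}_c=1$, so $D_{2,2}=S=\{a\}$ and $E[\{a\}]=\emptyset$. Thus $\rho(D_{2,2})=0<2=\rho_d(D_{2,2})$.

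\textbf{Hyperedges with $|e|>\delta\ge 2$.} The residual case you could not exclude (all tails in $D$, heads on both sides) is realizable. With $\delta=2$, for each ordered pair $i\neq j$ in $\{1,2,3\}$ take a hyperedge $\{a_i,a_j,c_{ij}\}$ with tail $a_i$ and heads $a_j,c_{ij}$, the six $c_{ij}$ being distinct. Every $a_j$ has indegree $2$ and every $c_{ij}$ has indegree $1$. Every nontrivial hyperpath starts at some $a_i$ and ends at a vertex of indegree at most $2$, so none is reversible. For $k=2$ this gives $D_{2,2}=\{a_1,a_2,a_3\}$ and $E[D_{2,2}]=\emptyset$, so again $\rho=0<2=\rho_d$. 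Hence no proof of the first inequality can exist as stated.

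What survives is the following. Your reachability argument settles $\delta=1$ completely. A crossing hyperedge whose unique head lies in $D$ would have all its tails in $D$, so it would not cross. Then $\sum_{x\in D}\vec{d}_x=|E[D]|\le\sum_{e\in E[D]}|e|$. For $\delta\ge 2$ the comparison holds only if $d_x(X)$ is reinterpreted as the full degree $d_x$. Under that reading it is trivial, since $\vec{d}_x\le d_x$, but it is inconsistent with the identity $\sum_{x\in X}d_x(X)=\sum_{e\in E(X)}|e|$ used in Lemma~\ref{lem:bridge-strong}.

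The strict bound $\rho_d(D_{k,\delta})>k-1$ is correct, but prove it directly rather than from Theorem~\ref{in-outsidedense}. Its first item is false for singletons $X=\{x\}$ with $x\in D_{k,\delta}\setminus S$, since such $x$ has $\vec{d}_x=k-1$ exactly. The direct argument is short:
\begin{itemize}
\item $D_{k,\delta}\neq\emptyset$ forces $S\neq\emptyset$.
\item Each $x\in D_{k,\delta}\setminus S$ reaches some $s\in S$, and non-reversibility gives $\vec{d}_x\ge\vec{d}_s-1\ge k-1$.
\item Hence $\sum_{x\in D_{k,\delta}}\vec{d}_x\ge (k-1)|D_{k,\delta}|+|S|>(k-1)|D_{k,\delta}|$.
\end{itemize}
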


\begin{proof}
\rev{Under an egalitarian $\delta$-orientation, each internal hyperedge contributes at most one unit of indegree to a single vertex, while in $\rho(X)$ it contributes to all endpoints. Furthermore, cross hyperedges are oriented outward and do not increase indegree inside. Hence $\sum_{x\in X}\vec{d}_x \leq \sum_{x\in X} d_x(X)$.}
\end{proof}

\begin{definition}[\rev{Internalization Coefficient}]
\rev{For a subhypergraph $X$, define the internalization coefficient as $\theta(X)=\frac{\sum_{e\in E(X)} |H(e)|}{\sum_{x\in X}\vec{d}_x} \in [0,1]$, }
\rev{where $\theta(X)=0$ if $\sum_{x \in X} \vec{d}_x = 0$. 
Let $f_k = |S_k| / |D_{k,\delta}|$ denote the fraction of vertices in $S_k$ within $D_{k,\delta}$.}
\end{definition}

\begin{lemma}
\label{lem:headfloor-Dk}
\rev{For any $k, \delta$ with $D_{k,\delta} \neq \emptyset$, $\rho_d(D_{k,\delta})\ \ge\ (k-1)+f_k$.}
\end{lemma}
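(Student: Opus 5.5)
We prove the bound by splitting $D_{k,\delta}$ into the head set $S_k$ and the remaining vertices, and bounding the indegree in each part separately. Here $S_k=\{u\in V \mid \vec{d}_u\ge k\}$ is the set from Definition~\ref{kdense} for the fixed egalitarian $\delta$-orientation, so $S_k\subseteq D_{k,\delta}$ and $f_k=|S_k|/|D_{k,\delta}|$.

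Write $D_{k,\delta}=S_k\cup R$ with $R=D_{k,\delta}\setminus S_k$. Then
\[
\sum_{x\in D_{k,\delta}}\vec{d}_x=\sum_{x\in S_k}\vec{d}_x+\sum_{x\in R}\vec{d}_x .
\]
For the first sum, every $x\in S_k$ has $\vec{d}_x\ge k$ by definition, so the sum is at least $k|S_k|$. For the second sum, every $x\in R$ lies in $D_{k,\delta}$, so Lemma~\ref{degree}(2) gives $\vec{d}_x\ge k-1$, and the sum is at least $(k-1)|R|$. Adding the two bounds gives
\[
\sum_{x\in D_{k,\delta}}\vec{d}_x\ \ge\ k|S_k|+(k-1)\bigl(|D_{k,\delta}|-|S_k|\bigr)=(k-1)|D_{k,\delta}|+|S_k|.
\]
Dividing by $|D_{k,\delta}|>0$ and applying Definition~\ref{density1} yields $\rho_d(D_{k,\delta})\ge (k-1)+f_k$.

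The step that needs care is the lower bound $\vec{d}_x\ge k-1$ on $R$, which we take from Lemma~\ref{degree}(2). If one wants to check it directly, take $x\in R$. By definition $x$ reaches some $s\in S_k$ along a hyperpath. Because the orientation is egalitarian, no hyperpath is reversible, so $\vec{d}_s-\vec{d}_x\le 1$. Hence $\vec{d}_x\ge \vec{d}_s-1\ge k-1$.

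Two edge cases remain. If $S_k=\emptyset$, then $D_{k,\delta}$ is empty by construction, which the hypothesis excludes, so $f_k>0$ and the bound is strictly stronger than Lemma~\ref{rdt}. The quantities $S_k$ and $f_k$ are defined with respect to the chosen egalitarian orientation. The argument holds for every such orientation, so the statement is valid for whichever orientation is fixed.
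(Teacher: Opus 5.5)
Your proof is correct and follows the paper's argument exactly. You split $D_{k,\delta}$ into $S_k$ (indegree at least $k$) and $D_{k,\delta}\setminus S_k$ (indegree at least $k-1$), and the weighted average gives $f_k\, k+(1-f_k)(k-1)=(k-1)+f_k$; your direct derivation of the $k-1$ floor from the absence of reversible hyperpaths fills in a step the paper only asserts.
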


\begin{proof}
\rev{By Definition \ref{kdense}, each vertex $s\in S_k$ has indegree at least $k$, and each vertex $x\in D_{k,\delta}\setminus S_k$ has indegree at least $k-1$ but can reach some $s$. Thus
$\rho_d(D_{k,\delta}) \geq f_k\cdot k+(1-f_k)(k-1)=(k-1)+f_k$.}
\end{proof}

\begin{lemma}
\label{lem:bridge-strong}
\rev{For any $X\subseteq V$, $\rho(X)\ \ge\ \theta(X)\,\rho_d(X)$.}
\end{lemma}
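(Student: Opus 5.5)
The plan is to make the definition of $\theta(X)$ do all the work. Multiplying through, the claim $\rho(X)\ge\theta(X)\rho_d(X)$ reads
\[
\frac{\sum_{x\in X} d_x(X)}{|X|}\ \ge\ \frac{\sum_{e\in E(X)}|H(e)|}{\sum_{x\in X}\vec{d}_x}\cdot\frac{\sum_{x\in X}\vec{d}_x}{|X|}=\frac{\sum_{e\in E(X)}|H(e)|}{|X|}.
\]
So it suffices to show
\[
\sum_{e\in E(X)}|H(e)| \le \sum_{x\in X} d_x(X).
\]
The degenerate case $\sum_{x\in X}\vec{d}_x=0$ is handled separately. There $\theta(X)=0$ by convention, so the right-hand side vanishes and the inequality holds because $\rho(X)\ge 0$. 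We may also assume $X\neq\emptyset$, as otherwise neither density is defined.

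For the main case, I would double count vertex--hyperedge incidences inside $X$. Here $E(X)=E[X]$ is the set of hyperedges entirely contained in $X$, and $d_x(X)$ is the number of such hyperedges containing $x$. Then
\[
\sum_{x\in X} d_x(X)=\sum_{e\in E(X)}|e|.
\]
Each directed hyperedge $\vec e=(T,H)$ has $H\subseteq e$, so $|H(e)|=\min\{\delta,|e|\}\le|e|$ by Definition~\ref{deltapr}. Summing over $e\in E(X)$ gives the required inequality $\sum_{e\in E(X)}|H(e)|\le\sum_{x\in X}d_x(X)$. Dividing by $|X|$ and substituting $\sum_{e\in E(X)}|H(e)|=\theta(X)\sum_{x\in X}\vec{d}_x$ then yields $\rho(X)\ge\theta(X)\rho_d(X)$.

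The one point needing care is interpretive. We must fix $E(X)$ as the internal hyperedges, consistent with $d_x(X)$, so that the numerator of $\theta$ and the degree-density count the same hyperedges. The bound $\theta\le 1$ (internal heads are counted in the indegrees of $X$) is not needed for this inequality. I expect no genuine obstacle beyond stating these conventions clearly.
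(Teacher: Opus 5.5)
Your proposal is correct and matches the paper's proof essentially step for step: double-count incidences to get $\sum_{x\in X} d_x(X)=\sum_{e\in E(X)}|e|$, bound $|H(e)|\le|e|$, substitute $\sum_{e\in E(X)}|H(e)|=\theta(X)\sum_{x\in X}\vec{d}_x$, and divide by $|X|$. Handling the case $\sum_{x\in X}\vec{d}_x=0$ separately adds a little care that the paper's one-line chain skips.
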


\begin{proof}
\rev{Since $\sum_{x \in X} d_x(X) = \sum_{e \in E(X)} |e| 
\ge \sum_{e \in E(X)} |H(e)| = \theta(X) \sum_{x \in X} \vec{d}_x$, 
dividing both sides by $|X|$ yields the inequality.}
\end{proof}

\begin{theorem}[\rev{Density Guarantee}]
\label{thm:density-Dk}
\rev{For any parameters $k$ and $\delta$, we have $\rho(D_{k,\delta}) \ \ge\ \theta(D_{k,\delta})\big((k-1)+f_k\big)$.}
\end{theorem}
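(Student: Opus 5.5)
The plan is to chain the two immediately preceding lemmas, with $X = D_{k,\delta}$ throughout.

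\emph{Step 1.} Apply Lemma~\ref{lem:bridge-strong} with $X = D_{k,\delta}$. This gives
\[
\rho(D_{k,\delta}) \ge \theta(D_{k,\delta})\,\rho_d(D_{k,\delta}).
\]
The lemma holds for an arbitrary vertex set, so it applies here directly. Here $\theta$ and $\rho_d$ are computed with respect to the fixed egalitarian $\delta$-orientation used to define $D_{k,\delta}$. By Theorem~\ref{uniqueness}, the set $D_{k,\delta}$ does not depend on which egalitarian orientation is chosen, so fixing one is harmless.

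\emph{Step 2.} Use the fact that $\theta(D_{k,\delta}) \ge 0$, which holds since it is a ratio of non-negative quantities. Multiply Lemma~\ref{lem:headfloor-Dk}, namely $\rho_d(D_{k,\delta}) \ge (k-1)+f_k$, by $\theta(D_{k,\delta})$. This yields
\[
\theta(D_{k,\delta})\,\rho_d(D_{k,\delta}) \ge \theta(D_{k,\delta})\big((k-1)+f_k\big).
\]
Combining this with Step 1 gives the claim.

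\emph{Degenerate cases.} These need a short separate treatment.
\begin{itemize}
\item If $D_{k,\delta}=\emptyset$, then $\rho$, $\theta$ and $f_k$ are undefined or zero by convention. I would either read the theorem as restricted to the non-empty case, as in Lemma~\ref{rdt}, or note that both sides are $0$.
\item If $\sum_{x\in D_{k,\delta}}\vec d_x = 0$, then $\theta=0$ by definition. The right-hand side is then $0$, and $\rho\ge 0$ holds trivially. No appeal to Lemma~\ref{lem:headfloor-Dk} is needed.
\end{itemize}

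\emph{Main obstacle.} The main point to watch is the sign of $\theta$. Monotone multiplication in Step 2 requires $\theta \ge 0$, and this is immediate from its definition. Beyond that, the argument is just the composition of the two lemmas. It is therefore essentially routine once we confirm that both lemmas are stated for the same orientation and the same set $D_{k,\delta}$.
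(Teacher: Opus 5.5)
Your proposal is correct and follows the paper's own proof. The paper also obtains $\rho_d(D_{k,\delta}) \ge (k-1)+f_k$, re-deriving Lemma~\ref{lem:headfloor-Dk} inline by averaging $\vec{d}\ge k$ on $S_k$ with $\vec{d}\ge k-1$ on $D_{k,\delta}\setminus S_k$, and then chains this with Lemma~\ref{lem:bridge-strong}; your use of $\theta\ge 0$, the same-orientation remark, and the degenerate cases only make explicit what the paper leaves implicit.
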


\begin{proof}
\rev{Vertices in $S_k$ satisfy $\vec{d} \ge k$, while those in $D_{k,\delta} \setminus S_k$ have $\vec{d} \ge k{-}1$ 
and can reach some $s \in S_k$. 
Hence, $\rho_d(D_{k,\delta}) \ge (k{-}1) + f_k$.  
Moreover, by Lemma~\ref{lem:bridge-strong}, $\rho(X) \ge \theta(X)\,\rho_d(X)$.  
Combining the two gives the desired bound.  
Equivalently, in the edge–vertex view, $\frac{|E(D_{k,\delta})|}{|D_{k,\delta}|}\ =\frac{|E(D_{k,\delta})| \cdot \bar r(D_{k,\delta})}{|D_{k,\delta}| \cdot \bar r(D_{k,\delta}))} = \frac{\rho(D_{k,\delta})}{\bar{r}(D_{k,\delta})} \ge\ \frac{\theta((k-1)+f_k)}{\bar r(D_{k,\delta})}$.}
\end{proof}

\begin{theorem}[\rev{Conductance Bound}]
\label{thm:cond}
\rev{For any $X \subseteq V$, the conductance satisfies$\phi(X) \leq 1-\frac{\rho(X)}{\bar d(X)}$, }
\rev{where $\bar{d}(X) = \mathrm{vol}(X)/|X|$. 
In particular, $\phi(D_{k,\delta}) \leq 1-\frac{\theta(D_{k,\delta})((k-1)+f_k)}{\bar d(D_{k,\delta})}$}
\end{theorem}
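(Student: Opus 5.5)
We need a plan for the conductance bound $\phi(X)\le 1-\rho(X)/\bar d(X)$. Since the paper never defines conductance, I fix the standard hypergraph convention: $\mathrm{vol}(X)=\sum_{x\in X} d_x$ (degrees in the whole $\mathcal{H}$), $\partial X$ is the set of hyperedges meeting both $X$ and $V\setminus X$, and $\phi(X)=\mathrm{cut}(X)/\mathrm{vol}(X)$. Here $\mathrm{cut}(X)=\sum_{x\in X}\bigl(d_x-d_x(X)\bigr)$ counts vertex–hyperedge incidences from $X$ into crossing hyperedges. This incidence-weighted cut is the natural one matching $\mathrm{vol}$. If the unweighted count $|\partial X|$ is meant instead, it is bounded above by the incidence count, since every crossing hyperedge has at least one endpoint in $X$. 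So the inequality below transfers to that convention as well.

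The argument is a direct decomposition of the volume. For each $x\in X$, every hyperedge containing $x$ is either entirely inside $X$, and so counted in $d_x(X)$, or it is a crossing hyperedge. Hence
\[
\mathrm{vol}(X)=\sum_{x\in X} d_x(X)+\mathrm{cut}(X)=\rho(X)\,|X|+\mathrm{cut}(X),
\]
where the last equality is Definition~\ref{adensity}. Dividing by $\mathrm{vol}(X)=\bar d(X)|X|$ gives
\[
\phi(X)=\frac{\mathrm{cut}(X)}{\mathrm{vol}(X)}=1-\frac{\rho(X)|X|}{\bar d(X)|X|}=1-\frac{\rho(X)}{\bar d(X)}.
\]
This is an equality under the incidence convention, and hence the stated inequality. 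Under the $|\partial X|$ convention it holds with $\le$. The degenerate case $\mathrm{vol}(X)=0$ is excluded or handled by convention.

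For the ``in particular'' part, set $X=D_{k,\delta}$ and apply Theorem~\ref{thm:density-Dk}, which gives $\rho(D_{k,\delta})\ge\theta(D_{k,\delta})\bigl((k-1)+f_k\bigr)$. Since $\bar d>0$, the map $t\mapsto 1-t/\bar d$ is decreasing, so substituting this lower bound for $\rho$ yields the claimed upper bound on $\phi(D_{k,\delta})$.

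The main obstacle is fixing the conductance convention, in particular making $\mathrm{vol}$ use full degrees while $\rho$ uses induced degrees $d_x(X)$. Once that is done, the split of $\mathrm{vol}(X)$ into internal incidences plus cut incidences is all that is needed. I would also check that this convention is consistent with the later experiments.
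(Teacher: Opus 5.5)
Your proposal is correct and follows essentially the same route as the paper: split the hyperedges incident to each $x\in X$ into internal and crossing ones, giving $\partial(X)\le \mathrm{vol}(X)-\rho(X)|X|$ (an equality under your incidence-weighted cut), divide by $\mathrm{vol}(X)=\bar d(X)|X|$, and then substitute the lower bound from Theorem~\ref{thm:density-Dk}. The paper's two-line proof relies implicitly on the same choices you make explicit: the convention $\phi(X)=\partial(X)/\mathrm{vol}(X)$, rather than a denominator of $\min\{\mathrm{vol}(X),\mathrm{vol}(V\setminus X)\}$, and the fact that $t\mapsto 1-t/\bar d$ is decreasing.
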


\begin{proof}
\rev{Each hyperedge incident to $X$ is either internal or crossing.
Hence, $\partial(X) \le \mathrm{vol}(X) - \rho(X),|X| \Rightarrow\ \phi(X)\le 1-\frac{\rho(X)}{\bar d(X)}$.
Substituting $X = D_{k,\delta}$ and applying Theorem~\ref{thm:density-Dk} gives the result.}
\end{proof}

\subsection{Problem Statements and Challenges}

\stitle{Problem 1: Dense Subhypergraph Mining (DSM).}
Given a hypergraph $\mathcal{H} = (V, E)$ and integers $k, \delta$, compute the $(k,\delta)$-dense subhypergraph $D_{k,\delta}$.

\stitle{Problem 2: Density Subhypergraph Decomposition (DSD).}
Given a hypergraph $\mathcal{H} = (V, E)$, compute all non-empty $(k,\delta)$-dense subhypergraphs as $k$ and $\delta$ varies.

%\subsection{Challenges}

A straightforward approach to Problem 1 is to iteratively identify reversible hyperpaths and reverse them until an egalitarian orientation is achieved. However, this method may incur a time complexity of $O(2^{m  \delta})$ and does not scale to large hypergraphs.
To overcome this limitation, we aim to develop more efficient methods for either reversing all reversible hyperpaths or directly computing a relatively fair orientation in which all qualifying vertices have no reversible hyperpaths to any vertex outside the set.

For Problem 2, directly applying the solution to Problem 1 for each $(k, \delta)$ pair is computationally prohibitive. Instead, we
seek an approach that can efficiently construct the entire decomposition hierarchy in an incremental or hierarchical manner.

The main challenges can therefore be summarized as follows:

\begin{itemize}[leftmargin=10pt]
\item How to design an algorithm that enables the efficient computation of an egalitarian or relatively fair orientation?
\item How to exploit the hierarchical structure of the decomposition to eliminate redundant computation?
\end{itemize}

\section{DENSE SUBHYPERGRAPH MINING}
This section presents four algorithms for computing the $(k,\delta)$-dense subhypergraph.
Naively removing all reversible hyperpaths may incur exponential complexity $O(2^{m\delta})$.
By exploiting hyperpath properties, we start from vertices with indegree at least $k$, iteratively locate and reverse reversible hyperpaths.
Each reversal decreases the indegree of a high-indegree vertex by one; since the total indegree is bounded by $O(m\delta)$ and each hyperpath search takes $O(m\delta)$ time, this motivates the $\kw{DSM\text{-}PATH}$ algorithm, which removes one reversible hyperpath per round via BFS, yielding an overall complexity of $O(m^2\delta^2)$.
To improve efficiency, $\kw{DSM\text{-}FLOW}$ models the process as a max-flow problem, eliminating all reversible hyperpaths between high-indegree and low-indegree vertices in a single step.
It reduces the time complexity to $O(m^{1.5}\bar{d}_e^{1.5})$ (where $\bar{d}_e$ denotes the average hyperedge size) at the cost of higher memory usage.
Finally, $\kw{DSM\text{-}ALL}$ avoids the memory overhead of the flow formulation while still eliminating all reversible hyperpaths between high-indegree and low-indegree vertices.
It guarantees local fairness and achieves a time complexity of $O(nm\delta)$.

\subsection{The BFS-based Algorithm: $\kw{DSM\text{-}PATH}$}

%The algorithm performs BFS to identify and reverse a reversible hyperpath. The $\kw{DSM\text{-}PATH}$ algorithm is shown in Algorithm \ref{dshs}. First, it obtains an arbitrarily $\delta$-orientation of $\mathcal{H}$ (Line 1). Then, in the while loop (Lines 2-5), the algorithm finds (Line 3) and reverses (Line 4) all reversible hyperpaths through the $BFS$. Each iteration of the loop invokes one BFS, removing one reversible hyperpath. The loop terminates when no reversible hyperpath can be found (Line 5). Finally, the algorithm obtains the ($k,\delta$)-dense subhypergraph $D_{k,\delta}$ according to the definition (Line 7). 
The algorithm performs BFS to identify and reverse a reversible hyperpath. The $\kw{DSM\text{-}PATH}$ algorithm is shown in Algorithm~\ref{dshs}. First, it obtains an arbitrary $\delta$-orientation of $\mathcal{H}$ (Line1), which serves as the initial directed hypergraph. Then, in the while loop (Lines2–5), the algorithm finds (Line3) and reverses (Line4) a reversible hyperpath via BFS, thereby reducing the indegree imbalance between vertices and progressively improving the orientation. Each iteration invokes one BFS and removes one reversible hyperpath. The loop terminates when no reversible hyperpath can be found (Line5). Finally, the algorithm obtains the $(k,\delta)$-dense subhypergraph $D_{k,\delta}$ according to Definition\ref{kdense} (Line~7).

According to Definition \ref{kdense}, the $\kw{DSM\text{-}PATH}$ algorithm correctly outputs the ($k,\delta$)-dense subhypergraph $D_{k,\delta}$. %We analyze the complexity of the $\kw{DSM\text{-}PATH}$ algorithm.

%The \kw{DSM{-}PATH} algorithm iteratively reverses all reversible hyperpaths in the $\delta$-orientation of the input hypergraph $\mathcal{H} = (V, E)$ until convergence, and then extracts the $(k, \delta)$-dense subhypergraph based on vertex indegrees. In the worst case, each vertex can be involved in up to $O(\delta)$ reversals, and each reversal may require scanning all hyperedges to locate a valid hyperpath. Consequently, the total time complexity is $O(n m \delta^2)$, where $n = |V|$ and $m = |E|$. The space complexity is linear in the input size, i.e., $O(n + m)$. 

\begin{algorithm}[t!]
\SetKwData{Left}{left}\SetKwData{This}{this}\SetKwData{Up}{up} \SetKwFunction{Union}{Union}
\SetKwInput{Input}{Input}   % 小标题式，不会拉长空格
\SetKwInput{Output}{Output}

%\scriptsize
\small
\caption{$\kw{DSM\text{-}PATH}(\mathcal{H}, k, \delta)$}
\label{dshs} 
\Input{A hypergraph $\mathcal{H}$; two non-negative integers $k,\delta$}
\Output{$(k,\delta)$-dense subhypergraph $D_{k,\delta}$ of $\mathcal{H}$}
        Arbitrarily obtain a $\delta$-orientation $\mathop{\mathcal{H}}\limits ^{\rightarrow}$ of $\mathcal{H}$\;
        \While{True}
 	{
	      \If{$\exists$ a reversible hyperpath $s$ $\rightsquigarrow$ $t$}
	      {
	            reverse the hyperpath $s$ $\rightsquigarrow$ $t$\;
	      }
	      \textbf{else} break\;  
 	}
	$S$ $\leftarrow$ \{$u \in V|{\vec{d}}_u(\mathop{\mathcal{H}}\limits ^{\rightarrow}) \geq k$\}\;
	$D_{k, \delta}$ $\leftarrow$\ $S$ $\cup$ \{$v|v$ can reach a vertex in $S$\}\;
	\textbf{return} $D_{k, \delta}$\; 
	
\end{algorithm}

\begin{theorem}[Complexity of Algorithm $\kw{DSM\text{-}PATH}$]
	\label{comdshs}
 The time and space complexity are $O( m^2  \delta^2)$ and $O(n + m)$.
\end{theorem}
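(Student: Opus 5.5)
The bound splits into three quantities: the cost of the initial orientation, the number of iterations of the while loop, and the cost of one iteration. I would bound each separately and multiply.

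\textbf{Initialization.} Line 1 picks, for every hyperedge $e$, an arbitrary head set $H$ of size $\min\{\delta,|e|\}$. This costs $O(\sum_e |e|)$, which is dominated by the later terms.

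\textbf{Cost per iteration.} Line 3 is one BFS over the directed hypergraph starting from the candidate endpoints. The step from a vertex $u$ through $\vec e=(T,H)$ with $u\in T$ goes to every vertex of $H$, so each hyperedge is expanded at most once per BFS and contributes $|H|\le\delta$ target visits. One BFS therefore costs $O(m\delta)$, up to the linear incidence scan, and the reversal in Line 4 is no more expensive.

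To make a single BFS suffice, run it from all vertices at once and search for a pair with $\vec d_t-\vec d_s\ge 2$. Concretely, process vertices grouped by indegree level, or equivalently run BFS backwards from the high-indegree vertices. Since the indegree values are integers in $[0,m\delta]$, this grouping adds nothing beyond $O(m\delta)$.

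\textbf{Number of iterations.} Use the potential $\Phi=\sum_u \vec d_u^{\,2}$. By Definition~\ref{repa}, reversing $s\rightsquigarrow t$ raises $\vec d_s$ by one, lowers $\vec d_t$ by one, and leaves every other indegree unchanged. The change in $\Phi$ is
\[
\Delta\Phi = 2(\vec d_s-\vec d_t)+2\le -2,
\]
using $\vec d_t-\vec d_s\ge 2$. Hence $\Phi$ strictly decreases, so the loop terminates and the final orientation is egalitarian. However, $\Phi$ can start as large as $(m\delta)^2$, so this argument alone only gives $O(m^2\delta^2)$ iterations. That is not enough to get the stated total.

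\textbf{Main obstacle: a linear iteration bound.} The stated bound needs $O(m\delta)$ iterations. Following the section's intuition, I would argue that each reversal decreases the indegree of a "high" vertex $t$ by one, and that the total indegree $\sum_u\vec d_u=\sum_e\min\{\delta,|e|\}\le m\delta$ is invariant. To charge each iteration to one unit of this budget, I would fix the reversal rule: always choose $t$ of maximum current indegree reachable, and $s$ of minimum indegree.

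I would then show that under this rule the maximum indegree never increases and the minimum never decreases. Each unit of indegree is then removed from the top level at most once before the levels meet, giving $O(m\delta)$ reversals. If this charging fails, the fallback is to bound the number of distinct indegree levels times the number of vertices per level, which still totals at most the indegree sum $m\delta$.

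\textbf{Conclusion.} With $O(m\delta)$ iterations at $O(m\delta)$ each, the time is $O(m^2\delta^2)$. Lines 6–7 are one more reverse BFS, costing $O(m\delta)$.

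\textbf{Space.} The algorithm stores the hypergraph, a head/tail label per incidence, the indegree array, and the BFS queue and parent pointers. This is linear in the input, which the paper writes as $O(n+m)$, treating hyperedge sizes as part of the stored input.
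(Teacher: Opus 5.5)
Your per-iteration cost, your termination argument via $\Phi=\sum_u\vec d_u^{\,2}$, and your space accounting are fine. You also correctly note that $\Phi$ alone only gives $O(m^2\delta^2)$ iterations.

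\textbf{The gap is in the step meant to repair this.} The two monotonicity facts you plan to prove hold for \emph{every} reversal, whatever the rule. Since $\vec d_s+1\le\vec d_t-1$, the new $\vec d_s$ stays below the current maximum and the new $\vec d_t$ stays above the current minimum. So these facts say nothing about how many reversals occur. What they leave uncontrolled is the middle range: a vertex can receive a unit as the source $s$ of one reversal and give it away as the target $t$ of a later one. One unit of indegree may therefore be moved many times, and counting removals ``from the top level'' does not count all reversals. Under your rule $t$ need not even be a global maximum, because a maximum-indegree vertex may have no reversible hyperpath into it. Your fallback only rewrites $\sum_u\vec d_u$ level by level; it gives no reason why the number of reversals is bounded by that sum. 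So the $O(m\delta)$ iteration bound, which carries the whole theorem, is asserted rather than proved.

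\textbf{How the paper gets the linear count.} It uses a $k$-dependent charge, not a choice rule. It fixes the set $\overline S=\{u:\vec d_u\ge k\}$ of the initial orientation, argues that every reversed hyperpath ends at some $t\in\overline S$, and charges each reversal to a unit drop of $\sum_{x\in\overline S}\vec d_x\le m\delta$. This charge also needs each reversed path to start outside $\overline S$: if $s\in\overline S$, the sum does not change. So what it really bounds is reversals of hyperpaths from $V\setminus\overline S$ into $\overline S$, which is the $k$-driven search described in the overview of the mining section. With that restriction the bound is immediate for any choice of paths.

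Your route never uses $k$, so it would need an extra lemma of its own. For instance, you could show that under your rule no vertex ever acts as $s$ in one reversal and as $t$ in a later one. That would bound the number of reversals by $\sum_u\vec d_u\le m\delta$. Without some such lemma, the proposal does not reach $O(m^2\delta^2)$.
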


\begin{proof} 
For the arbitrary $\delta$-orientation $\mathop{\mathcal{H}}\limits ^{\rightarrow}$ obtained by line 1 of Algorithm \ref{dshs}, let $\overline{S}$ = \{$u \in V|{\mathop{d}\limits ^{\rightarrow}}_u(\mathop{\mathcal{H}}\limits ^{\rightarrow}) \geq k$\} as the initial set $S$. The reversal of a $s$ $\rightsquigarrow$ $t$ hyperpath cannot add any new vertices to the set $S$, thus every vertex $t$ in the $s$ $\rightsquigarrow$ $t$ hyperpath must be in $\overline{S}$. As each reversal decreases the indegree of a vertex $t$ in $\overline{S}$ by 1 and the indegree of a vertex is non-negative, the number of reversals is clearly bounded by $\sum_{x \in \overline{S}}{\vec{d}_x} \leq m \delta$. 
Each $s$ $\rightsquigarrow$ $t$ hyperpath can be reversed in $O(m \delta)$ time. Since there are at most $O(m \delta)$ such hyperpaths, the total time complexity is $O(m^2  \delta^2)$. The space complexity is linear in the input size, i.e., $O(n + m)$.
\end{proof}

\subsection{The Flow-based Algorithm: $\kw{DSM\text{-}FLOW}$}
To overcome the inefficiency of $\kw{DSM\text{-}PATH}$, which may reverse up to $O(m \delta)$ hyperpaths, we introduce the $\kw{DSM\text{-}FLOW}$ algorithm to efficiently separate $(k, \delta)$-dense vertices from others. Leveraging the strength of network flow in vertex separation, $\kw{DSM\text{-}FLOW}$ eliminates all relevant reversible $s \rightsquigarrow t$ hyperpaths in one pass via a flow network.
The core idea builds on augmenting hyperpath algorithms, adapting max-flow techniques to remove reversible hyperpaths. Inspired by the reorientation network of Bezakova et al.~\cite{bezakova2000compact}, we design a novel reorientation hypergraph network.%, enabling efficient vertex filtering and improving both performance and scalability.

\begin{definition}[re-orientation hypergraph network]
	\label{flow}
	Given a $\delta$-orientation $\mathop{\mathcal{H}}\limits^{\rightarrow} = (V, \mathop{E}\limits ^{\rightarrow})$ and an integer $d$, the re-orientation hypergraph network is constructed as a weighted factor graph, in which each hyperedge is treated as a vertex in the network, augmented with an additional source vertex $s$ and sink vertex $t$. The weight assigned to each arc between two vertices represents the capacity of the arc. Specifically, the re-orientation network with parameter $d$ is defined as $(V \cup V_E \cup {s, t}, A, c)$, where
\begin{itemize}
\item [1)] 
$\langle u, v_e \rangle \in A, c(u, v_e) = 1$, if $u \in T, v_e \in V_E, \vec{e} = \{T, H\}$;
\item [2)]
$\langle v_e, u \rangle \in A, c(v_e, u) = 1$, if $v_e \in V_E, u \in H, \vec{e} = \{T, H\} $;
\item [3)]
$\langle s, u \rangle \in A, c(s, u) = \vec{d} - d_u(\mathop{H}\limits ^{\rightarrow})$, if $\vec{d}_u(\mathop{H}\limits ^{\rightarrow}) \textless d$;
\item [4)]
$\langle u, t \rangle \in A, c(u, t) = \vec{d}_u(\mathop{H}\limits ^{\rightarrow})$ - d, if $\vec{d}_u(\mathop{H}\limits ^{\rightarrow}) \textgreater d$.
\end{itemize}
\end{definition}

\begin{figure}

	\centering
	\subfigure[\scriptsize{$1$-orientation}]{
		\label{fig3a}
		\includegraphics[width=0.3\linewidth]{fig_hyperpath1.pdf}
	}
	\subfigure[\scriptsize{re-orientation network}]{
		\label{fig3b}
		\includegraphics[width=0.3\columnwidth]{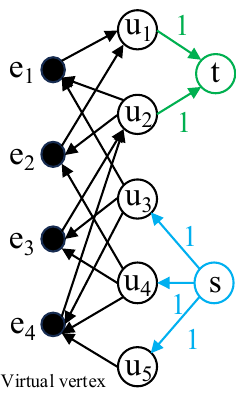}
	}
	\subfigure[\scriptsize{egalitarian orientation}]{
		\label{fig3c}
		\includegraphics[width=0.3\columnwidth]{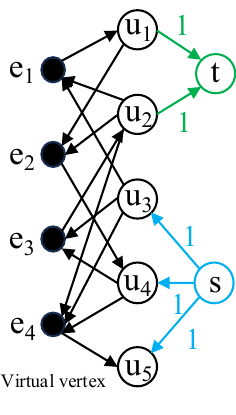}
	}
	\vskip -16pt

%\vskip -6pt

	\caption{An example of the re-orientation network.}
	%($k,\delta$)-dense subhypergraph $D_{k,\delta}$
	\label{flownetwork}

\end{figure}

By Definition \ref{flow}, the \emph{re-orientation hypergraph network} uses a parameter $d$ to separate vertices by indegrees. To obtain $D_{k, \delta}$, we set $d = k-1$. Consequently, the source $s$ connects to vertices with an indegree less than $k-1$, while the sink $t$ links to vertices with an indegree greater than $k-1$. Upon completion of the maximum flow algorithm, no augmentation paths remain in the residual network, indicating no  reversible hyperpaths from $s$ to $t$. Hence,  all reversible hyperpaths are reversed and $D_{k, \delta}$ can be obtained.

\begin{example}
Given the 1-orientation in Figure \ref{ori} (Figure \ref{fig3a}) and $k$ = 2, the corresponding re-orientation hypergraph network is shown in Figure \ref{fig3b}, where the capacity of each arc is 1. For vertices in $V$, $k-1$ is the pivot of indegree. Source $s$ is connected to the vertices whose indegree does not reach the pivot, thus it is connected to $u_3, u_4,u_5$ with a capacity of $k-1-\vec{d}_{u_3}$/$\vec{d}_{u_4}/ \vec{d}_{u_5}$=1. Sink $t$  is connected to the vertices whose indegree exceeds the pivot, thus it is connected to $u_1$ and $u_2$ with a capacity of $\vec{d}_{u_1}$/$\vec{d}_{u_2}$-($k-1$)=1. After computing maximum flow, the network is Figure \ref{fig3c}. 
\end{example}

\begin{algorithm}[t!]
\SetKwData{Left}{left}\SetKwData{This}{this}\SetKwData{Up}{up} \SetKwFunction{Union}{Union}
\SetKwInput{Input}{Input}   % 小标题式，不会拉长空格
\SetKwInput{Output}{Output}
%\SetKwInOut{procedure}{Pocedure}

%\scriptsize
\small
\caption{$\kw{DSM\text{-}FLOW}(\mathcal{H}, k, \delta)$}
\label{dshs2} 
\Input{a hypergraph $\mathcal{H}$, two non-negative integer $k$, $\delta$.} 
\Output{\emph{$(k,\delta)$-dense} subhypergraph $D_{k, \delta}$ of $\mathcal{H}$}
	
        Arbitrarily obtain an $\delta$-orientation $\mathop{\mathcal{H}}\limits ^{\rightarrow}$ of $\mathcal{H}$\;
        $V' \leftarrow V \cup \{s, t\} \cup V_E$, 
        $d \leftarrow k-1$\;
        \For{each $\langle u, v_e \rangle, u \in T, v_e \in V_E, \vec{e} = \{T, H\} \in \mathop{E}\limits ^{\rightarrow}$}
        {
                add arc$\langle u, v_e \rangle$ to $A$ and let $c(u, v_e) \leftarrow 1$\;
        }
        \For{each $\langle v_e, u \rangle, v \in V_e, u \in H, \vec{e} = \{T, H\} \in \mathop{E}\limits ^{\rightarrow}$}
        {
                add arc$\langle v_e, u \rangle$ to $A$ and let $c(v_e, u) \leftarrow 1$\;
        }
         \For{each $u, \vec{d}_u(\mathop{H}\limits ^{\rightarrow}) \textless d$}
        {
                add arc$\langle s, u \rangle$ to $A$ and let $c(s, u) \leftarrow d - \vec{d}_u(\mathop{H}\limits ^{\rightarrow})$\;
        }
         \For{each $u, \vec{d}_u(\mathop{H}\limits ^{\rightarrow}) \textgreater d$}
        {
                add arc$\langle u, t \rangle$ to $A$ and let $c(u, t) \leftarrow \vec{d}_u(\mathop{H}\limits ^{\rightarrow})$ - d\;
        }
        Compute the maximum flow value $f_{max}$ of ($V', A, c$)\;
        \begin{footnotesize}
        \tcp{Copy the residual network to $\mathop{\mathcal{H}}\limits ^{\rightarrow}$}
        \end{footnotesize}
        \For{each $(u_x, v_e, u_y), u_x, u_y \in V, v_e \in V_E$} 
        {
                \If{$\langle u_x, v_e \rangle \in A, \langle v_e, u_y \rangle \in A$ are saturated }
                {
                        reverse $\langle u_x, v_e \rangle$,$\langle v_e, u_y \rangle$;\tcp{$u_x \in H, u_y \in T$ }
                }
        }

	Same as lines 6-7 in Algorithm \ref{dshs}\;
	\textbf{return} $D_{k, \delta}$\; 
	
\end{algorithm}

We design a network–flow–based algorithm, $\kw{DSM\text{-}FLOW}$, as outlined in Algorithm~\ref{dshs2}.
The algorithm first generates an arbitrary initial orientation of the hypergraph (Line1),
and then constructs the re-orientation network (Lines4–10).
A maximum flow is subsequently computed on this network (Line11),
after which all reversible hyperpaths are reversed according to the resulting flow (Lines12–14).
Finally, the vertices remaining in the resulting oriented hypergraph form the $(k,\delta)$-dense subhypergraph $D_{k,\delta}$,
which is returned as the output (Line~15).
We next analyze the correctness and computational complexity of $\kw{DSM\text{-}FLOW}$.

%We devise the network flow algorithm $\kw{DSM\text{-}FLOW}$, as shown in Algorithm \ref{dshs2}. The algorithm first arbitrarily obtains an orientation (Line 1) and then constructs the re-orientation network (Lines 4-10). After that, it computes the maximum flow on it (Line 11). Then, it  reverses all reversible hyperpaths (Lines12-14). Finally, it obtains $D_{k, \delta}$ and returns it as the ($k, \delta$)-dense subhypergraph $D_{k, \delta}$ (Line 15). 
%Below we prove the correctness and complexity of Algorithm $\kw{DSM\text{-}FLOW}$.

\begin{theorem}[Correctness of Algorithm $\kw{DSM\text{-}FLOW}$ ]
	\label{comflowcorr}
	Algorithm $\kw{DSM\text{-}FLOW}$ correctly outputs $D_{k,\delta}$.
\end{theorem}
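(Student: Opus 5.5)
The plan is to show that after the max-flow step, the orientation $\mathop{\mathcal{H}}\limits^{\rightarrow}{}'$ obtained by copying back the flow (Lines 12--14) is a valid $\delta$-orientation. In it, no vertex with $\vec{d}_u\ge k$ and none of its ancestors lies on the ``low'' side of a reversible hyperpath. The set produced in Line 15 therefore coincides with $D_{k,\delta}$, and by Theorem~\ref{uniqueness} this set does not depend on which (sufficiently fair) orientation is used.

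\textbf{Step 1: flow decomposition gives reversals.} I will decompose the integral max flow into $s$--$t$ paths plus cycles. Each unit path has the form $s\to u_0\to v_{e_0}\to u_1\to\cdots\to u_l\to t$, which is exactly a hyperpath from $t$-side to $s$-side in the current orientation read backwards: $u_i\in T_i$ and $u_{i+1}\in H_i$. Reversing the unit arcs along it swaps one tail vertex and one head vertex of each $e_i$. Hence $|H|=\min\{\delta,|e|\}$ is preserved, so the result stays a $\delta$-orientation. The only indegree changes are $\vec{d}_{u_0}\!+\!1$ and $\vec{d}_{u_l}\!-\!1$, and the capacity bounds on $s,t$ arcs ensure no vertex crosses the pivot $d=k-1$. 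Cycles change no indegree. Thus every vertex with $\vec{d}_u<k-1$ stays $\le k-1$, every vertex with $\vec{d}_u>k-1$ stays $\ge k-1$, and vertices at $k-1$ are unchanged.

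\textbf{Step 2: characterize $D$ via the min cut.} Let $R$ be the set of vertices reachable from $s$ in the final residual network. Arcs of the residual restricted to $V\cup V_E$ correspond precisely to the hyperpath relation of the new orientation (reversed). So $V\setminus R$, intersected with $V$, is the set of vertices that reach some vertex whose $u\to t$ arc is unsaturated, i.e. some vertex with new indegree $\ge k$. I will argue this set equals $S\cup\{v : v \text{ reaches } S\}$ computed in Line 15. Because there is no augmenting path, no vertex with indegree $<k-1$ can be reached backward from a vertex of indegree $\ge k$. Therefore the computed set satisfies the two properties of Lemma~\ref{degree}: crossing hyperedges are oriented outward, and inner indegrees are $\ge k-1$ while outer ones are $\le k-1$. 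It then satisfies the inequalities of Theorem~\ref{in-outsidedense} with respect to this orientation.

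\textbf{Step 3: identify with $D_{k,\delta}$.} I will compare against an egalitarian orientation, obtained by continuing with \kw{DSM\text{-}PATH}. The reversible hyperpaths that remain go only between vertices on the same side of the cut, so reversing them never moves a vertex across the cut. Alternatively, I can apply the counting argument of Theorem~\ref{uniqueness} directly. Either route gives that the set equals $D_{k,\delta}$.

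I expect the main obstacle to be Step 3. Max flow removes only hyperpaths that cross the pivot, not all reversible hyperpaths, so the intermediate orientation is not egalitarian. I would first try showing that any remaining reversible hyperpath lies entirely inside or entirely outside the cut. The reason is that hyperedges crossing the cut are oriented outward, so no hyperpath leads from outside to inside. The subsequent egalitarian reversals therefore keep every indegree inside at $\ge k-1$ with reachability to indegree $\ge k$, and every indegree outside at $\le k-1$.
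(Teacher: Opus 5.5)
Your argument is correct in substance, and it is a more complete route than the paper's. The paper's proof is a one-line max-flow/min-cut appeal. After copying the flow back, the vertices of indegree $\ge k$ are said to be on $t$'s side and the rest on $s$'s side. Since no residual $s$--$t$ path remains, there is no reversible hyperpath from $V\setminus S$ to $S$, and the paper simply declares this to be the $(k,\delta)$-dense condition. It never checks three things:
\begin{itemize}
\item that the copied-back orientation is still a $\delta$-orientation;
\item that no vertex crosses the pivot $k-1$;
\item the issue you correctly single out: that a set read off a \emph{non-egalitarian} orientation agrees with Definition~\ref{kdense}, which is phrased for an egalitarian one.
\end{itemize}
Your Step~1 covers the first two: conservation at $v_e$ keeps $|H|$ fixed, and the net indegree change at $u$ is $f(s,u)-f(u,t)$, bounded by the terminal capacities. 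Your Step~3 covers the third: finish with \textsf{DSM-PATH} and show every further reversal stays on one side of the computed set $D$. What the paper's version buys is only brevity.

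Three things to tighten.

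(i) In Step~3, ``therefore keep \dots\ reachability to indegree $\ge k$'' does not follow from one-sidedness alone. A reversal inside $D$ can destroy the hyperpath by which some $v\in D$ reached a high vertex. The claim is nonetheless true:
\begin{itemize}
\item Take the reversed hyperpath $s=u_0,\vec{e}_0,\dots,u_l=t$ shortest. Suppose $v$'s path first uses a tail/head pair of some $e_i$ that the reversal alters.
\item Then afterwards $v$ still reaches $u_i$. Either it reaches $u_i$ directly, or it goes through the tail of $e_i$ on its path, which stays a tail while $u_i$ becomes a head.
\item The reversed path leads from $u_i$ back to $s$, whose indegree is now $\vec{d}_s+1\ge k$.
\item Also $\vec{d}_t-1\ge\vec{d}_s+1\ge k$, so the set of indegree-$\ge k$ vertices only grows.
\end{itemize}
You must also carry closedness of $D$ through each step. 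This holds because a hyperedge on an inside path has no tail outside $D$, and a hyperedge on an outside path has no head inside $D$. The same fact shows that outside reversals never touch hyperpaths among $D$-vertices.

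(ii) In Step~2, $V\setminus R$ is not the output set. An isolated component whose vertices all have indegree exactly $k-1$ has no terminal arcs. So it lies outside $R$ without reaching any vertex of indegree $\ge k$. The right object is the set of vertices that can reach $t$ in the residual network, which is what your verbal description actually computes.

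(iii) Do not rely on Theorem~\ref{in-outsidedense} or on the counting proof of Theorem~\ref{uniqueness}. The first bullet of Theorem~\ref{in-outsidedense} already fails for $X=\{u_3\}$ in the paper's own example ($k=1$, $\vec{d}_{u_3}=0$). That counting proof also compares indegrees taken in different orientations.

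If you want an orientation-free replacement for Step~3, set $g(X)=\sum_{e}\max\{0,\min\{\delta,|e|\}-|e\setminus X|\}$. Every $\delta$-orientation satisfies $\sum_{x\in X}\vec{d}_x\ge g(X)$, with equality iff no hyperedge has a tail outside $X$ and a head inside $X$. From this, both your set and the egalitarian $D_{k,\delta}$ are the unique minimal maximizer of the supermodular function $g(X)-(k-1)|X|$.
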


\begin{proof}
After computing the maximum flow, we reverse all saturated edges $\langle u_x, u_y \rangle \in \mathop{E}\limits ^{\rightarrow}$ in the directed hypergraph $\mathop{\mathcal{H}}\limits ^{\rightarrow}$. The resulting set $S = \{u \in V \mid \vec{d}_u(\mathop{\mathcal{H}}\limits ^{\rightarrow}) \geq k\}$ includes exactly those vertices reachable from $t$ via unsaturated hyperedges in the residual network. Similarly, $V \setminus S$ includes those connected to $s$. Since no residual path exists from $s$ to $t$ in network, there is no reversible  hyperpath from $V \setminus S$ to $S$ in $\mathop{\mathcal{H}}\limits ^{\rightarrow}$, satisfying the $(k, \delta)$-dense subhypergraph condition. Thus, the $\kw{DSM\text{-}FLOW}$ is precise. 
\end{proof}

\begin{theorem}[Complexity of Algorithm $\kw{DSM\text{-}FLOW}$]
	\label{comflow}
 The time and space complexity are $O$($m^{1.5} \bar{d}_e^{1.5}$) and $O(n+m \bar{d}_e)$.
\end{theorem}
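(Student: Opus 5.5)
The plan is to bound the size of the re-orientation network of Definition~\ref{flow} first, and then plug that size into a unit-capacity max-flow bound. The extra work outside the max-flow call is only linear, so the max-flow step determines both bounds.

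\textbf{Step 1: size of the network.} The node set is $V \cup V_E \cup \{s,t\}$, so it has $n+m+2 = O(n+m)$ nodes. For each directed hyperedge $\vec{e}=(T,H)$ we add $|T|$ arcs into $v_e$ and $|H|$ arcs out of $v_e$. Together these give $|e|$ arcs, so the hyperedge-incidence arcs number $\sum_e |e| = m\bar d_e$. There are at most $n$ source arcs and $n$ sink arcs. Hence $|A| = O(n + m\bar d_e)$, which also equals the storage needed for the network together with the orientation. This gives the claimed space bound $O(n+m\bar d_e)$. Building the network in Lines~3--10 and copying the residual flow back in Lines~12--14 both take $O(|A|)$ time. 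The final reachability step from Algorithm~\ref{dshs} is a BFS over the incidence structure and also costs $O(n+m\bar d_e)$.

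\textbf{Step 2: cost of the max-flow.} Every internal arc has capacity $1$. Only the source and sink arcs may have larger capacities, namely $|d-\vec d_u|$. I would handle them in one of two ways. The first option is to split each such arc into parallel unit arcs. Their total count is $\sum_u |\vec d_u - d|$, which is bounded by the total indegree plus the total deficit, i.e. $O(m\delta + n)$, and this is $O(m\bar d_e + n)$ since $\delta\le |e|$ is relevant only through $\min\{\delta,|e|\}\le |e|$. The second option is to note that Dinic's analysis only needs the internal arcs to have unit capacity. In either case I would invoke the Even--Tarjan/Karzanov bound: on a unit-capacity network with $A$ arcs, Dinic's algorithm runs in $O(A^{1.5})$, because there are $O(\sqrt{A})$ blocking-flow phases, each costing $O(A)$. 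With $A = O(m\bar d_e)$, assuming each vertex lies in at least one hyperedge so that $n \le m\bar d_e$, this gives $O(m^{1.5}\bar d_e^{1.5})$.

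\textbf{Main obstacle.} The delicate point is the phase-count argument in Step 2. The standard $O(\sqrt{A})$ phase bound holds for unit-capacity networks. The source and sink arcs here carry larger capacities, so I need to check that expanding them into unit arcs keeps the arc count at $O(m\bar d_e)$. This works because each unit of excess at a vertex corresponds to an actual head-incidence of some hyperedge, and each unit of deficit is at most $d\le$ degree-related. If that accounting is too loose, I would fall back on the argument that after $\sqrt{A}$ phases the residual distance is at least $\sqrt{A}$. Arc-disjointness of the remaining augmenting paths then limits the remaining flow to $O(\sqrt{A})$, and this argument only uses unit capacity on the interior arcs. Summing the linear overheads from Step 1 with the flow bound then yields both claimed complexities.
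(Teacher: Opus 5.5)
Your proposal is correct and follows the paper's route: bound the re-orientation network by $O(n+m\bar d_e)$ nodes and arcs, then apply the $O(|A|^{1.5})$ max-flow bound for networks whose only non-unit arcs are incident to $s$ and $t$. The paper simply cites Blumenstock's AUC-2 result, whereas your fallback reproves it with the Even--Tarjan phase count, which indeed needs unit capacity only on interior arcs.

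One caveat concerns your first option. The total deficit $\sum_{u:\vec d_u<d}(d-\vec d_u)$ can be as large as $n(k-1)$, not $O(n)$, so naive splitting of the source arcs fails for large $k$. Before splitting, you would need to cap each source arc at the number of hyperedges having $u$ in the tail, which leaves the max flow unchanged. Otherwise, just rely on your second option.
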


\begin{proof}
As shown in \cite{DBLP:conf/alenex/Blumenstock16}, the reorientation network is an AUC-2 network (i.e., unit-capacity edges except for those incident to source/sink). Computing maximum flow on such networks takes $O(|E|^{1.5})$ time and $O(|E|)$ space. Since the network size is scaled by average hyperedge size $\bar{d}_e$, the overall complexity becomes $O(m^{1.5}  \bar{d}_e^{1.5})$ in time and $O(n+m \bar{d}_e)$ in space.  
\end{proof}

%Although \kw{DSM{-}PATH} and \kw{DSM{-}FLOW} share comparable worst-case time complexities, our experiments show that \kw{DSM{-}FLOW} is significantly faster in practice. This is because \kw{DSM{-}FLOW}encodes the entire reorientation process into a global flow network, allowing efficient resolution via maximum flow algorithms. In contrast, \kw{DSM{-}PATH} relies on repeated hyperpath enumeration and reversal, which incurs substantial overhead due to path searching and iteration. As a result, \kw{DSM{-}FLOW}avoids redundant operations and converges more rapidly, particularly on large or structurally complex hypergraphs.

%Although \kw{DSM{-}PATH} and \kw{DSM{-}FLOW} have similar worst-case complexities, \kw{DSM{-}FLOW} is much faster in practice. By encoding the reorientation process into a global flow network, it resolves all reversible hyperpaths efficiently via a single max-flow computation. In contrast, \kw{DSM{-}PATH} incurs substantial overhead from repeated hyperpath enumeration and reversal. As a result, \kw{DSM{-}FLOW} avoids redundancy and converges faster, especially on large hypergraphs.

\subsection{The Improved Algorithm: $\kw{DSM\text{-}FLOW+}$ }
We further propose an enhanced algorithm, $\kw{DSM\text{-}FLOW+}$, which aims to balance indegree distribution by orienting each hyperedge toward the vertices with the lowest current indegree. This design aligns more directly with the objective of minimizing maximal indegree across the hypergraph.
Specifically, $\kw{DSM\text{-}FLOW+}$ first constructs an orientation $\mathop{\mathcal{H}}\limits ^{\rightarrow}$, where each hyperedge initially points to the endpoint with the smaller indegree (Algorithm \ref{dshs3} Lines 2–4). 
$\kw{DSM\text{-}FLOW+}$ focuses on reducing vertex indegrees, which better aligns with the objective of minimizing the maximum indegree, progressively reducing the maximum indegree of the orientation and thus achieving a more balanced distribution.

\begin{algorithm}[t!]
\SetKwData{Left}{left}\SetKwData{This}{this}\SetKwData{Up}{up} \SetKwFunction{Union}{Union}
\SetKwInput{Input}{Input}   % 小标题式，不会拉长空格
\SetKwInput{Output}{Output}
%\SetKwInOut{procedure}{Pocedure}

%\scriptsize
\small
\caption{$\kw{DSM\text{-}FLOW+}(\mathcal{H}, k, \delta)$}
\label{dshs3} 
\Input{a hypergraph $\mathcal{H}$, two non-negative integer $k$, $\delta$.} 
\Output{\emph{$(k,\delta)$-dense} subhypergraph $D_{k, \delta}$ of $\mathcal{H}$}
%\begin{small}
        $\mathop{E}\limits ^{\rightarrow} \leftarrow \emptyset$, $\mathop{\mathcal{H}}\limits ^{\rightarrow} \leftarrow (V, \mathop{E}\limits ^{\rightarrow})$\;
        \For{each $e \in E$ }
        {
               $V_e$ $\leftarrow$ select $\delta$ vertices with lowest degree in $e$\;
               $ \vec{e}=\{\{e \setminus V_e\},\{V_e\}\}$, $\mathop{E}\limits ^{\rightarrow} \leftarrow \mathop{E}\limits ^{\rightarrow} \cup \vec{e}$\;

        }    
       Same as lines 2-17 in Algorithm \ref{dshs2}\;
	\textbf{return} $D_{k, \delta}$\; 
%\end{small}	
\end{algorithm}

\begin{theorem}[Complexity of Algorithm $\kw{DSM\text{-}FLOW+}$]
	\label{comflow+}
 The time and space complexity are $O$($m^{1.5} \bar{d}_e^{1.5}$) and $O(n+m \bar{d}_e))$.
\end{theorem}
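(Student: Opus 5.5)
The plan is to reduce the claim to Theorem~\ref{comflow}, since $\kw{DSM\text{-}FLOW+}$ differs from $\kw{DSM\text{-}FLOW}$ only in how the initial $\delta$-orientation is produced (Lines 2--4 of Algorithm~\ref{dshs3}). Everything after that is literally Lines 2--17 of Algorithm~\ref{dshs2}. The argument therefore has two parts: bound the cost of the new initialization, then observe that the rest is unchanged.

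\textbf{Initialization.} First compute all vertex degrees $d_u$ with one scan of the incidence lists, costing $O(\sum_{e}|e|)=O(m\bar d_e)$. For each hyperedge $e$, we must select the $\min\{\delta,|e|\}$ vertices of lowest degree. Using linear-time selection (or partial bucket sorting by degree), this takes $O(|e|)$ per hyperedge and $O(m\bar d_e)$ in total. Writing out $\vec e=(e\setminus V_e, V_e)$ also costs $O(|e|)$. The result is a valid $\delta$-orientation in the sense of Definition~\ref{deltapr}, because $|V_e|=\min\{\delta,|e|\}$. This is the one step where care is needed: a naive full sort would cost $O(|e|\log|e|)$ per hyperedge, that is, $O(m\bar d_e\log d^E_{max})$ overall. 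I would argue that even this bound is dominated by $m^{1.5}\bar d_e^{1.5}$, since $\log d^E_{max}\le \sqrt{m\bar d_e}$ because $d^E_{max}\le m\bar d_e$. Alternatively, I would invoke selection directly to avoid the logarithm altogether.

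\textbf{Remaining steps.} The re-orientation network built in Lines 2--10 has $n+m+2$ nodes and $O(\sum_e |e| + n)=O(n+m\bar d_e)$ arcs. This holds regardless of which $\delta$-orientation is input, since each hyperedge $e$ contributes exactly $|e|$ unit arcs (its tails into $v_e$ and $v_e$ into its heads). Hence the max-flow bound from \cite{DBLP:conf/alenex/Blumenstock16}, as used in Theorem~\ref{comflow}, applies verbatim and gives $O(m^{1.5}\bar d_e^{1.5})$ time. The copy-back in Lines 12--14 and the final reachability extraction are linear in the network size. Summing these pieces, the total time is $O(m\bar d_e)+O(m^{1.5}\bar d_e^{1.5})=O(m^{1.5}\bar d_e^{1.5})$.

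\textbf{Space.} The orientation needs $O(n+m\bar d_e)$ space, the degree array $O(n)$, and the flow network $O(n+m\bar d_e)$. Together these give the claimed $O(n+m\bar d_e)$ space bound.
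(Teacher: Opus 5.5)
Your proposal is correct and follows the same route as the paper. It bounds the cost of the degree-based initial $\delta$-orientation and then notes that the remaining steps are exactly those of $\kw{DSM\text{-}FLOW}$, so they inherit the time and space bounds of Theorem~\ref{comflow}. Your bound of $O(m\bar d_e)$ for the initialization is more careful than the paper's $O(m\delta)$, since just scanning each $e$ costs $|e|$. Either term is dominated by $O(m^{1.5}\bar d_e^{1.5})$.
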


\begin{proof}
The $\kw{DSM\text{-}FLOW+}$ algorithm modifies $\kw{DSM\text{-}FLOW}$ by introducing a deterministic $\delta$-orientation heuristic. For each hyperedge $e \in E$, selecting the $\delta$ vertices with the lowest degree can be performed in $O$($|e|$) time via linear scan or $O$($\delta \log |e|$) via partial heap selection. Since each hyperedge is processed once, the orientation step takes total time $O$($m \delta$).
The subsequent steps are identical to $\kw{DSM\text{-}FLOW}$, which has time complexity $O((m \bar{d}_e)^{1.5}$). Therefore, the total time complexity remains $O$($m^{1.5} \bar{d}_e^{1.5}$), and space usage is dominated by the size of the flow network, i.e., $O$($n+m \bar{d}_e)$).  
\end{proof}
%since we only changed the way the orientation of hypergraph, the time and space complexity of the orientation is $O$($|E|\cdot \delta$) and $O(|E|)$, so the total time and space complexity of Algorithm $\kw{DSM\text{-}FLOW+}$ is $O$($|E|^{1.5}\cdot \bar{d}_e^{1.5}$) and $O(|E|)$.

\subsection{The Improved Algorithm: $\kw{DSM\text{-}ALL}$ }

Although the construction of a flow network in $\kw{DSM\text{-}FLOW}$ offers significant acceleration, it also incurs substantial memory overhead. Motivated by the core idea of network flow—that is, separating vertices based on hyperpath accessibility—we seek a more lightweight and scalable method for identifying all vertices whose indegree is at least $k$. To this end, we propose the $\kw{DSM\text{-}ALL}$ algorithm, which efficiently identifies a set of such vertices, ensuring that none of vertex outside the set can reach any vertex in the set via a reversible hyperpath. By treating this set as a cohesive unit, the hypergraph can be regarded as globally fair, since no external vertex can reach it through a reversible hyperpath, enabling efficient vertex filtering and improving both performance and scalability.

The algorithm $\kw{DSM\text{-}ALL}$ proceeds as Algorithm~\ref{dshs4}: all vertices with indegree at least $k$ are initially included in the initial set $\bar{S}$ ( Line 2). Then, for each vertex outside $\bar{S}$, we search for a reversible hyperpath to vertex $u \in$ $\bar{S}$ and reverse it if found (Lines 10–11). If the reversal increases the indegree of the external vertex to $k$, it is incorporated into $\bar{S}$ (Line 12). At this stage, the hypergraph becomes relatively fair, as no reversible hyperpaths exist from the outside to the current set.
However, in the process of discovering and reversing such cross-boundary hyperpaths, some vertices inside $\bar{S}$ may experience a drop in indegree below $k$. These vertices must be removed from $\bar{S}$ (Line 5). Prior to removal, we ensure fairness preservation by reversing any remaining reversible hyperpaths between the vertex and its neighbors in $\bar{S}$ (Lines 18–19), thereby eliminating potential violations upon removal. Subsequently, we also check and reverse hyperpaths from the vertex to external vertices (Lines 20–21). If its indegree remains below $k$ after all such operations, the vertex is permanently removed from $\bar{S}$ (Line 22).

\begin{algorithm}[t!]
\SetKwData{Left}{left}\SetKwData{This}{this}\SetKwData{Up}{up} \SetKwFunction{Union}{Union}
\SetKwFunction{FReachOut}{REACHOUT}
\SetKwProg{Fu}{Function}{:}{}

\SetKwFunction{FReachIn}{REACHIN}
\SetKwProg{Fun}{Function}{:}{}
\SetKwFunction{Procedure}{OUT}\SetKwProg{Pr}{Procedure}{:}{}
 \SetKwInput{Input}{Input}   % 小标题式，不会拉长空格
\SetKwInput{Output}{Output}

\small
\caption{$\kw{DSM\text{-}ALL}(\mathcal{H}, k, \delta)$}
\label{dshs4} 
\Input{a hypergraph $\mathcal{H}$, two non-negative integer $k$, $\delta$.} 
\Output{\emph{$(k,\delta)$-dense} subhypergraph $D_{k, \delta}$ of $\mathcal{H}$}
	
        Arbitrarily obtain a $\delta$-orientation $\mathop{\mathcal{H}}\limits ^{\rightarrow}$ of $\mathcal{H}$\;
        $\overline{S}$ $\leftarrow$ \{$u \in V|\vec{d}_u(\mathop{\mathcal{H}}\limits ^{\rightarrow}) \geq k$\}\;
        \textbf{for} {each $u \in \overline{S}$} \textbf{do} REACHOUT($u, k$)\;
        \While{True}
 	{
	      \textbf{if} {$\exists$ $u \in \overline{S}$ with ${\vec{d}}_u \textless k$} \textbf{then} {OUT($u, k$)}\;
	      \textbf{else} \textbf{break}\;   
 	}
	$D_{k, \delta}$ $\leftarrow$\ $\overline{S}$ $\cup$ \{$v|v$ can reach a vertex in $\overline{S}$\}\;
	\textbf{return} $D_{k, \delta}$\; 
	
	 \Fu{\FReachOut{$u, k$}}
        {
                 \If{$\exists$ a reversible hyperpath $s \rightsquigarrow u$ with $s \in V \setminus \overline{S}$}
               {
                       reverse the hyperpath $s$ $\rightsquigarrow$ $u$\;
                       \textbf{if} {${\vec{d}}_s \geq k$} \textbf{then} $\overline{S} \leftarrow \overline{S} \cup s$\; 
                       %REACHOUT($s, k$)\;
               }
        }
         \Fun{\FReachIn{$u, k$}}
        {
                 \If{$\exists$ a reversible hyperpath $u \rightsquigarrow s$ with $s \in V \setminus \overline{S}$}
               {
                       reverse the hyperpath $u$ $\rightsquigarrow$ $s$\;
                       %REACHIN($s, k$)\;
               }
        }
        
         \Pr{\Procedure{$u, k$}}
         {
                 initindegree $\leftarrow$ $\vec{d}_u$\;
                  \If{$\exists$ reversible $s \rightsquigarrow u$/$u \rightsquigarrow s$, $s \in \overline{S} \cap N(u)$}
               {
                       reverse the hyperpath\;
               }
               \textbf{if} {initindegree $\textgreater \vec{d}_u$} \textbf{then} {REACHIN($u,k$)}\;
               \textbf{else} {initindegree $\textless \vec{d}_u$} \textbf{then} {REACHOUT($u,k$)}\;
                \textbf{if} {$\vec{d}_u \textless k$} \textbf{then} $\overline{S} \leftarrow \overline{S} \setminus u$\; 
                 
         }
	
\end{algorithm}

\begin{theorem}[Correctness of Algorithm $\kw{DSM\text{-}ALL}$]
	\label{comflowcorr}
	Algorithm $\kw{DSM\text{-}ALL}$ correctly outputs $D_{k,\delta}$
\end{theorem}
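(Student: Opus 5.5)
The proof establishes an invariant at termination and then transfers the uniqueness argument behind Theorem~\ref{uniqueness} to the possibly non-egalitarian orientation the algorithm produces. The invariant, call it $(\star)$, is that when the while loop exits the final orientation $\mathop{\mathcal{H}}\limits ^{\rightarrow}$ and set $\overline{S}$ satisfy three conditions:
\begin{itemize}
\item[(a)] every $u\in\overline{S}$ has $\vec{d}_u\ge k$;
\item[(b)] every $v\notin\overline{S}$ has $\vec{d}_v\le k-1$;
\item[(c)] no reversible hyperpath goes from a vertex of $V\setminus\overline{S}$ to a vertex of $\overline{S}$.
\end{itemize}
Condition (a) holds because the loop only exits when no $u\in\overline{S}$ has $\vec{d}_u<k$. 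Condition (b) holds because $\overline{S}$ starts as all vertices of indegree $\ge k$. A vertex outside $\overline{S}$ gains indegree only as the source $s$ of a reversal in REACHOUT, and it is then added once it reaches $k$. A vertex is evicted by OUT only when its indegree is below $k$.

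Condition (c) is the main obstacle, and I would prove it by maintaining a stronger invariant through the run. The invariant is that no reversible hyperpath exists from $V\setminus\overline{S}$ into any vertex $u\in\overline{S}$ that has already been processed. REACHOUT is applied to every initial member, and again, via OUT, whenever a member's indegree rises. Reversing a path $s\rightsquigarrow u$ changes indegrees only at $s$ and $u$, and it flips arcs along the path. I would show, using a reachability and exchange argument on the reversed path, that it cannot create a new reversible path from outside into a different member $w$. If such a path $P$ existed and used a flipped arc, then $P$ could be spliced with the old path to give a reversible path from outside to $u$ or $w$ that existed before the reversal. That path would have been removed earlier, which is a contradiction.

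Evictions are the delicate case. Before $u$ leaves $\overline{S}$, OUT reverses reversible paths between $u$ and its neighbours in $\overline{S}$, and REACHIN pushes indegree from $u$ outward. After these steps $u$ has no reversible path to any remaining member $w$, since $\vec{d}_w-\vec{d}_u\le1$ along all paths. A vertex $v$ outside that reached $u$ and then continued to $w$ would give a reversible path $v\rightsquigarrow w$ through the old boundary, which the invariant already excludes. I expect this splicing step to be fragile: the paper's OUT only examines neighbours, so I may need to argue with the indegree bound $\vec{d}_w\ge k>\vec{d}_u$. Termination follows the counting in Theorem~\ref{comdshs}: each reversal lowers $\sum_{x\in\overline{S}}\vec{d}_x$ or moves a vertex permanently, and this quantity is bounded by $m\delta$.

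Once $(\star)$ holds, let $D=\overline{S}\cup\{v\mid v\text{ reaches }\overline{S}\}$ be the output. Any vertex $v$ that reaches $\overline{S}$ has $\vec{d}_v\ge k-1$, since otherwise the path would be reversible by (a), contradicting (c). No arc leaves $V\setminus D$ into $D$ in the reach sense, so every crossing hyperedge is oriented out of $D$. This gives exactly the inequalities of Theorem~\ref{in-outsidedense} for $D$: every nonempty $X\subseteq D$ has total indegree $>(k-1)|X|$, and every $Y\subseteq V\setminus D$ has total indegree $\le(k-1)|Y|$. The total indegree of $D$ equals the number of hyperedge heads inside $D$, which is orientation-independent once the crossing edges point out. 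Comparing with a true egalitarian orientation, as in the proof of Theorem~\ref{uniqueness}, the symmetric difference between $D$ and $D_{k,\delta}$ would have to satisfy both the strict lower bound and the upper bound simultaneously. That is impossible, so $D=D_{k,\delta}$.
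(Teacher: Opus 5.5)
Your plan follows the paper's proof, which only asserts that at termination every vertex of $\overline{S}$ has $\vec d\ge k$ and that no reversible hyperpath enters $\overline{S}$ from outside. However, the two steps that carry the weight both fail as you state them.

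\textbf{Invariant (c).} Grant your splice: before reversing $P\colon s\rightsquigarrow u$, the source $v$ of a new reversible path $v\rightsquigarrow w$ reached $u$, and $s$ reached $w$. The hypothesis $\vec d_w-\vec d_v\ge 2$ then contradicts the invariant only if $\vec d_u<\vec d_w$, because in that case $s\rightsquigarrow w$ was already reversible. If $\vec d_u\ge\vec d_w$, you only learn that $v\rightsquigarrow u$ was reversible. That is permitted, since $u$ is the vertex currently being processed.

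This case does occur. Take $k=3$ and $\delta=1$, and write $a\to b$ for a $2$-edge with head $b$. Let $\vec d_w=3$ and $\vec d_u=4$, with outside vertices $s,p,v$ having $\vec d_s=\vec d_p=2$ and $\vec d_v=1$. The arcs are $s\to p\to u$, $v\to p$, $s\to w$, plus arcs inside or leaving a block of members that supplies the remaining indegree and that no outside vertex reaches (e.g.\ a regular tournament on $7$ vertices). Initially no reversible path enters $w$. But once \textsf{REACHOUT}$(u)$ reverses $s\to p\to u$ (an admissible choice), $v\to p\to s\to w$ is reversible into the already processed $w$.

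The pseudo-code has further problems. \textsf{REACHOUT} performs a single reversal per call, so a member starting at $\vec d_u\ge k+2$ can keep reversible in-paths, while the main loop only reacts to $\vec d_u<k$. And \textsf{OUT} inspects only neighbours before evicting a vertex. To obtain (c) you must add and actually use rules such as these:
\begin{itemize}
\item always reverse from a source of minimum indegree among outside vertices reaching $u$, so that $\vec d_s\le\vec d_v$ and only the good case arises;
\item repeat \textsf{REACHOUT} until no reversible in-path remains;
\item clear all reversible paths from an evicted vertex into $\overline S$, not just those to its neighbours.
\end{itemize}
For termination, use $\sum_v(\vec d_v)^2$, which drops by at least $2$ with each reversal. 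Your potential $\sum_{x\in\overline S}\vec d_x$ does not work, since \textsf{REACHIN} increases it.

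\textbf{Transfer to $D_{k,\delta}$.} The bound you import, $\sum_{x\in X}\vec d_x>(k-1)|X|$ for every nonempty $X\subseteq D$, is false. You derived just before that every $v\in D\setminus\overline S$ has $\vec d_v=k-1$, so $X=\{v\}$ violates it; so does $u_3$ in the paper's own example with $k=1$. Also, the lower bound on $D$ and the upper bound on $V\setminus D_{k,\delta}$ refer to different orientations, and you never compare them on the symmetric difference.

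Here is an argument that works. Let $x$ be the final orientation, $y$ an egalitarian one, and $D^*=D_{k,\delta}$. Write $z(X)$ for the total indegree of $X$ under an orientation $z$. Put $g(X)=\sum_e\max\{0,\min(\delta,|e|)-|e\setminus X|\}$, the orientation-free head count you allude to. Then:
\begin{itemize}
\item $z(X)\ge g(X)$ for every $\delta$-orientation $z$, with equality iff no hyperedge has a tail outside $X$ and a head in $X$;
\item $g$ is supermodular;
\item $x(D)=g(D)$ and $y(D^*)=g(D^*)$.
\end{itemize}
Hence $y(D\cup D^*)+x(D\cap D^*)\ge g(D\cup D^*)+g(D\cap D^*)\ge x(D)+y(D^*)$, i.e.\ $y(A)\ge x(A)$ for $A=D\setminus D^*$. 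Combined with $(k-1)|A|\ge y(A)$ and $x(A)\ge(k-1)|A|$, this forces every slack to vanish. So $x=k-1$ on $A$, and $x(D\cap D^*)=g(D\cap D^*)$, meaning no hyperpath under $x$ enters $D\cap D^*$ from outside. Any $v\in A$ therefore lies in $D\setminus\overline S$ and reaches some $u\in\overline S$ along a path avoiding $D\cap D^*$. Thus $u\in A$, contradicting $x(u)\ge k$. The symmetric argument gives $D^*\subseteq D$, since an egalitarian orientation satisfies (a)--(c) with $\overline S=\{\vec d\ge k\}$.

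The fact that every vertex of $D\setminus\overline S$ reaches $\overline S$ is exactly what your strict-inequality shortcut skips, and the proof cannot do without it.
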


\begin{proof}
It ensures that all vertices in the output $D_{k,\delta}$ satisfy the indegree constraint $\vec{d}_v \ge k$ and eliminates all reversible hyperpaths between qualifying and non-qualifying vertices. The algorithm iteratively prunes any vertex that violates these conditions and terminates when no further reversals are possible. Thus, the output satisfies the definition of a $(k,\delta)$-dense subhypergraph. 
\end{proof}

\begin{theorem}[Complexity of Algorithm $\kw{DSM\text{-}ALL}$]
	\label{dsd++}
 The time and space complexity are $O(n  m  \delta)$ and $O(n + m)$.
\end{theorem}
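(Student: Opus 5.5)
The proof splits the running time into two parts: the number of hyperpath searches the algorithm triggers, and the cost of one search. Each search (inside REACHOUT, REACHIN, or the neighbourhood check in OUT) is a BFS over the directed hypergraph stored as incidence lists. Scanning a directed hyperedge $\vec{e}=(T,H)$ touches at most $|T|+|H|$ incidences. With the usual device of marking a hyperedge as ``expanded'' the first time BFS enters it through some tail vertex, each hyperedge is expanded at most once per search. So one search costs $O(n+m\delta)$, or $O(m\delta)$ after discarding isolated vertices; here we charge the $|H|\le\delta$ head incidences, and traversal from a tail vertex is handled by the incidence list of that vertex. Reversing a found hyperpath costs at most its length, which is $O(n)$. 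I would fix this per-operation bound first, since the rest is counting.

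For the counting, I would reuse the potential argument from the proof of Theorem~\ref{comdshs}, now organised around vertices rather than global indegree.

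\textbf{Step 1: the initial REACHOUT phase (Line 3).} It calls REACHOUT once for each $u\in\overline{S}$, and each call performs at most one search. This gives at most $n$ searches, hence $O(nm\delta)$.

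\textbf{Step 2: the while loop.} I would show that each vertex is passed to OUT only $O(1)$ times. The reason is that once a vertex $u$ is removed at Line 22, it has indegree $<k$, and (after REACHIN/REACHOUT) it has no reversible hyperpath to or from the current $\overline{S}$. I would argue that such a $u$ is never re-inserted. Re-insertion happens only inside REACHOUT through a reversal $s\rightsquigarrow u'$ that raises $\vec{d}_s$ to $k$. Such a reversal requires $\vec{d}_{u'}-\vec{d}_s\ge 2$, and I would argue that removals preserve the invariant that no reversible path enters $\overline{S}$ from outside. Each OUT call performs a constant number of searches: one in the $N(u)$ check and one in REACHIN or REACHOUT. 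So the while loop performs $O(n)$ searches in total, again $O(nm\delta)$.

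Finding a violating $u$ at Line 5 is done with a queue of vertices whose indegree dropped below $k$. A drop happens only at the endpoint $t$ of a reversal, so the queue is maintained in $O(1)$ per reversal.

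\textbf{Step 3: the final step and space.} Line 7 computes the set of vertices that can reach $\overline{S}$ with one backward BFS from $\overline{S}$, costing $O(m\delta)$. For space, the algorithm stores only the orientation (heads and tails per hyperedge), the indegree array, the membership flags for $\overline{S}$, and the BFS queue and parent pointers. The orientation is stored implicitly against the input incidence lists, so no auxiliary network is built, unlike in $\kw{DSM\text{-}FLOW}$. This gives the claimed $O(n+m)$ working space beyond the input.

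\textbf{Main obstacle.} The hard part is Step 2's claim that a vertex cannot oscillate in and out of $\overline{S}$ many times. If this monotonicity fails, I would fall back on a weaker amortisation. Every reversal inside the loop decreases $\vec{d}_t$ for some $t$ with $\vec{d}_t\ge k$ at the time, and increases $\vec{d}_s$ for some $s$ below the threshold. I would then bound the total number of reversals per vertex by charging each one to an OUT call of a vertex that has not been charged before, which still yields $O(n)$ searches overall.
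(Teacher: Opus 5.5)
Your overall route is the paper's: $O(n)$ hyperpath searches at $O(m\delta)$ each. The paper simply asserts that each vertex triggers \kw{REACHOUT} and \kw{OUT} a bounded number of times. You rightly isolate the number of \kw{OUT} calls as the crux, but Step~2 does not establish it.

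First, the invariant you rely on (no reversible hyperpath enters $\overline{S}$ from outside) is not maintained. \kw{REACHOUT}, \kw{REACHIN} and the neighbour check in \kw{OUT} each reverse at most one hyperpath per call, and Line~3 calls \kw{REACHOUT} once per vertex. So a member reached from outside by two hyperedge-disjoint reversible hyperpaths (say indegree $k+3$, sources of indegree $0$) keeps one of them.

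More seriously, even granting that removed vertices never return, this does not bound the number of \kw{OUT} calls, because \kw{OUT} need not remove its argument. Suppose $\vec{d}_u=k-1$, the neighbour check reverses $u\rightsquigarrow s$ with $\vec{d}_s\ge k+1$, and \kw{REACHOUT} finds nothing. Then $u$ leaves \kw{OUT} with $\vec{d}_u=k$, still in $\overline{S}$, and a later reversal can push it back to $k-1$ so that it is processed again. Your argument does not say how often this happens.

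The fallback does not repair this. Its premise is false: a \kw{REACHIN} reversal lowers a vertex outside $\overline{S}$, whose indegree is below $k$, and a neighbour-check reversal $s\rightsquigarrow u$ lowers the violator $u$ itself. Even the \kw{DSM\text{-}PATH} potential would only bound the reversals by $m\delta$, giving $O((n+m\delta)m\delta)$ rather than $O(nm\delta)$.

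The missing idea is a global, not per-vertex, count. Inside the loop, $\kw{REACHOUT}(u,k)$ runs only after the neighbour check has raised $\vec{d}_u$ by one from below $k$. So $\vec{d}_u\le k$, and any reversible $s\rightsquigarrow u$ has $\vec{d}_s\le k-2$. The insertion at Line~12 therefore never fires in the loop. This, not your invariant, is why removed vertices never return: $\overline{S}$ only shrinks, and there are at most $n$ removals.

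A reversal changes only the indegrees of its two endpoints. Hence the only reversal in the loop that lowers a member of $\overline{S}$ other than the vertex being processed is a neighbour-check reversal $u\rightsquigarrow s$ inside $\kw{OUT}(u)$. It can turn a non-violating $s$ into a violator only if $\vec{d}_s=k$, which forces $\vec{d}_u\le k-2$ beforehand. Then $u$ ends $\kw{OUT}(u)$ with $\vec{d}_u\le k-1$ and is removed.

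Every \kw{OUT} call either removes its argument or leaves it with indegree at least $k$. So each call beyond the at most $n$ violators present when the loop starts is paid for by a distinct removal. That gives at most $2n$ calls and $O(n)$ searches in total.

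Separately, your $O(m\delta)$ per-search bound charges only head incidences. Scanning the incidence lists of visited tail vertices costs up to $\sum_e|e|=m\bar d_e$. So that bound is the paper's convention (also used for \kw{DSM\text{-}PATH}), not something your accounting delivers when hyperedges are much larger than $\delta$.
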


\begin{proof}
The \kw{DSM{-}ALL} algorithm computes the $(k,\delta)$-dense subhypergraph by iteratively repairing and pruning vertices based on reversible hyperpaths over a $\delta$-orientation of the input hypergraph $\mathcal{H}=(V,E)$. In the worst case, each vertex may trigger both \kw{REACHOUT} and \kw{OUT} operations, each involving hyperpath reversals with cost proportional to the number of incident hyperedges. As a result, the total time complexity is $O(n  m \delta)$. The algorithm requires only linear space to store the oriented hypergraph and auxiliary metadata, leading to a space complexity of $O(n + m)$.  
\end{proof}

\section{DENSITY DECOMPOSITION}
To efficiently solve the decomposition problem introduced in Section 2, we develop two algorithms for computing all non-empty $(k,\delta)$-dense subhypergraphs: a basic iterative version, \kw{DSD}, and an enhanced divide-and-conquer variant, \kw{DSD+}. The baseline \kw{DSD} algorithm incrementally extracts $D_{1,\delta}, D_{2,\delta}, \dots, D_{k_{\max},\delta}$ by repeatedly invoking a subroutine that identifies a single $(k,\delta)$-dense subhypergraph. However, this iterative process introduces redundancy, as many intermediate results are recomputed multiple times. To overcome this inefficiency, \kw{DSD+} leverages the hierarchical structure of dense subhypergraphs to decompose the problem recursively and reuse partial results across subproblems, thus substantially improving efficiency without compromising completeness.

%To efficiently solve the decomposition problem introduced in Section 2, we develop two algorithms for computing all non-empty $(k,\delta)$-dense subhypergraphs: a basic iterative version, \kw{DSD}, and an enhanced divide-and-conquer variant, \kw{DSD+}. The baseline \kw{DSD} algorithm incrementally extracts $D_{1,\delta}, D_{2,\delta}, \dots, D_{k_{\max},\delta}$ by repeatedly invoking a subroutine that identifies a single $(k,\delta)$-dense subhypergraph. However, this iterative process introduces redundancy, as many intermediate results are recomputed multiple times. \kw{DSD+} exploits the hierarchical structure of dense subhypergraphs to decompose the problem recursively and reuse intermediate results, greatly improving efficiency while preserving completeness.

%We propose two algorithms for computing all non-empty $(k,\delta)$-dense subhypergraphs: a basic iterative algorithm \kw{DSD}, and an enhanced version \kw{DSD+} that leverages divide-and-conquer. \kw{DSD} incrementally extracts $D_{1,\delta}, D_{2,\delta}, \dots, D_{k_{\max},\delta}$ by repeatedly invoking a subroutine for identifying a single $(k,\delta)$-dense subhypergraph. However, this process incurs substantial redundancy, as many intermediate results are repeatedly recomputed. To address this, \kw{DSD+} exploits the hierarchical structure of dense subhypergraphs to reduce overlap: it computes intermediate layers recursively and reuses results across subproblems, thereby improving efficiency without sacrificing completeness. 

\subsection{The Basic Decomposition Algorithm: $\kw{DSD}$}

%Using the ($k, \delta$)-dense subhypergraph mining algorithm $\kw{DSM\text{-}}$ $\kw{ALL}$, the density subhypergraph decomposition can be computed layer by layer. Based on this idea, we propose the $\kw{DSD}$ algorithm as shown in Algorithm \ref{ds}. Specifically, $\kw{DSD}$ enumerates all combinations of $\delta$ and $k$ through two nested loops (Lines 1-3). After fixing $\delta$ and $k$, the algorithm  invokes $\kw{DSM\text{-}ALL}$($\mathop{\mathcal{H}}\limits ^{\rightarrow}$, $k, \delta$) (Line 4). 

Using the $(k,\delta)$-dense subhypergraph mining algorithm $\kw{DSM\text{-}ALL}$, the density subhypergraph decomposition can be computed in a layer-by-layer manner. Based on this observation, we propose the $\kw{DSD}$ algorithm, as shown in Algorithm~\ref{ds}. Specifically, $\kw{DSD}$ enumerates all combinations of $\delta$ and $k$ through two nested loops (Lines1–3), thereby exploring different density levels and hyperedge contribution bounds. After fixing $\delta$ and $k$, the algorithm invokes $\kw{DSM\text{-}ALL}(\mathop{\mathcal{H}}\limits^{\rightarrow}, k, \delta)$ (Line4) to extract the corresponding $(k,\delta)$-dense subhypergraph.

The correctness of $\kw{DSD}$ follows directly from $\kw{DSM\text{-}ALL}$ and Theorems~\ref{uniqueness} and~\ref{hierarchical}, so we omit the proof.

\begin{algorithm}[t!]
\SetKwData{Left}{left}\SetKwData{This}{this}\SetKwData{Up}{up} \SetKwFunction{Union}{Union}
\SetKwFunction{Function}{DIVIDE}
\SetKwProg{Fu}{Function}{:}{}
\SetKwInput{Input}{Input}   % 小标题式，不会拉长空格
\SetKwInput{Output}{Output}

\small
\caption{$\kw{DSD}(\mathcal{H})$}
\label{ds} 
\Input{a hypergraph $\mathcal{H} =(V, E)$.} 
\Output{all non-empty $D_{k, \delta}$ of $\mathcal{H}$}
%\begin{small}
 \For{$\delta = 1, 2, ...$}
 {
        Arbitrarily obtain an $\delta$-orientation $\mathop{\mathcal{H}}\limits ^{\rightarrow}$ of $\mathcal{H}$\;
        \For{$k = 1, 2, 3, ...$}
        {
              \kw{DSM\text{-}ALL}($\mathop{\mathcal{H}}\limits ^{\rightarrow}, k, \delta$)\;
              \textbf{if} {$D_{k, \delta} = \emptyset$} \textbf{then break}\;

        }
        \textbf{return} $\mathcal{D} = \{D_{k, \delta}\}$;
 }
\end{algorithm}

\begin{theorem}[Complexity of Algorithm $\kw{DSD}$]
	\label{dsd++}
 The time and space complexity are $O(n m  \delta \cdot d^E_{max} \cdot  k_{max} )$ and $O(n + m)$.
\end{theorem}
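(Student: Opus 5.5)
The plan is to charge the cost of Algorithm~\ref{ds} to its two nested loops and multiply by the per-call cost given by Theorem~\ref{dsd++} for $\kw{DSM\text{-}ALL}$. The steps are: bound the number of outer iterations, bound the number of inner iterations for each fixed $\delta$, and then bound the cost of one inner iteration.

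\textbf{Outer loop.} By Definition~\ref{deltapr}, $\delta$ ranges over $[1,d^E_{max}]$. Moreover, for every $\delta\ge d^E_{max}$ each hyperedge is fully targeted, since $|H|=\min\{\delta,|e|\}=|e|$. All such $\delta$ therefore induce the same orientations and the same decomposition. So the outer loop runs at most $d^E_{max}$ times. Obtaining an arbitrary $\delta$-orientation (Line 2) costs $O(\sum_e |e|)$, which is dominated by later terms.

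\textbf{Inner loop.} For fixed $\delta$, the loop over $k$ stops at the first $k$ with $D_{k,\delta}=\emptyset$. By Theorem~\ref{hierarchical}, the sets $D_{k,\delta}$ are nested and decreasing in $k$, so once one is empty all larger ones are empty. The loop thus performs at most $k_{max}+1$ iterations, where $k_{max}$ is the largest $k$ with a nonempty $D_{k,\delta}$ over all explored $\delta$. A crude bound is also available: every vertex has indegree at most $d_u\le m$, so $k_{max}\le m$ and the loop is finite.

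\textbf{Per-iteration cost.} Each inner iteration calls $\kw{DSM\text{-}ALL}(\mathop{\mathcal{H}}\limits^{\rightarrow},k,\delta)$, which costs $O(nm\delta)$ time by Theorem~\ref{dsd++}. The emptiness check is $O(n)$. Summing gives
\[
O\!\left(d^E_{max}\cdot k_{max}\cdot nm\delta\right).
\]
Here $\delta$ inside the product is bounded by $d^E_{max}$. If one instead wants a sum over $\delta$, it gives $O(nm\, k_{max}\,(d^E_{max})^2)$; the stated form keeps $\delta$ as the current cap, as the paper does, so I would present it that way.

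\textbf{Space.} Only one oriented hypergraph and the auxiliary state of $\kw{DSM\text{-}ALL}$ are kept at a time, so space is $O(n+m)$ by Theorem~\ref{dsd++}. The reported family $\mathcal{D}$ can be stored implicitly as an IDN value per vertex (Definition~\ref{idn}) for each $\delta$. This is possible because the layers are nested.

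\textbf{Main obstacle.} The delicate point is justifying that the $\delta$ loop terminates after $d^E_{max}$ rounds and that $k_{max}$ is well defined across $\delta$. I would handle this through the saturation argument $\min\{\delta,|e|\}=|e|$ together with the hierarchy theorem, rather than through any finer amortization.
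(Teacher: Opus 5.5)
Your proposal is correct and uses the same accounting as the paper: at most $d^E_{max}$ values of $\delta$, at most $k_{max}$ (plus one) layers per $\delta$, and $O(nm\delta)$ work per layer, which the paper phrases as $O(n)$ calls to \kw{OUT} at $O(m\delta)$ each, i.e., exactly the \kw{DSM\text{-}ALL} bound you invoke. Your side remark is fair: a literal sum over $\delta$ would give $O(nm\,k_{max}(d^E_{max})^2)$, and the paper's statement glosses over this by leaving $\delta$ as a free parameter in the bound.
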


\begin{proof}
 The \kw{DSD} algorithm performs a hierarchical $(k,\delta)$-dense decomposition by iteratively applying vertex pruning based on indegree constraints across increasing values of $k$ and $\delta$. For each $\delta$-orientation (constructed in $O(m  \delta)$ time), the algorithm executes a peeling process up to $k_{\max}$ layers, where each layer may invoke up to $O(n)$ calls to the \kw{OUT} function, with worst-case cost $O(m \delta)$ per call. This results in a total time complexity of $O(n m  \delta \cdot d^E_{max} \cdot k_{max} )$. The space complexity remains linear at $O(n + m)$, as only the oriented hypergraph and auxiliary vertex states are maintained. 
\end{proof}

%The correctness of $\kw{DSD}$ can be directly derived from the correctness of $\kw{DSM\text{-}\\ALL}$, Theorem \ref{uniqueness}, and Theorem \ref{hierarchical}, thus we omit its proof. 

\subsection{The Improved Decomposition Algorithm: $\kw{DSD+}$}
$\kw{DSD}$ performs layer-by-layer decomposition in a straightforward but computation-intensive manner.
It sequentially computes $D_{1,\delta},\\ D_{2,\delta}, \dots, D_{k_{\max},\delta}$, but due to the nested nature of these subhypergraphs, many computations are redundantly repeated across layers.
As $k$ increases, this redundancy accumulates and leads to considerable inefficiency.
To address this, we propose $\kw{DSD{+}}$, a divide-and-conquer variant that reduces redundant computation through the reuse of intermediate results.
Instead of processing all layers sequentially, $\kw{DSD{+}}$ recursively partitions the density range.
Given a lower bound $D_{k_l,\delta}$ and an upper bound $D_{k_u,\delta}$, it selects a midpoint $k_m$, computes $D_{k_m,\delta}$ and $D_{k_m+1,\delta}$ using $\kw{DSM\text{-}ALL}$, and then recursively processes the subintervals $(k_l, k_m)$ and $(k_m{+}1, k_u)$ within the corresponding subhypergraph differences.
This procedure leverages the hierarchical property $D_{k{+}1,\delta} \subseteq D_{k,\delta}$, ensuring that recursion only explores unexplored regions without redundancy. 

%$\kw{DSD}$ performs layer-by-layer decomposition in a straightforward but computation-intensive manner. It sequentially computes $D_{1,\delta}, D_{2,\delta}, \dots, D_{k_{max},\delta}$, but due to the nested nature of these subhypergraphs, many computations are repeatedly performed across layers. As $k$ increases, this redundancy accumulates, resulting in considerable inefficiency.
%To address this, we propose $\kw{DSD{+}}$, a divide-and-conquer variant that reduces redundant computation by reusing intermediate results. Instead of processing all layers sequentially, $\kw{DSD+}$ recursively partitions the range. Given a lower bound $D_{k_l,\delta}$ and an upper bound $D_{k_u,\delta}$, it selects a midpoint $k_m$, computes $D_{k_m,\delta}$ and $D_{k_m+1,\delta}$ using $\kw{DSM\text{-}ALL}$, and then recursively processes the intervals $(k_l, k_m)$ and $(k_m+1, k_u)$ within the corresponding subhypergraph differences. This is enabled by the hierarchical property of $(k,\delta)$-dense subhypergraphs, where $D_{k,\delta} \supseteq D_{k+1,\delta}$, ensuring that recursion only explores unexplored regions without redundancy. 

The full decomposition for a given $\delta$ is obtained by invoking $\kw{DIVIDE}(D_{1,\delta}, D_{k_{\max},\delta})$, where $D_{1,\delta} = V$ and $D_{k_{\max},\delta}$ is the deepest non-empty layer, which can be found via binary search. This process efficiently recovers all non-empty layers $D_{1,\delta}, D_{2,\delta}, \dots$
Then, we develop the $\kw{DSD+}$ algorithm (Algorithm \ref{ds+}). First, for any values of $\delta$, $k_{max}$ can be determined via binary search (Line 3). Then, the algorithm computes \kw{DIVIDE}($D_{k_{max}, \delta}$, $D_{1, \delta}$), and calls \kw{DIVIDE} to compute dense subhypergraphs (Line 4). When invoking \kw{DIVIDE}, the algorithm first checks whether the recursion termination condition is reached (Line 7). If not, it proceeds to compute $D_{k_m, \delta}$, and then continues the recursive decomposition (Lines 9–10).

\begin{algorithm}[t!]
\SetKwData{Left}{left}\SetKwData{This}{this}\SetKwData{Up}{up} \SetKwFunction{Union}{Union}
\SetKwFunction{Function}{DIVIDE}
\SetKwProg{Fu}{Function}{:}{}
\SetKwInput{Input}{Input}   % 小标题式，不会拉长空格
\SetKwInput{Output}{Output}
\small

\caption{$\kw{DSD+}(\mathcal{H})$}
\label{ds+} 
\Input{a hypergraph $\mathcal{H} =(V, E)$.} 
\Output{all non-empty $D_{k, \delta}$ of $\mathcal{H}$}
  \For{$\delta =1, 2, 3, ...$}
  {
        Arbitrarily obtain an $\delta$-orientation $\mathop{\mathcal{H}}\limits ^{\rightarrow}$ of $\mathcal{H}$,
         $D_{1,\delta} \leftarrow V$\;
        $k_{max} \leftarrow$ the maximum integral such that $D_{k_{max}, \delta} \neq \emptyset$\;
        \kw{DIVIDE} ($D_{k_{max}, \delta}, D_{1, \delta}$)\;
        
   }
  % \textbf{return} ($\mathop{\mathcal{H}}\limits ^{\rightarrow}, D_{k, \delta}$)\; 
   \textbf{return} $\mathcal{D} = \{D_{k, \delta}\}$\;
       \Fu{\Function{$D_{k_u, \delta}, D_{k_l, \delta}$}}
        {
            % \textbf{if} {$k_u-k_l \leq 1$ or $D_{k_u, \delta} = D_{k_l, \delta}$} \textbf{then}
              \If{$k_u-k_l \leq 1$ or $D_{k_u, \delta} = D_{k_l, \delta}$}
              {
                    \textbf{return}\;
              }

                    $k_m \leftarrow (k_u+k_l+1)/2$,
                    \kw{DSM\text{-}ALL}($\mathop{\mathcal{H}}\limits ^{\rightarrow}, k_m, \delta$)\;
                    %FLOW*($\mathop{\mathcal{H}}\limits ^{\rightarrow}, k_m+1, \delta$)\;
                   \kw{DIVIDE} ($D_{k_u, \delta}, D_{k_m, \delta}$),
                   \kw{DIVIDE} ($D_{k_m, \delta}, D_{k_l, \delta}$)\;
        }
\end{algorithm}

\begin{theorem}[Correctness of Algorithm $\kw{DSD+}$]
Given a hypergraph $\mathcal{H} = (V, E)$, Algorithm $\kw{DSD+}$ correctly computes all non-empty $(k, \delta)$-dense subhypergraphs $D_{k,\delta}$, and each such subhypergraph is computed exactly once.
\end{theorem}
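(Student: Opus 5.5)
The proof is an induction on the recursion tree of \kw{DIVIDE}. It rests on three facts already established: the uniqueness of $D_{k,\delta}$ (Theorem~\ref{uniqueness}), the nesting $D_{k^+,\delta}\subseteq D_{k,\delta}$ (Theorem~\ref{hierarchical}), and the correctness of \kw{DSM\text{-}ALL} for any fixed $k$.

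The argument fixes $\delta$ and proves the following invariant. Every call $\kw{DIVIDE}(D_{k_u,\delta},D_{k_l,\delta})$ with $k_l<k_u$ is invoked with correct boundary subhypergraphs. On return, every $D_{k,\delta}$ with $k_l<k<k_u$ has been computed exactly once or is provably equal to one of the two boundaries.

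\emph{Base cases.} The top-level boundaries are correct because $D_{1,\delta}=V$: every vertex either has indegree $\ge 1$ or can reach one, or else $D_{1,\delta}=V$ by convention of the algorithm. $D_{k_{\max},\delta}$ is correct because binary search over $k$ is valid by monotonicity (Theorem~\ref{hierarchical}). For the recursion itself:
\begin{itemize}
\item If $k_u-k_l\le 1$, there is no intermediate index, so there is nothing to compute.
\item If $D_{k_u,\delta}=D_{k_l,\delta}$, nesting forces $D_{k_u,\delta}\subseteq D_{k,\delta}\subseteq D_{k_l,\delta}$ for every $k_l<k<k_u$, so all intermediate layers coincide with the boundaries and need no computation.
\end{itemize}

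\emph{Inductive step.} Let $k_m=\lfloor (k_u+k_l+1)/2\rfloor$, so that $k_l<k_m<k_u$. By uniqueness, the set returned by \kw{DSM\text{-}ALL} at level $k_m$ equals $D_{k_m,\delta}$, regardless of which egalitarian orientation the shared $\mathop{\mathcal{H}}\limits^{\rightarrow}$ has reached so far. The two recursive calls therefore receive correct boundaries. Their open index ranges $(k_l,k_m)$ and $(k_m,k_u)$ are disjoint, and together with $\{k_m\}$ they partition $(k_l,k_u)$. By the inductive hypothesis, every index is handled, and only once: $k_m$ is computed only at this node and never appears strictly inside a descendant's interval. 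The recursion terminates because the interval length strictly decreases. Applying the invariant to the top call, and ranging over all $\delta$, gives every non-empty $D_{k,\delta}$ exactly once. The empty layers with $k>k_{\max}$ are excluded by nesting.

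The main obstacle is justifying that \kw{DSM\text{-}ALL} can be run on the \emph{shared, already partially reoriented} hypergraph $\mathop{\mathcal{H}}\limits^{\rightarrow}$ left by previous calls, possibly restricted to the difference region. The approach is to note that reversals preserve the $\delta$-orientation property, so each call starts from some valid $\delta$-orientation. Correctness of \kw{DSM\text{-}ALL} plus Theorem~\ref{uniqueness} then yields $D_{k_m,\delta}$ independently of the starting orientation. For the restriction to the region between the boundaries, I would invoke Lemma~\ref{degree}(1): hyperedges crossing out of $D_{k_u,\delta}$ point outward, so vertices inside it remain reachable-to-$S$ and its membership is fixed. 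Vertices outside $D_{k_l,\delta}$ have indegree at most $k_l-1<k_m$ and cannot reach its interior. Hence computing within $D_{k_l,\delta}\setminus D_{k_u,\delta}$, and adding $D_{k_u,\delta}$ back, gives the same result.
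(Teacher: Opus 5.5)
Your core argument is correct and is the paper's own proof made precise. The paper uses nestedness via Theorem~\ref{hierarchical} plus the claim that the recursion reaches every index; you turn this into an induction in which nesting disposes of the early exit $D_{k_u,\delta}=D_{k_l,\delta}$, and the partition of $(k_l,k_u)$ into $(k_l,k_m)$, $\{k_m\}$, $(k_m,k_u)$ gives ``exactly once'' per index. Your remark about the shared $\vec{\mathcal{H}}$ covers a point the paper never addresses. Reversals preserve the $\delta$-orientation property, so correctness of \kw{DSM\text{-}ALL} from an arbitrary starting $\delta$-orientation, together with Theorem~\ref{uniqueness}, yields $D_{k_m,\delta}$. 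Two small corrections. First, isolated vertices have indegree $0$ and reach nothing, so $D_{1,\delta}=V$ holds only when there are none, as the algorithm tacitly assumes. Second, ``exactly once'' is a statement about \kw{DIVIDE}, since the binary search for $k_{\max}$ also runs \kw{DSM\text{-}ALL} and may probe the same indices.

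What does not hold up is your last paragraph, which justifies restricting the computation to $D_{k_l,\delta}\setminus D_{k_u,\delta}$.

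\begin{itemize}
\item Lemma~\ref{degree} is about an egalitarian orientation. The shared $\vec{\mathcal{H}}$ is not in general egalitarian, because \kw{DSM\text{-}ALL} only makes it fair across one threshold.
\item The orientation in which $D_{k_u,\delta}$ and $D_{k_l,\delta}$ were certified has since been changed by every \kw{DSM\text{-}ALL} call executed in between, for example the whole sibling subtree that \kw{DIVIDE} processes first.
\item You give no argument that those reversals kept the crossing hyperedges of both boundaries pointing outward, or kept the indegrees outside $D_{k_l,\delta}$ at most $k_l-1$. A \kw{REACHIN} reversal of some $u\rightsquigarrow s$ with $s\notin D_{k_l,\delta}$, for instance, would turn an outward crossing hyperedge inward.
\end{itemize}

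A sound restricted version needs at least two things. It needs an invariant that every hyperpath reversed within a call lies entirely in that call's region. Such reversals do preserve the property that no hyperedge has a tail outside and a head inside either boundary, and they leave all indegrees outside the region unchanged. It must also keep full indegrees rather than work on the induced subhypergraph, because hyperedges with tails in $D_{k_u,\delta}$ and heads in the region contribute indegree to region vertices.

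Note also that $D_{k_u,\delta}\subseteq D_{k_m,\delta}$ follows from Theorem~\ref{hierarchical}, not from Lemma~\ref{degree}(1). What (1) would supply is that hyperpaths starting in the region never enter $D_{k_u,\delta}$.

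The pseudocode runs \kw{DSM\text{-}ALL} on the whole $\vec{\mathcal{H}}$, so the cleanest fix is to delete the restriction argument. The sentence preceding it already completes the proof for the algorithm as stated.
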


\begin{proof}
The correctness of $\kw{DSD+}$ follows from two properties:\\
\textbf{(1) Nestedness.} By Theorem~3, $D_{k{+}1,\delta} \subseteq D_{k,\delta}$, establishing a strict hierarchical ordering over $k$.
\textbf{(2) Recursive completeness.} The divide-and-conquer process explores every valid $k$ for which $D_{k,\delta}$ is non-empty,  and terminates only when no finer layer exists.
Together, these properties ensure that all distinct non-empty $(k,\delta)$-dense subhypergraphs are discovered exactly once.
\end{proof}

\begin{theorem}[Complexity of Algorithm $\kw{DSD+}$]
	\label{dsd+}
 The time and space complexity are $O(n m  \delta \cdot d^E_{max}   \log k_{max})$ and $O(n + m)$.
\end{theorem}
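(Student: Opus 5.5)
We bound the cost of \kw{DSD+} by analysing the recursion tree of \kw{DIVIDE} for one fixed $\delta$, and then sum over the values of $\delta$. This mirrors the analysis of \kw{DSD} (Theorem on the complexity of \kw{DSD}), where the factor $d^E_{max}$ comes from the outer loop over $\delta \in [1, d^E_{max}]$ and the per-call cost $O(nm\delta)$ comes from \kw{DSM\text{-}ALL} (Theorem~\ref{dsd++}). The only change is that the factor $k_{max}$ should become $\log k_{max}$.

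\textbf{Step 1: the cost of finding $k_{max}$.} Line~3 finds $k_{max}$ by binary search over $k$. Theorem~\ref{hierarchical} makes non-emptiness of $D_{k,\delta}$ monotone in $k$, so the search is valid. It issues $O(\log k_{max})$ calls to \kw{DSM\text{-}ALL}, each costing $O(nm\delta)$. To bound the search range, we use $k_{max} \le m\delta$, since total indegree is at most $m\delta$. A galloping (doubling) search then keeps the count at $O(\log k_{max})$.

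\textbf{Step 2: the recursion of \kw{DIVIDE}.} A call on $(k_l,k_u)$ computes $D_{k_m,\delta}$ at the midpoint and recurses on the two halves. At depth $i$ of the recursion tree, the intervals are disjoint and each has length about $(k_u-k_l)/2^i$, so the depth is $O(\log k_{max})$.

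The key claim is that all calls at one depth together cost $O(nm\delta)$, not $O(nm\delta)$ each. The argument rests on the nestedness $D_{k_u,\delta} \subseteq D_{k_m,\delta} \subseteq D_{k_l,\delta}$ (Theorem~\ref{hierarchical}) and on Lemma~\ref{degree}(1), which says hyperedges crossing out of $D_{k,\delta}$ are oriented outward. Together these mean a subproblem on $(k_l,k_u)$ only needs to reorient hyperedges incident to the shell $D_{k_l,\delta}\setminus D_{k_u,\delta}$. Reversible hyperpaths never need to cross the already-fixed boundaries, because of the separation in Theorem~\ref{in-outsidedense}.

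The shells at a given depth are vertex-disjoint. Each hyperedge is therefore charged to $O(1)$ shells per level. We account per hyperedge-endpoint incidence, and since an edge can touch several shells, up to $|e|\le d^E_{max}$ of them, this yields at most an extra $d^E_{max}$ factor. Summing the $O(n_i m_i \delta)$ bounds with $\sum n_i \le n$ and $\sum m_i \le m\, d^E_{max}$ gives $O(nm\delta\, d^E_{max})$ per level, hence $O(nm\delta\, d^E_{max}\log k_{max})$ per $\delta$.

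\textbf{Step 3: summing over $\delta$.} For the total, we can either absorb the sum over $\delta$, giving $\sum_{\delta} \delta \le (d^E_{max})^2$, into the stated bound in the same loose way the paper does for \kw{DSD}, or treat $\delta$ as fixed so that the $d^E_{max}$ factor accounts for the outer loop. I would follow the paper's convention for \kw{DSD}, for consistency between the two theorems.

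\textbf{Space.} We maintain a single orientation plus the boundary sets. Recursion depth $O(\log k_{max})$ with vertex-disjoint shells gives $O(n+m)$ space.

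\textbf{Main obstacle.} The hard step is justifying that \kw{DSM\text{-}ALL} restricted to a shell costs only $O(n_i m_i \delta)$ and still produces the correct $D_{k_m,\delta}$. We must show that computing within $D_{k_l,\delta}\setminus D_{k_u,\delta}$, with the orientation inherited from the parent call, never needs a reversible hyperpath leaving the shell. I would first argue this from Lemma~\ref{degree}: boundary hyperedges point outward and indegrees outside stay on the correct side of the thresholds, so any reversible hyperpath confined to the parent stays inside it.
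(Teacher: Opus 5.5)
\textbf{Gap: the factor $d^E_{max}$ is spent twice.} Your skeleton is sound: recursion depth $O(\log k_{max})$, a per-level bound from vertex-disjoint shells, and a galloping search for $k_{max}$. It is also more careful than the paper's proof, which just multiplies the depth by the cost of one \kw{DSM\text{-}ALL} call and ignores that depth $i$ can hold up to $2^i$ calls. The problem is in Step~2. There you use $d^E_{max}$ for hyperedge--shell incidences ($\sum_i m_i \le m\,d^E_{max}$), so you already reach $O(nm\delta\, d^E_{max}\log k_{max})$ for a \emph{single} value of $\delta$. But, as your opening paragraph says, in the paper's convention $d^E_{max}$ pays for the outer loop over $\delta$, and \kw{DSD+} really executes that loop.

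Neither option in Step~3 fixes this. ``Treating $\delta$ as fixed'' bounds only one iteration of the loop. Summing over $\delta$ adds another factor of $d^E_{max}$ (more if you use $\sum_\delta \delta$), which overshoots the statement.

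The incidence factor is unnecessary. Every subproblem has $m_i \le m$, and disjoint shells give $\sum_i n_i \le n$, so $\sum_i n_i m_i \delta \le nm\delta$ per level. One value of $\delta$ then costs $O(nm\delta\log k_{max})$, the search for $k_{max}$ included. The $d^E_{max}$ iterations of the outer loop give the claimed bound, in the same convention the paper uses for \kw{DSD}.

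\textbf{Completing the step you flag as the main obstacle.} Your closing clause (``any reversible hyperpath confined to the parent stays inside it'') is circular, though you name the right ingredients. Assume the inherited orientation satisfies Lemma~\ref{degree} at both thresholds $k_l$ and $k_u$.
\begin{itemize}
\item By the reachability closure in Definition~\ref{kdense}, no hyperedge has a tail outside $D_{k,\delta}$ and a head inside it. Hence a hyperpath starting in the shell $D_{k_l,\delta}\setminus D_{k_u,\delta}$ never enters $D_{k_u,\delta}$, and once it leaves $D_{k_l,\delta}$ it never returns.
\item The remaining candidates are hyperpaths from the shell to $V\setminus D_{k_l,\delta}$, or from $D_{k_u,\delta}$ into the shell. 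Neither kind is reversible, by Lemma~\ref{degree}(2),(3).
\item Reversals inside the shell preserve these invariants. The new indegrees satisfy $\vec{d}_s+1\le \vec{d}_t-1\le k_u-2$ and $\vec{d}_t-1\ge \vec{d}_s+1\ge k_l$. Exchanging the tail/head roles of two shell vertices in a hyperedge creates no outside-tail/inside-head pair.
\end{itemize}

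\textbf{Space.} The recursion stack holds nested, not disjoint, shells. To stay within $O(n+m)$, store one interval label per vertex rather than explicit vertex sets per frame.
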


\begin{proof}
$\kw{DSD+}$ recursively bisects the density interval and invokes $\kw{DSM{-}ALL}$ on intermediate layers.
As the recursion depth is $O(\log k_{\max})$ and each $\kw{DSM{-}ALL}$ call runs in $O(nm\delta)$ time, the total complexity is $O(nm\delta \cdot d^E_{\max}\log k_{\max})$.
Space usage remains linear $O(n+m)$ since each recursive call operates in-place without duplicating the hypergraph structure.
\end{proof}

\begin{figure}[t!]

	\centering
	%\vskip -4pt
	\subfigure{
		\label{fig7a}
		\includegraphics[width=1\columnwidth]{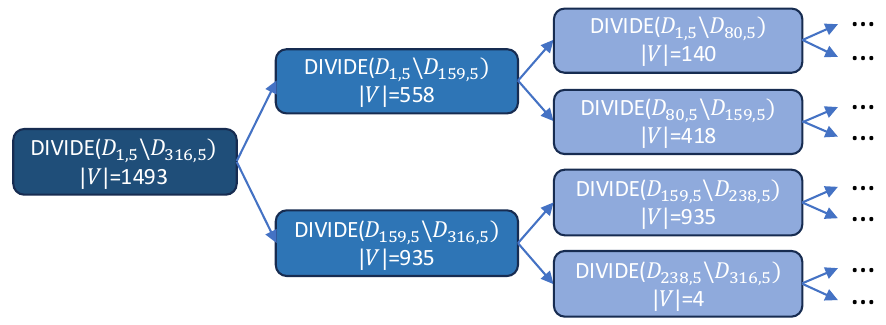}
	}
	\vskip -12pt
	\caption{An example of \kw{DIVIDE}($D_{1,5}, D_{316,5}$) on \kw{HB} dataset.}
	%($k,\delta$)-dense subhypergraph $D_{k,\delta}$
	\label{dsd+}

\end{figure}

Figure~\ref{dsd+} illustrates an example of the \kw{DIVIDE} process applied to the \kw{HB} dataset, where we recursively decompose the interval $[1,316]$ with fixed $\delta=5$. The initial call \kw{DIVIDE}$(D_{1,5}, D_{316,5})$ covers a vertex set of size 1493. The midpoint $k=159$ is selected, and the recursion continues on two subintervals: $[1,159]$ and $[159,316]$, with corresponding vertex sets of sizes 558 and 935, respectively.
Each recursive call further selects a midpoint (e.g., $k=80$, $k=238$) and continues the decomposition, splitting the interval until base cases are reached—either when the subinterval length is no more than 1 or when two adjacent layers contain identical vertex sets (e.g., \kw{DIVIDE}$(D_{238,5}, D_{316,5})$ with only 4 vertices).

This hierarchical tree highlights the core advantage of \kw{DSD{+}}: by reusing intermediate results and skipping redundant computations, the algorithm significantly reduces time complexity and accelerates decomposition. As visualized in the figure, lighter-colored blocks indicate smaller vertex sets, intuitively reflecting how the workload shrinks at deeper recursion levels. This divide-and-conquer strategy enables efficient identification of all distinct non-empty $(k,\delta)$-dense subhypergraphs with minimal overhead.

\section{Dynamic Algorithms}
%We extend $\kw{DSD+}$ with dynamic variants to evolve hypergraphs efficiently.

%\subsection{\revi{Dynamic Maintenance}}

%\revi{Real-world hypergraphs evolve as relations emerge or disappear. To adapt efficiently without full recomputation, we design a dynamic maintenance scheme that incrementally updates the egalitarian orientation and IDNs after each edge insertion or deletion.}

\revi{Real-world hypergraphs often evolve over time as new relations appear and old ones vanish. To handle such evolving conditions without reconstructing the entire decomposition, we design a dynamic maintenance mechanism that incrementally updates the egalitarian orientation and IDNs after hyperedge insertion or deletion.}

\stitle{\revi{Edge addition (Algorithm \ref{insert})}} 
\revi{incrementally integrates a new hyperedge $e$ into the current egalitarian orientation $\vec{\mathcal{H}}$.
For each $\delta$ iteration (Lines 1–3), it extends the orientation by inserting $e$ while maintaining the non-decreasing indegree order of its incident vertices.
For every affected vertex $v\in e$ (Line 4), the algorithm checks whether its indegree $\vec{d}_v^{\delta}$ reaches the limit $\bar{r}_v^{\delta}$ (Line 5).
If so, it searches for a \emph{reversible hyperpath} $s\rightsquigarrow v$ (Line 6) whose reversal restores balance (Line 7); otherwise, it propagates local adjustments to reachable vertices sharing the same IDN with $v$ (Lines 8–11).
Finally, the updated indegrees $\bar{r}$ and orientation $\vec{\mathcal{H}}$ are returned (Line 12).
Through localized reversals and updates, the algorithm maintains global egalitarianity without full recomputation.}

\begin{algorithm}[t!]
\SetKwData{Left}{left}\SetKwData{This}{this}\SetKwData{Up}{up} \SetKwFunction{Union}{Union}
\SetKwFunction{Function}{DIVIDE}
\SetKwProg{Fu}{Function}{:}{}
\SetKwInput{Input}{Input}   % 小标题式，不会拉长空格
\SetKwInput{Output}{Output}

\small

\caption{\revi{$\kw{Insert}(\mathop{\mathcal{H}}\limits ^{\rightarrow}, \bar{r}, e)$}}
\label{insert} 
\Input{\revi{The egalitarian orientation $\mathop{\mathcal{H}}\limits ^{\rightarrow}$, the IDNs of all vertices $\bar{r}$, and the hyperedge $e$ to be inserted.} }
\Output{\revi{The updated egalitarian orientation $\mathop{\mathcal{H}}\limits ^{\rightarrow}$ and IDNs $\bar{r}$}}
%\begin{small}
\revi{
 \For{$\delta = 1, 2, ...$}
 {
        Suppose $e=(u_1, u_2, ... , u_i, ...)$, $\vec{d}_{u_i} \leq \vec{d}_{u_{i+1}}$\;
        $\vec{e}=(\{u_{\delta+1}, u_{\delta+1}, ...\}\{u_1,u_2, ..., u_\delta\})$, 
        $\mathop{\mathcal{H}}\limits ^{\rightarrow} \leftarrow\mathop{\mathcal{H}}\limits ^{\rightarrow} \cup \vec{e}$\;
        \For{each $v \in \{u_1,u_2, ..., u_\delta\}$}
        {
              \If{$\vec{d_v} = \bar{r}^\delta_v$+1}
              {
              \eIf{$\exists$ reversible hyperpath $s$ $\rightsquigarrow$ $v$, with $\vec{d_s}= \bar{r}^\delta_v-1$}
              {
                      reverse the hyperpath $s$ $\rightsquigarrow$ $v$\;
              }
              {
                       \For{each $w \in \{w|\bar{r}^\delta_w = \bar{r}^\delta_v\}$and $w$ can reach $v$}
                       {
                               $\bar{r}^\delta_w \leftarrow \bar{r}^\delta_w+1$\;
                       }
                      $\bar{r}^\delta_v \leftarrow \bar{r}^\delta_v+1$\;
              }
        }
        }
        }
        \textbf{return} $(\mathop{\mathcal{H}}\limits ^{\rightarrow}, \bar{r})$;
 }

\end{algorithm}

%\revi{incrementally integrates a new hyperedge $e$ into the oriented structure. For each affected vertex $v \in e$, if its indegree limit $\vec{d}_v^\delta$ is reached, the algorithm attempts to reverse a reversible hyperpath to restore balance; otherwise, it propagates a small local adjustment to neighboring vertices with the same indegree level. This ensures that the updated orientation remains egalitarian while avoiding global recomputation.}

\begin{algorithm}[t!]
\SetKwData{Left}{left}\SetKwData{This}{this}\SetKwData{Up}{up} \SetKwFunction{Union}{Union}
\SetKwFunction{Function}{DIVIDE}
\SetKwProg{Fu}{Function}{:}{}
\SetKwInput{Input}{Input}   % 小标题式，不会拉长空格
\SetKwInput{Output}{Output}

\small
\caption{\revi{$\kw{Delete}(\mathop{\mathcal{H}}\limits ^{\rightarrow}, \bar{r}, e)$}}
\label{delete} 
\Input{\revi{The egalitarian orientation $\mathop{\mathcal{H}}\limits ^{\rightarrow}$, the IDNs of all vertices $\bar{r}$, and the hyperedge $e$ to be deleted.} }
\Output{\revi{The updated egalitarian orientation $\mathop{\mathcal{H}}\limits ^{\rightarrow}$ and IDNs $\bar{r}$}}
\revi{
\For{$\delta = 1, 2, 3,...$}
{
        Suppose $e$ is oriented as $\vec{e}=(\{u_{\delta+1}, u_{\delta+1}, ...\}\{u_1,u_2, ..., u_\delta\})$\;
        $\mathop{\mathcal{H}}\limits ^{\rightarrow} \leftarrow\mathop{\mathcal{H}}\limits ^{\rightarrow} \setminus \vec{e}$\;
        \For{each $v \in \{u_1,u_2, ..., u_\delta\}$}
        {
               \eIf{$\vec{d_v} = \bar{r}^\delta_v-2$}
               {
                      must $\exists$ a reversible hyperpath $v$ $\rightsquigarrow$ $t$, with $d_t= \bar{r}^\delta_v$\;
                      reverse the hyperpath $v$ $\rightsquigarrow$ $t$\;
                      $P \leftarrow \{w|\bar{r}^\delta_w = \bar{r}^\delta_v$, and $w$ can each $v$ or $t\} \cup v \cup t$\;
              }
              {
                      $P \leftarrow \{w|\bar{r}^\delta_w = \bar{r}^\delta_v$, and $w$ can each $v\} \cup v$\;
              }
              \For{each $w \in P$}
             {
                     \If{$d_w \neq \bar{r}^\delta_w$ and can't reach an $\bar{r}^\delta_w$-indegree vertex}
                     {
                                $\bar{r}^\delta_w \leftarrow \bar{r}^\delta_w-1$\;
                      } 
              }
        }
        }
        \textbf{return} $(\mathop{\mathcal{H}}\limits ^{\rightarrow}, \bar{r})$\;
}            
\end{algorithm}

\stitle{\revi{Edge deletion (Algorithm \ref{delete})}} 
\revi{symmetrically removes a hyperedge $e$ from the current egalitarian orientation $\vec{\mathcal{H}}$ while maintaining balanced indegree distribution.
For each $\delta$ iteration (Lines 1–3), the algorithm deletes $e$ and inspects affected vertices $v\in e$ (Line 4).
If $\vec{d}_v$ decreases by two below its IDN $\bar{r}_v^{\delta}$ (Line 5), a \emph{reversible hyperpath} $v\rightsquigarrow t$ must exist such that the indegree difference between $v$ and $t$ equals 2, and this hyperpath is then reversed to restore balance (Lines 6–7).
%If $\vec{d}_v$ decreases by two below its IDN $\bar{r}_v^{\delta}$ (Line 5), a \emph{reversible hyperpath} $v\rightsquigarrow t$ must exist such that the indegree difference equals 2, whose reversal restores balance (Lines 6–7).
Afterward, all vertices with the same IDN $\bar{r}_v^{\delta}$ that can reach either $v$ or $t$ are collected into a temporary set $P$, including $v$ and $t$ (Line 8).
Otherwise, $P$ contains vertices with the same IDN that can reach $v$, including $v$ itself(Lines 9–10).
For each $w\in P$, if $\vec{d}_w$ is lower than its IDN and $w$ cannot reach any vertex with indegree $\bar{r}_w^{\delta}$, its IDN is decreased (Lines 11–13).
Finally, the updated orientation and IDNs are returned (Line 14).
Through these localized reversals and corrections, the algorithm efficiently preserves global egalitarianity without full recomputation.}

%\revi{symmetrically removes a hyperedge and locally corrects the indegree distribution. When a vertex loses an incoming edge, the algorithm either reverses an existing hyperpath to compensate or decreases the IDN value of vertices that cannot reach sufficient indegree support.} 

\revi{Together, these two procedures maintain consistency and fairness of the decomposition under incremental changes, achieving near-linear update cost for small batches of modifications.} 
%\revi{Together, these localized updates maintain the correctness and fairness of the decomposition, achieving near-linear cost for small batches of edge modifications.}

\section{EXPERIMENTAL EVALUATION}

\subsection{Experimental setup}

In this section, we evaluate the performance of the algorithms we proposed across various datasets. All algorithms are implemented with C++ and compiled using gcc version 11.1.0 with optimization level set to O3.  All experiments are conducted on a Linux machine equipped with a 2.9GHz AMD Ryzen 3990X CPU and 256GB RAM running CentOS 7.9.2 (64-bit). The experimental results are meticulously detailed in the subsequent parts of this section.

%In this section, we evaluate the performance of our algorithms across multiple datasets.
%All implementations are in C++ (gcc 11.1.0, -O3) and executed on a Linux server with a 2.9 GHz AMD Ryzen 3990X CPU and 256 GB RAM (CentOS 7.9.2, 64-bit).

\noindent{\textbf{Datasets}}. We evaluate our methods on nine benchmark hypergraph datasets widely used in prior decomposition studies.
Table~\ref{dataset} summarizes their key statistics.
All datasets are publicly available at \url{https://www.cs.cornell.edu/~arb/data/} and are listed in ascending order of vertex count.
Specifically,
\kw{CP} (\textit{contact-primary-school}) and \kw{CH} (\textit{contact-high-school}) record proximity interactions among students;
\kw{SC} (\textit{senate-committee}) and \kw{HC} (\textit{house-committees}) capture committee memberships in the US Senate and House of Representatives;
\kw{SB} (\textit{senate-bills}) and \kw{HB} (\textit{house-bills}) represent bill co-sponsorship networks in the US Congress;
\kw{TC} (\textit{trivago-clicks}) logs hotels co-clicked during online browsing sessions;
\kw{AR} (\textit{amazon-reviews}) groups products reviewed by individual users on Amazon; and
\kw{SA} (\textit{stackoverflow-answers}) aggregates questions answered by the same user on Stack Overflow.

\stitle{Algorithms} We have implemented six algorithms: four ($k, \delta$)-dense subhypergraph algorithms—$\kw{DSM\text{-}PATH}$ (Algorithm \ref{dshs}), $\kw{DSM\text{-}FLOW}$ (Algorithm \ref{dshs2}), $\kw{DSM\text{-}FLOW+}$ (Algorithm \ref{dshs3}), and $\kw{DSM\text{-}ALL}$ (Algorithm \ref{dshs4})—as well as two hypergraph density decomposition algorithms, namely $\kw{DSD}$ (Algorithm \ref{ds}) and $\kw{DSD+}$ (Algorithm \ref{ds+}). It is important to emphasize that our ($k, \delta$)-dense subhypergraph is a novel model specifically designed for hypergraphs, and to the best of our knowledge, no existing algorithms are currently capable of computing ($k, \delta$)-dense subhypergraphs or performing density decomposition in this context. For comparison, we also include several representative hypergraph decomposition algorithms: $k$-core~\cite{9458645}, E-Peel~\cite{DBLP:journals/pvldb/ArafatKRG23} (also referred to as nbr-$k$-core), ($\alpha, \beta$)-core~\cite{abcore}, CoCoreDecomp~\cite{DBLP:conf/icde/LuoYLZ0023} (i.e., ($k,h$)-core), \rev{densest~\cite{DBLP:conf/cikm/HuWC17} and \kw{HTC\text{-}PF} (also referred to as hyper $k$-truss)~\cite{DBLP:journals/pvldb/QinZLLYW25}}. To facilitate comparison, we align the parameters across models by mapping $k$ (or $\alpha$) in these methods to the $k$ in our model, and mapping $h$, or $\beta$ to our $\delta$.

\begin{table}[t!]

\caption{ Data hypergraph $\mathcal{H}$.}%标题
\vskip -10pt
\centering%把表居中
%\small
%\footnotesize
\scriptsize

\begin{tabular}{ccccccccc}%四个c代表该表一共四列，内容全部居中
\toprule%第一道横线
Datasets&$|V|=n$&$|E|=m$&$d^E_{max}$&$d^E_{min}$&$m/n$&$\rev{\bar{d_e}}$\\
\midrule%第二道横线 
\kw{CP}&242&12,704&5&2&52.5&\rev{2.42}\\
\kw{SC}&282&315&31&4&1.12&\rev{17.6}\\
\kw{SB}&294&29,157&99&2&99.2&\rev{9.65}\\
\kw{CH}&327&7,818&5&2&23.9&\rev{2.33}\\
\kw{HC}&1,290&341&82&1&0.26&\rev{35.2}\\
\kw{HB}&1,494&60,987&399&2&40.8&\rev{21.9}\\
\kw{TC}&172,738&233,202 &85&2&1.35&\rev{3.18}\\
\kw{AR}&2,268,231&4,285,363&9,350&2&1.88&\rev{17.1}\\
\kw{SA}&15,211,989&1,103,243&61,315&2&0.07&\rev{23.7} \\
\bottomrule%第三道横线
\end{tabular}
\label{dataset}
\end{table}

%The following subsections present detailed experimental results.

\subsection{Efficiency Testings}

\noindent{\textbf{Exp-1: Runtime of subhypergraph mining algorithms.}}
Table~\ref{result} summarizes the runtime of different subhypergraph mining algorithms across multiple datasets. In the table, “OOM” indicates out-of-memory errors and “UNM” denotes cases exceeding 12 hours. On the \kw{HB} dataset ($k{=}5$, $\delta{=}5$), all our methods substantially outperform traditional baselines: \kw{DSM{-}ALL} completes in 0.042 seconds, whereas nbr-$k$-core requires 6.708 seconds—over 160$\times$ slower. This pattern holds consistently across datasets, where our algorithms finish within seconds and deliver significant speedups over classical models. Compared with \kw{DSM{-}PATH} and \kw{DSM{-}FLOW}, which incur additional cost due to fine-grained hyperpath reversals or flow construction, \kw{DSM{-}ALL} is the most stable and practical choice on large hypergraphs. A more fine-grained study of the impact of $\delta$ is deferred to the ablation experiment in Exp-9.

\begin{table}[t!]

	\begin{scriptsize}
	\caption{Runtime of subhypergraph mining algorithms (sec).}%
	\vskip -10pt
	\centering%
	\begin{tabular}{cccccccccc}%
		\toprule%
		Methods&\kw{CP}& \kw{SB} & \kw{CH} &\kw{HC}  &\kw{HB}&   \kw{TC} &\kw{AR}  &\kw{SA}\\
		\midrule% 
		\kw{DSM\text{-}PATH}    &0.007&0.035&0.007&0.002&0.173&273.4 &50.07&4.992 \\
		\kw{DSM\text{-}FLOW}   &0.005&0.032&0.005&0.006&0.151 &201.2 &OOM &OOM \\
		\kw{DSM\text{-}FLOW+} &0.004&0.015&0.005&0.003&0.055&31.68 &OOM&OOM\\
		\kw{DSM\text{-}ALL}    &\textbf{0.002}&\textbf{0.009}&\textbf{0.004}&\textbf{0.002}&\textbf{0.042}&\textbf{0.164}&\textbf{42.17} &\textbf{4.609} \\
		\midrule% 
		%\kw{k\text{-}core}  &2.9E-5&7.1E-5&6.2E-5&3.6E-5&0.001&1.1E-4 &0.399&0.489\\
		\kw{nbr\text{-}k\text{-}core}  &0.004&0.216&0.005&0.059&6.708&0.594&554.5&2409\\
		\kw{(\alpha, \beta)\text{-}core} &0.132&14.87&0.031&0.002&43.33 &0.236&345.2&UNM\\
		%\kw{(k,q)\text{-}core}  &0.002&0.001&0.008&0.002&0.006&0.020 &0.777&2.752\\
		\kw{(k,h)\text{-}core}  &0.005&0.377&0.006&0.015&6.828 &0.278&170.2&32.34\\
		%\rev{\kw{densest}} &\rev{25.62}  &\rev{105.1}&\rev{10.18}&\rev{0.038}&6.828 &0.278&170.2&32.34\\
		
		\bottomrule
	\end{tabular}
	\label{result}
	\end{scriptsize}
\end{table}

\noindent{\textbf{Exp-2: Scalability of subhypergraph mining algorithms}}. 
Figure~\ref{exp2} evaluates the scalability of subhypergraph mining algorithms on the \kw{HB} dataset by varying $|V|$ and $|E|$.
In both settings, \kw{DSM{-}ALL} achieves the best scalability, maintaining sub-second runtimes with minimal growth as the hypergraph expands, showing strong resilience to both vertex and hyperedge increases.
\kw{DSM{-}FLOW+} and \kw{DSM{-}PATH} also scale well, with slight overhead from finer-grained operations.
In contrast, classical models like $(\alpha,\beta)$-core and $(k,h)$-core show steep runtime growth, underscoring the superior efficiency and robustness of our methods—especially \kw{DSM{-}ALL}.

%Figure~\ref{exp2} evaluates the scalability of subhypergraph mining algorithms on the \kw{HB} dataset by varying $|V|$ and $|E|$. In both settings, \kw{DSM{-}ALL} demonstrates the best scalability, consistently maintaining sub-second runtimes and showing only modest growth as the hypergraph size increases. This indicates its strong resilience to both vertex expansion and edge densification. Other proposed methods, including \kw{DSM{-}FLOW+} and \kw{DSM{-}PATH}, also remain efficient and scale reasonably well, though they incur slightly higher costs due to more fine-grained operations such as hyperpath reversals or flow computations. In contrast, classical models such as $(\alpha,\beta)$-core and $(k,h)$-core exhibit steep runtime increases with growing input hypergraph size, making them impractical for large-scale hypergraphs. These results clearly highlight the superior efficiency and robustness of our proposed methods—particularly \kw{DSM{-}ALL}—in handling high-volume, high-complexity hypergraph.  

\begin{figure}

	\centering
	\vskip -2pt
	\subfigure{
		\label{fig6a0}
		\includegraphics[width=1\columnwidth]{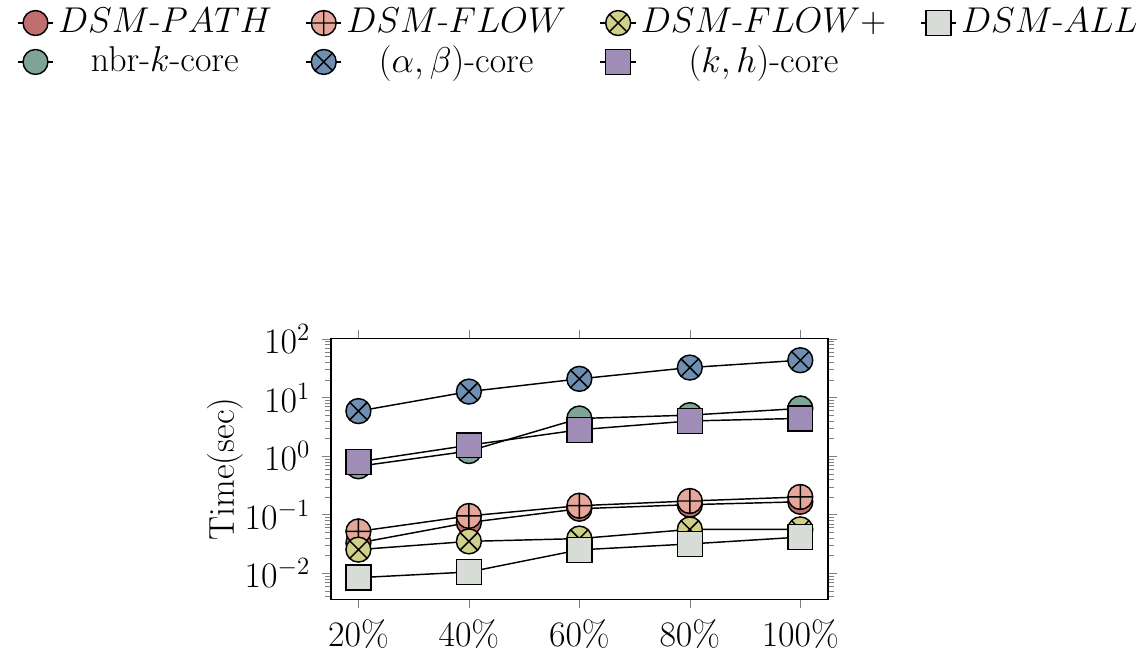}
		
	}
	\vskip -11pt
	\setcounter{subfigure}{0}
	\hspace{-3mm}
	\subfigure[vary $|V|$ on $\kw{HB}$]{
		\label{fig6a}
		\includegraphics[width=0.5\columnwidth]{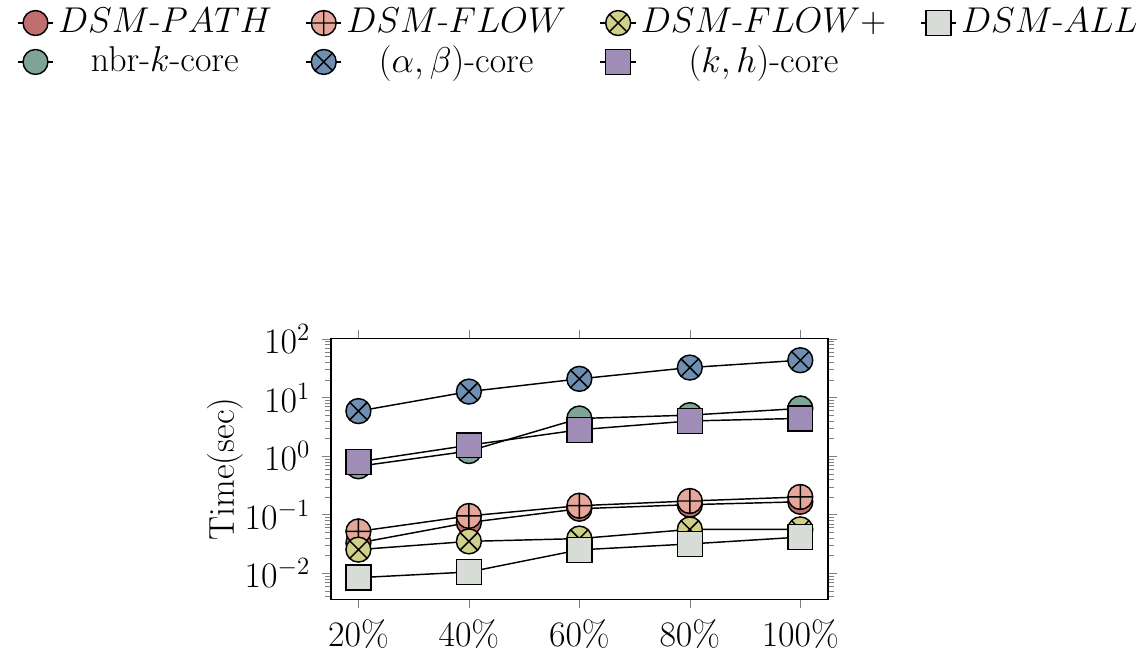}
	}
	\hspace{-3mm}
	\subfigure[vary $|E|$ on $\kw{HB}$]{
		\label{fig6a}
		\includegraphics[width=0.5\columnwidth]{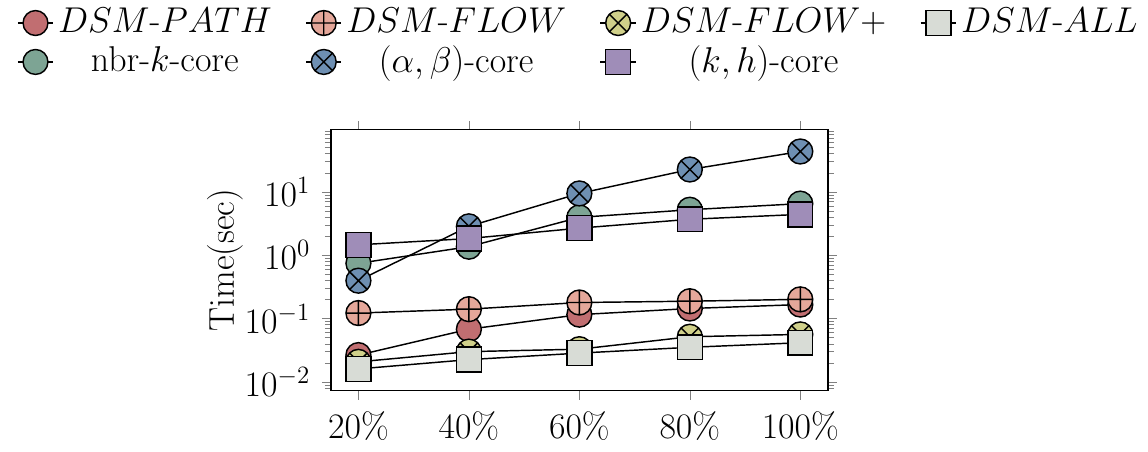}
	}
	\vskip -11pt
	\caption{Scalability of subhypergraph mining algorithms.}
	%($k,\delta$)-dense subhypergraph $D_{k,\delta}$
	\label{exp2}

\end{figure}

\begin{figure}

	\centering
	\vskip -2pt
	\subfigure{
		\label{fig6a0}
		\includegraphics[width=0.95\columnwidth]{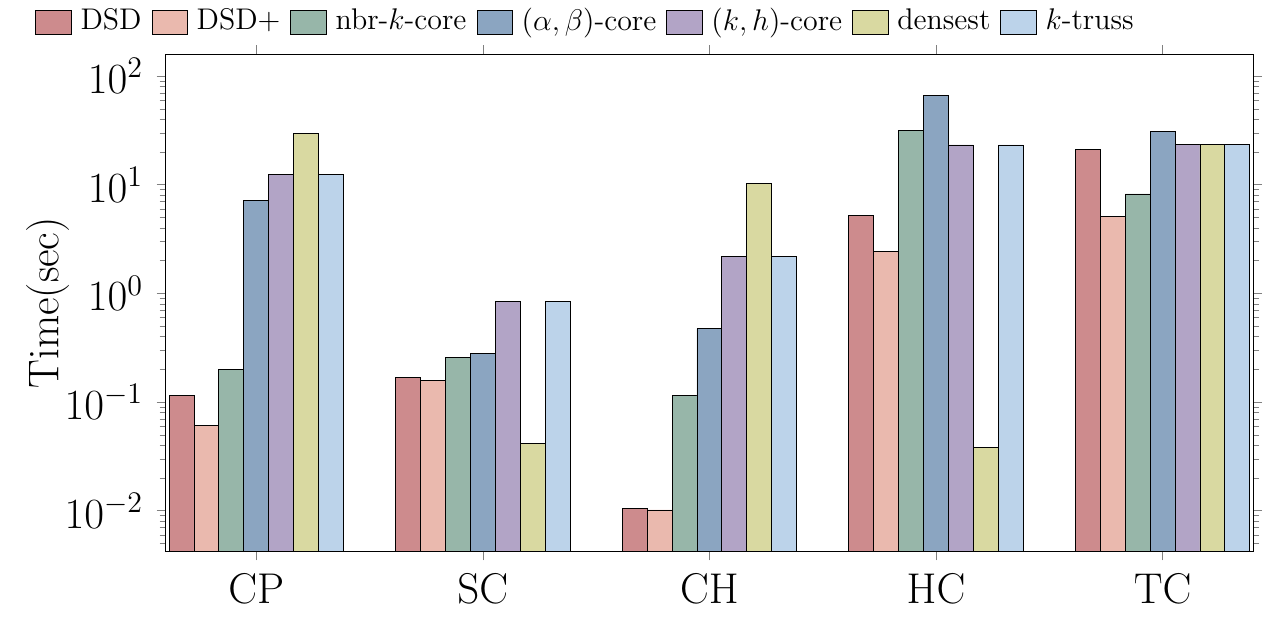}
		
	}
	\vskip -5pt
	\setcounter{subfigure}{0}
	\subfigure{
		\label{fig7a}
		\includegraphics[width=0.6\columnwidth]{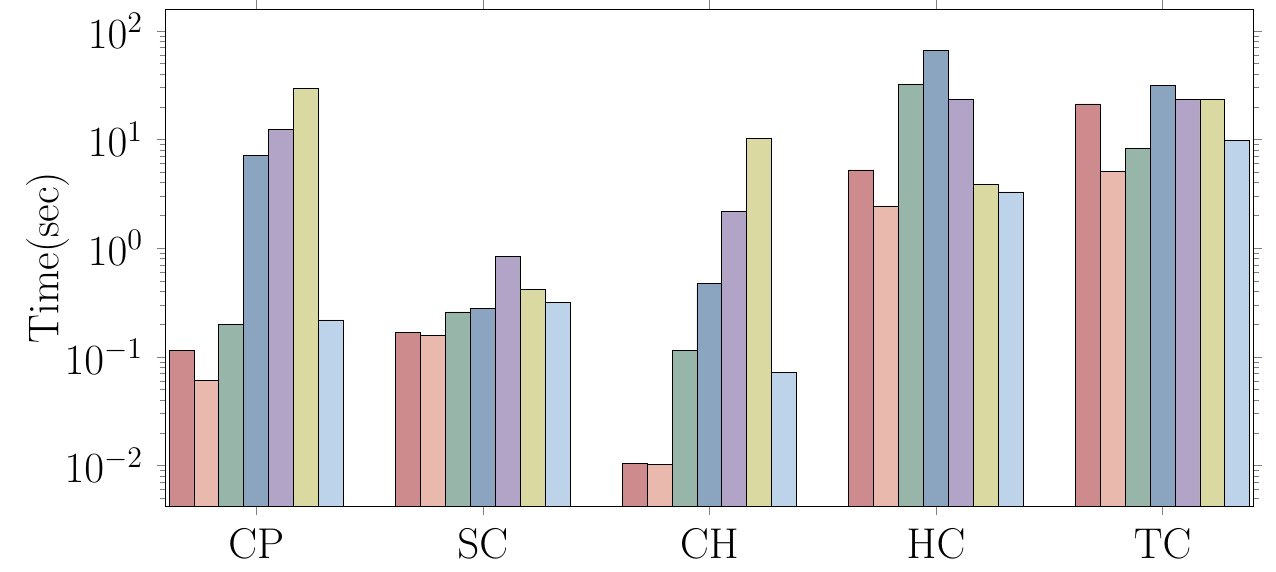}
	}
	\vskip -11pt
	\caption{\rev{Runtime of decomposition algorithms.}}
	%($k,\delta$)-dense subhypergraph $D_{k,\delta}$
	\label{exp3}

\end{figure}

\noindent{\textbf{Exp-3: Runtime of decomposition algorithms}}. 
Figure~\ref{exp3} compares the runtime of seven decomposition algorithms across multiple hypergraph datasets. Our methods, $\mathsf{DSD}$ and $\mathsf{DSD+}$, consistently achieve lower runtimes, showing strong efficiency and scalability. In particular, $\mathsf{DSD+}$ attains notable speedups by reusing intermediate results via divide-and-conquer.
On \kw{TC}, nbr-$k$-core performs slightly faster due to its local peeling design, \rev{while the \kw{densest} model runs faster than $\mathsf{DSD}$ on \kw{HC} as it extracts only a single maximum-density subhypergraph. The $k$-truss method is also competitive on \kw{HC} and \kw{TC}thanks to parallelization (64 threads).}
In contrast, $(\alpha,\beta)$-core and $(k,h)$-core incur heavy overhead from global computations and complex auxiliary structures.

%Figure~\ref{exp3} compares the runtime of \rev{seven} decomposition algorithms across a range of hypergraph datasets. Our methods, $\mathsf{DSD}$ and $\mathsf{DSD+}$, consistently achieve low runtime, demonstrating strong  computational efficiency and scalability. In particular, $\mathsf{DSD+}$ achieves significant speedups over existing approaches by avoiding redundant computation through divide-and-conquer reuse of intermediate results.
%On \kw{TC}, nbr-$k$-core achieves slightly lower latency due to its focus on small hyperedge neighborhoods, where its local peeling design becomes inherently efficient. \rev{The densest model is slightly faster than $\mathsf{DSD}$ on \kw{HC}, as it only extracts the maximum-density subhypergraph rather than constructing the full decomposition hierarchy. The $k$-truss method also exhibits competitive runtime on \kw{TC} by leveraging a highly parallel implementation (64 threads).}
%In contrast, baseline methods—especially $(\alpha,\beta)$-core and $(k,h)$-core—often incur substantial overhead due to their reliance on global structural computations and expensive auxiliary structures. 

\noindent{\textbf{\rev{Exp-4: Scalability}}. }
Figure~\ref{exp4} evaluates the scalability of decomposition algorithms on the \kw{HC} dataset by varying $|V|$ and $|E|$.
As $|V|$ increases, \kw{DSD} and \kw{DSD+} show the most stable growth, indicating their scalability on large graphs and steady behavior as the vertices expands. In contrast, baselines such as $(k,h)$-core and nbr-$k$-core slow down sharply as the vertices grows, especially when the number of vertices becomes large.
When varying $|E|$, \kw{DSD+} consistently achieves the lowest runtime across different densities, demonstrating robust performance under increasing numbers of hyperedges and higher decomposition workloads. Overall, these results confirm the excellent scalability and efficiency of our framework.

\begin{figure}

	\centering
	\subfigure{
		\label{fig8a0}
		\includegraphics[width=0.8\columnwidth]{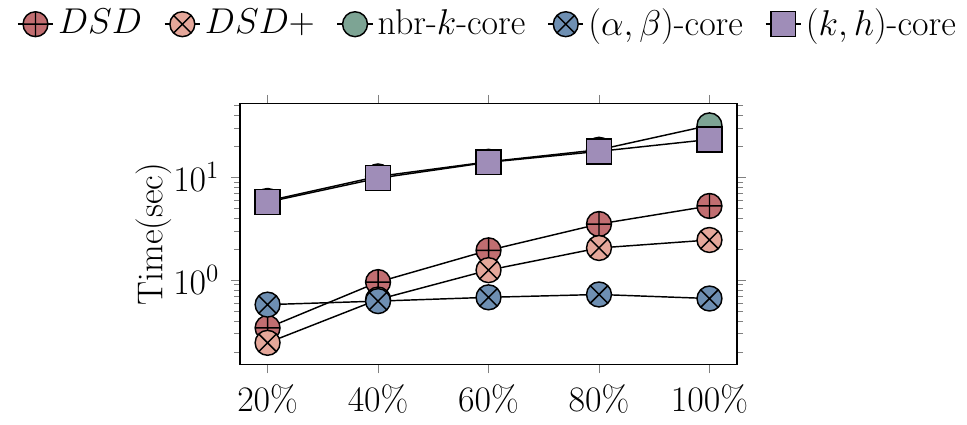}
		
	}
	\vskip -11pt
	\setcounter{subfigure}{0}
	\hspace{-3mm}
	\subfigure[vary $|V|$ on $\kw{HC}$]{
		\label{exp6a}
		\includegraphics[width=0.5\columnwidth]{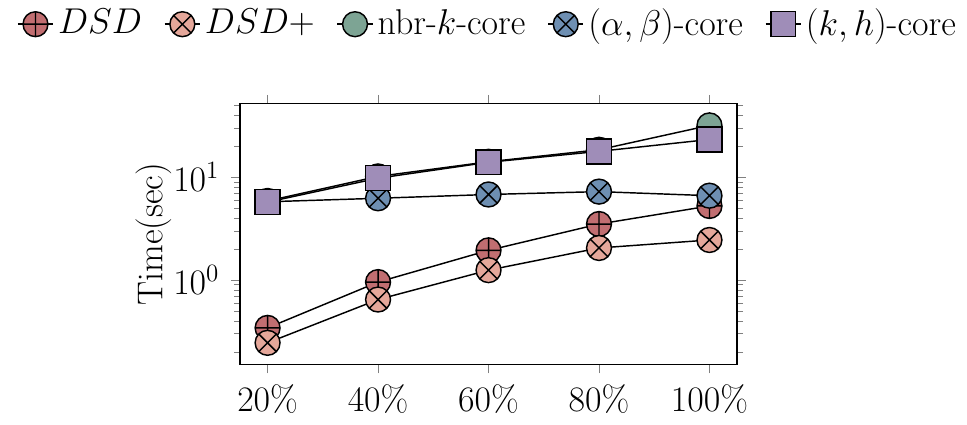}
	}
	\hspace{-3mm}
	\subfigure[vary $|E|$ on $\kw{HC}$]{
		\label{exp6b}
		\includegraphics[width=0.5\columnwidth]{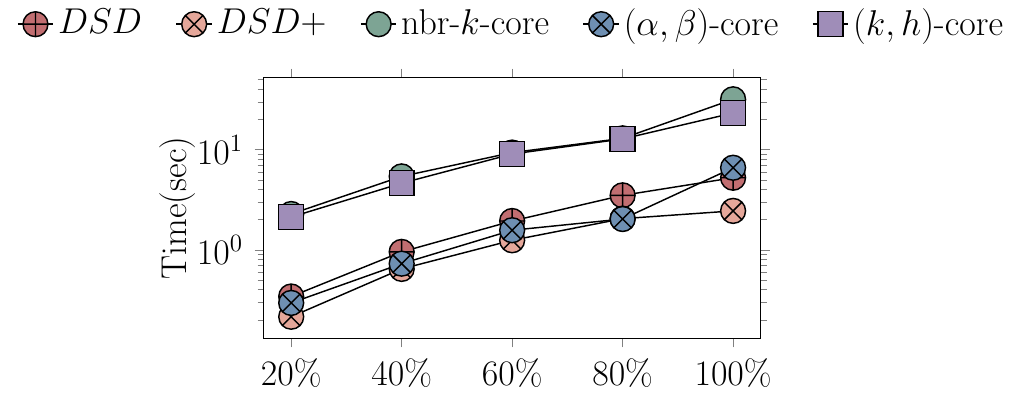}
	}
	\vskip -11pt
	\caption{Scalability of decomposition algorithms.}
	%($k,\delta$)-dense subhypergraph $D_{k,\delta}$
	\label{exp4}

\end{figure}

\noindent{\textbf{Exp-5: Memory overheads of decomposition algorithms.}}
Figure~\ref{exp5} compares memory overheads of different decomposition algorithms across real-world hypergraphs.
Our methods, \kw{DSD} and \kw{DSD+}, maintain stable and low memory usage, reflecting the lightweight design that avoids storing deep hierarchies or redundant metadata.
In contrast, core-based methods like $(k,h)$-core and $(\alpha,\beta)$-core consume more and less stable memory, especially on \kw{HC} and \kw{TC}.
\rev{The \emph{densest} baseline adds modest overhead for global density tracking, while $k$-truss shows the highest memory cost due to motif counting and multi-threaded edge-support maintenance.}  
%Overall, the results show that \kw{DSD+} achieves competitive computational performance (Exp-3) while remaining significantly more memory-efficient than existing decomposition strategies.

%Figure~\ref{exp5} reports the memory overheads of various decomposition algorithms across five real-world hypergraphs. The proposed \kw{DSD} and \kw{DSD+} methods consistently demonstrate stable memory consumption, underscoring the lightweight nature of our framework. This efficiency stems from the fact that our approach avoids the explicit maintenance of deep hierarchical structures and redundant subgraph metadata.
%In contrast, traditional core-based methods, such as the $(k,h)$-core and $(\alpha,\beta)$-core, exhibit more volatile and generally higher memory usage, particularly on dense datasets like HC and TC. Among them, the $(k,h)$-core algorithm shows the highest memory overhead on almost all datasets.

\begin{figure}

	\centering
	\subfigure{
		\label{fig7a0}
		\includegraphics[width=0.95\columnwidth]{fig_exp5legend.pdf}
		
	}
	\vskip -5pt
	\setcounter{subfigure}{0}
	\subfigure{
		\label{fig9a}
		\includegraphics[width=0.6\columnwidth]{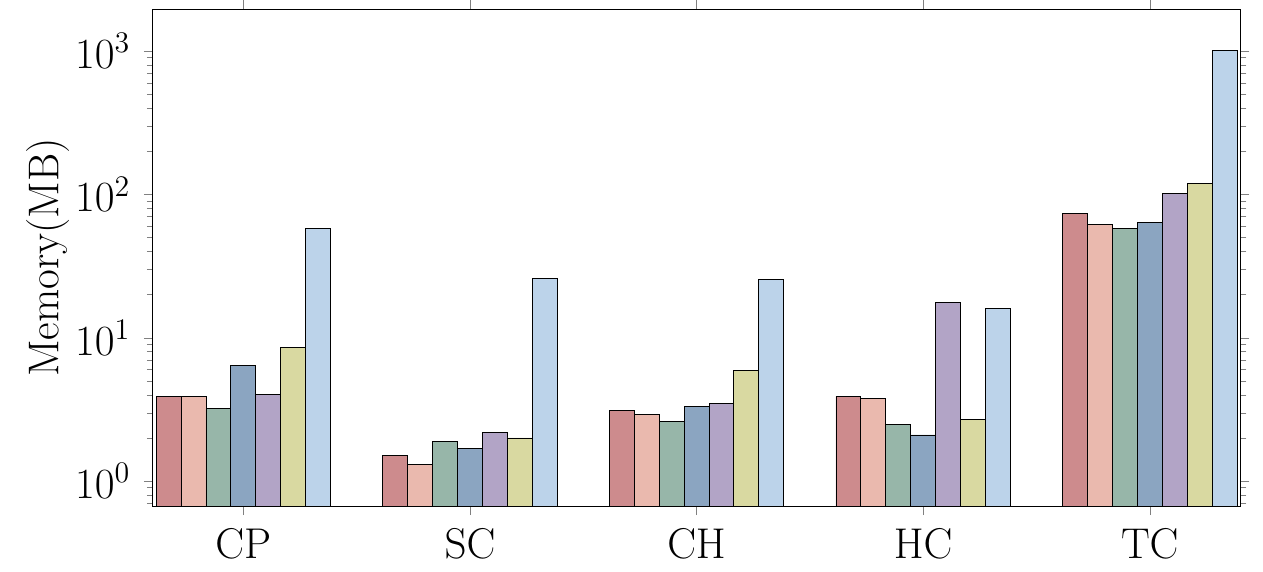}
	}
	\vskip -12pt
	\caption{\rev{Memory overheads of decomposition algorithms.}}
	%($k,\delta$)-dense subhypergraph $D_{k,\delta}$
	\label{exp5}

\end{figure}

\subsection{Effectiveness Testings}

%$C^1_n+C^2_n+C^3_n+\cdots+C^n_n$
%$(\alpha, \beta)$-core

\begin{figure}

	\centering
	\vskip -5pt
	\subfigure{
		\label{fig8a0}
		\includegraphics[width=1\columnwidth]{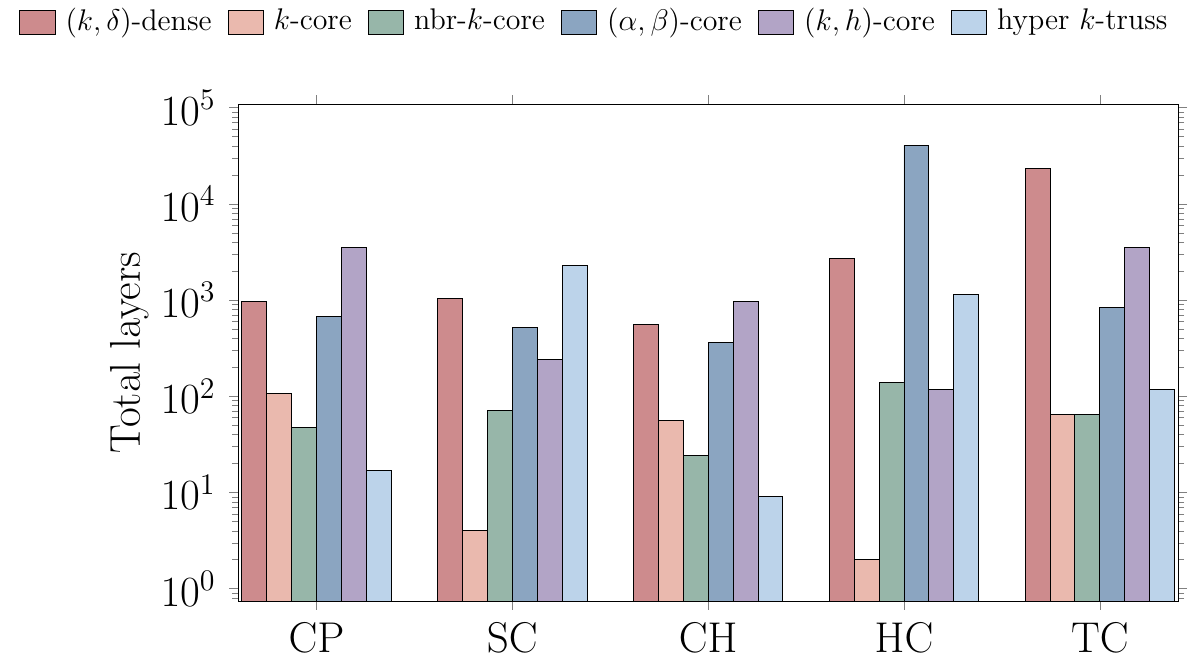}
		
	}
	\vskip -5pt
	\setcounter{subfigure}{0}
	\subfigure{
		\label{fig10a}
		\includegraphics[width=0.54\columnwidth]{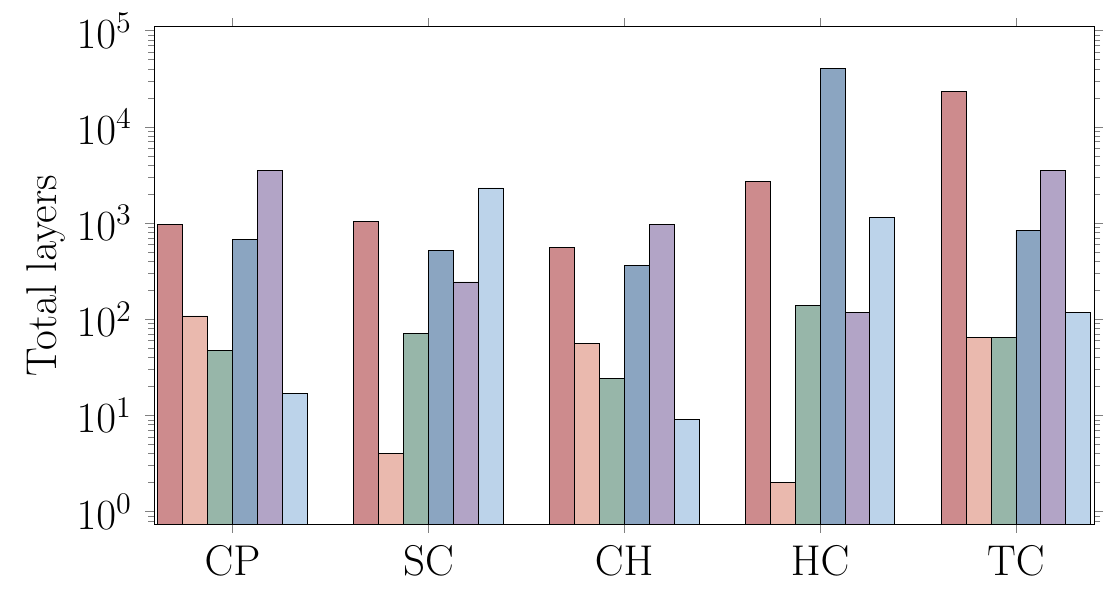}
	}
	\vskip -11pt
	\caption{\rev{Comparisons of the total hierarchy layers.}}
	%($k,\delta$)-dense subhypergraph $D_{k,\delta}$
	\label{exp6}

\end{figure}

\noindent{\textbf{Exp-6: Comparisons of the total hierarchy layers. }}
Figure~\ref{exp6} compares the total number of hierarchy layers generated by different decomposition models across five datasets.
Our ($k,\delta$)-dense decomposition generally yields deeper hierarchies—achieving the largest layer count on \kw{TC}—and is competitive on \kw{SC}.
%Our $(k,\delta)$-dense decomposition generally yields deeper and more informative hierarchies, achieving the largest layer count on \kw{TC} and maintaining competitive depth on \kw{SC}.
In contrast, single-parameter methods—such as $k$-core, nbr-$k$-core, and hyper $k$-truss—tend to produce fewer layers, reflecting coarser granularity and limited resolution.
Notably, the maximum layer count on \kw{SC} is attained by hyper $k$-truss, largely because this dataset contains uniformly large hyperedges. %making truss-based peeling more granular in that specific setting.
Dual-parameter models like $(k,h)$-core and $(\alpha,\beta)$-core also yield relatively large layer counts.
However, the layers produced by $(k,h)$-core and $(\alpha,\beta)$-core often suffer from redundancy, offering limited additional insight.
%Moreover, $(\alpha,\beta)$-core frequently produces excessively inflated hierarchies, especially on \kw{HC}, due to small vertex degrees and a large disparity between $|V|$ and $|E|$, which leads to unstable and noisy decompositions.
Overall, these results demonstrate that the $(k,\delta)$-dense model offers a balanced tradeoff—providing fine-grained, stable, and expressive hierarchical structure without over-fragmentation or instability, making it well-suited for analyzing complex hypergraphs.

\begin{figure}

	\centering
	\vskip -4pt
	\subfigure{
		\label{fig8a0}
		\includegraphics[width=0.98\columnwidth]{fig_exp8legend.pdf}
		
	}
	\vskip -9pt
	\setcounter{subfigure}{0}
	\hspace{-2mm}
	\subfigure[\rev{Non-empty Layer Ratio}]{
		\label{exp6a}
		\includegraphics[width=0.48\columnwidth]{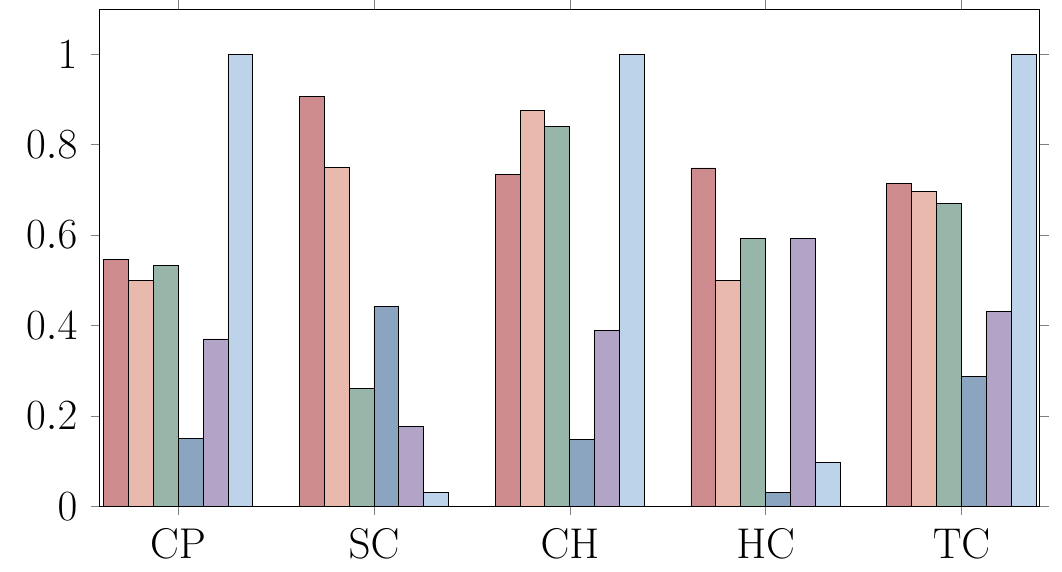}
	}
	\hspace{-2mm}
	\subfigure[\rev{Average Layer Jaccard Distance}]{
		\label{exp6b}
		\includegraphics[width=0.49\columnwidth]{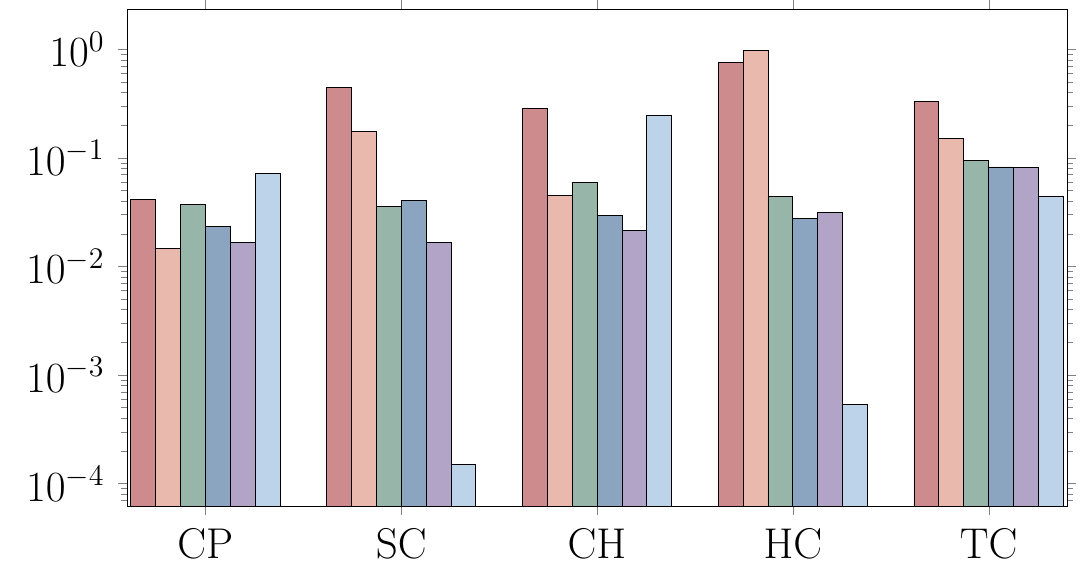}
	}
	\vskip -11pt
	\caption{\rev{Layer quality comparisons of the subhypergraph.}}
	%($k,\delta$)-dense subhypergraph $D_{k,\delta}$
	\label{exp8}

\end{figure}

\noindent{\textbf{\rev{Exp-7: Layer quality comparison. }}}
Figure~\ref{exp8} compares decomposition models using two quality metrics:
(a) the non-empty layer ratio, reflecting continuity, and
(b) the average Jaccard distance between adjacent layers, indicating distinctness.
The $(k,\delta)$-dense decomposition consistently yields a high proportion of non-empty layers, indicating that its deeper hierarchies contain meaningful vertex groups.
Although $(\alpha,\beta)$-core and $(k,h)$-core produce many layers, their ratios are lower due to fragmented boundaries.
On \kw{CH}, $k$-core and nbr-$k$-core perform slightly better owing to dense vertex connections.
For Jaccard distance (Figure~\ref{exp8}(b)), $(k,\delta)$-dense almost ranks highest, reflecting clear and non-redundant transitions.
%\rev{Hyper $k$-truss attains the high non-empty ratio and continuity on \kw{CP}, \kw{CH}, and \kw{TC}, largely because these datasets have smaller, making motif-based peeling more effective.}
\rev{Hyper $k$-truss attains the high non-empty ratio and continuity on \kw{CP}, \kw{CH}, and \kw{TC}, largely because these datasets are relatively smaller, allowing motif-based peeling to more precisely capture cohesive substructures.}
Overall, $(k,\delta)$-dense decomposition produces continuous, distinctive, and robust hierarchies across hypergraphs.

\begin{table}[t!]
	\caption{\rev{Maximum density of subhypergraph models.}}
	%\small
%\footnotesize
\scriptsize
	\vskip -10pt
	\centering%
	\begin{tabular}{cccccccccc}%
		\toprule%
		\rev{Density Metrics}&\rev{Methods}&\rev{\kw{CP}}& \rev{\kw{SC}} & \rev{\kw{CH}} &\rev{\kw{HC}}  &\rev{\kw{HB}}&   \rev{\kw{TC}} \\
		 \midrule%
		\multirow{5}{*}{$\rev{\kw{|E|/|V|}}$}&\rev{\kw{k\text{-}core} } &\rev{54.47}&\rev{\textbf{1.128}}&\rev{25.58}&\rev{0.750} &\rev{\textbf{38.20}}&\rev{17.78}\\
		\multirow{1}{*}{}&\rev{\kw{nbr\text{-}k\text{-}core}}  &\rev{53.67 }&\rev{\textbf{1.128}}&\rev{25.40}&\rev{0.260}&\rev{37.90}&\rev{2.804}\\
		\multirow{1}{*}{}&\rev{\kw{hyper{~}k\text{-}truss}}  &\rev{50.40}&\rev{1.124}&\rev{21.62}&\rev{0.257}&\rev{OOM}&\rev{1.510}\\
		\multirow{1}{*}{}&\rev{\kw{densest}}  &\rev{\textbf{54.48}}&\rev{\textbf{1.128}}&\rev{\textbf{25.60}} &\rev{\textbf{1}}&\rev{\textbf{38.20}}&\rev{UNM}\\
		\multirow{1}{*}{}&\rev{\kw{(k,\delta)\text{-}dense}}&\rev{\textbf{54.48}}&\rev{\textbf{1.128}}&\rev{\textbf{25.60}}&\rev{0.692}&\rev{\textbf{38.20}}&\rev{\textbf{18.35}}\\
		\midrule%
		\multirow{5}{*}{\rev{\kw{volume\text{-}density}}}&\rev{\kw{k\text{-}core} }  &\rev{70.51}&\rev{105.5}&\rev{36.73}&\rev{195.6} &\rev{667.2}&\rev{30.77}\\
		\multirow{1}{*}{}&\rev{\kw{nbr\text{-}k\text{-}core}}   &\rev{\textbf{71.01}}&\rev{108.2}&\rev{\textbf{36.89}}&\rev{\textbf{216.4}}&\rev{\textbf{676.8}}&\rev{\textbf{88.63}}\\
		\multirow{1}{*}{}&\rev{\kw{k\text{-}truss}}  &\rev{67.31}&\rev{105.4}&\rev{33.20}&\rev{197.0} &\rev{OOM}&\rev{44.98}\\
		%\multirow{1}{*}{}&\rev{\kw{densest}}  &\rev{70.25}&\rev{105.5}&\rev{36.33} &\rev{1}&\rev{666.9}&\rev{UNM}\\
		\multirow{1}{*}{}&\rev{\kw{densest}}  &\rev{\textbf{71.01}}&\rev{\textbf{108.3}}&\rev{\textbf{36.89}} &\rev{\textbf{216.4}}&\rev{UNM}&\rev{UNM}\\
		\multirow{1}{*}{}&\rev{\kw{(k,\delta)\text{-}dense}}&\rev{70.25}&\rev{105.5}&\rev{36.74}&\rev{195.6}&\rev{666.9}&\rev{32.31}\\
		
		\bottomrule
	\end{tabular}
	\label{densityest}
\end{table}

%Figure~\ref{exp8} compares decomposition models using two quality metrics: (a) the non-empty layer ratio, evaluating decomposition continuity, and (b) the average Jaccard distance (see Appendix for definitions) between adjacent layers, measuring redundancy and structural distinctness.
%Across most datasets, the proposed $(k,\delta)$-dense model achieves the highest non-empty layer ratio, indicating stable layer formation and strong coverage of the full parameter space. \rev{Models such as $(\alpha,\beta)$-core and $(k,h)$-core often show lower ratios despite producing many layers, reflecting fragmentation and unstable decomposition boundaries.} On \kw{CH}, $k$-core and nbr-$k$-core obtain slightly higher ratios due to the large vertex degrees that preserve more $k$-feasible layers.
%For average Jaccard distance (Figure~\ref{exp8}(b)), $(k,\delta)$-dense again achieves the highest or near-highest results, indicating more informative and less redundant layer transitions. \rev{Hyper $k$-truss exhibits weak continuity and small average distances, as motif supports diminish quickly and layers collapse at lower density levels.} On sparse \kw{HC}, $k$-core attains the highest distance simply because only two entirely disjoint layers remain.
%Overall, these results confirm that $(k,\delta)$-dense decomposition yields both continuous and structurally meaningful hierarchies, adapting robustly to heterogeneous hypergraph characteristics.

\noindent{\textbf{\rev{Exp-8: Maximum density of subhypergraph models. }}}
\rev{We evaluate each subhypergraph model by its ability to capture the densest structures under three complementary metrics: (i) edge–vertex ratio ($|E|/|V|$), (ii) degree density (Definition~\ref{adensity}), and (iii) volume density~\cite{DBLP:journals/pvldb/ArafatKRG23}, which measures the average neighborhood size within the induced subhypergraph. Table~\ref{densityest} summarizes the maximum density achieved by each model across datasets.}
\rev{For decomposition-based methods ($k$-core, nbr-$k$-core, hyper $k$-truss, and $(k,\delta)$-dense), we report the layer with the highest density, while the \emph{densest} baseline~\cite{DBLP:conf/cikm/HuWC17} is adapted to each metric for fair comparison.}
\rev{Overall, $(k,\delta)$-dense consistently attains the highest or near-highest values, confirming its strength in preserving cohesive and fine-grained dense structures within a unified hierarchy.}

\begin{table}[t]
\caption{\rev{Effect of $\delta$ on $k_{\max}$, Sat, and Cont across datasets.}}
\vskip -10pt
\centering
%\small
%\footnotesize
\scriptsize
\begin{tabular}{ccccccccc}
\toprule
\rev{Datasets}&\rev{Metrics}&\rev{1} & $\rev{\lfloor \sqrt{\bar{d_e}} \rfloor}$ & $\rev{d_{50}}$ & $\rev{\lfloor \bar{d_e} \rfloor}$&$\rev{d_{75}}$& $\rev{d_{95}}$ & \rev{$d^E_{max}$} \\
\midrule
\multirow{3}{*}{$\rev{\kw{SB}}$} &\rev{$k_{max}$}& \rev{84} & \rev{279} & \rev{734} & \rev{1138} & \rev{1374} & \rev{2695}& \rev{3264} \\
\multirow{1}{*}{} &\rev{Sat}& \rev{1} & \rev{0.689} & \rev{0.442} & \rev{0.308} & \rev{0.249} & \rev{0.048}& \rev{0} \\
\multirow{1}{*}{} &\rev{Cont}& \rev{0.996} & \rev{0.988} & \rev{0.993} & \rev{0.995} & \rev{0.996} & \rev{0.998}& \rev{0.998} \\

\midrule
\multirow{3}{*}{$\rev{\kw{HB}}$} &\rev{$k_{max}$}& \rev{40} & \rev{243} &  \rev{804} &  \rev{1790} &  \rev{2126}  &  \rev{4636} &  \rev{6179}\\
\multirow{1}{*}{}&\rev{Sat} & \rev{1} & \rev{0.712} &  \rev{0.477} &  \rev{0.291} &  \rev{0.243}  &  \rev{0.049} &  \rev{0}\\
\multirow{1}{*}{} &\rev{Cont}& \rev{0.981} & \rev{0.979} &  \rev{0.992} &  \rev{0.996} &  \rev{0.997}  &  \rev{0.998} &  \rev{0.999}\\

\midrule
\multirow{3}{*}{$\rev{\kw{TC}}$} &\rev{$k_{max}$}& \rev{19}& \rev{19}& \rev{101} & \rev{179} & \rev{179} & \rev{262} &\rev{284}  \\
\multirow{1}{*}{}&\rev{Sat} & \rev{1}& \rev{1} & \rev{0.474} & \rev{0.248} & \rev{0.248} &\rev{0.040} &\rev{0} \\
\multirow{1}{*}{} &\rev{Cont}& \rev{0.662}& \rev{0.662} & \rev{0.897} & \rev{0.941} & \rev{0.941} &\rev{0.958} &\rev{0.961} \\

\bottomrule
\end{tabular}
\label{abla}
\end{table}

\noindent{\textbf{\rev{Exp-9: Ablation on $\delta$ and practical guidance. }}}
\rev{To investigate how the parameter $\delta$ affects decomposition, we evaluate three metrics: the number of layers ($k_{\max}$), hyperedge saturation ratio (Sat), and inter-layer continuity (Cont).}
\rev{Sat measures the fraction of truncated hyperedges, while Cont quantifies smoothness between consecutive layers via average Jaccard similarity.}
\rev{As shown in Table~\ref{abla}, $\delta$ governs the trade-off between coverage and compactness. When $\delta=1$, nearly all edges are truncated (Sat $\approx$ 1), yielding coarse structures. As $\delta$ increases, Sat decreases and Cont approaches 1.0, indicating smoother, more interpretable hierarchies.}
\rev{Across datasets, moderate $\delta$ values around $\lfloor\bar d_e\rfloor$ or $d_{75}$ achieve the best balance (Sat $\approx$ 0.2–0.4, Cont > 0.9), suggesting a practical rule: set $\delta$ such that the saturation ratio falls within 20–40\% for stable, well-layered decompositions without excessive computation.}

%\rev{To investigate how the cap parameter $\delta$ affects the decomposition behavior, we evaluate three structural metrics: the number of layers ($k_{\max}$), the edge saturation ratio (Sat), and the inter-layer continuity (Cont).}
%\rev{The saturation ratio measures how many hyperedges are truncated by the cap, reflecting the effective strength of $\delta$. The continuity quantifies the smoothness between consecutive layers via the average Jaccard similarity of vertex sets.
%As shown in Table \ref{abla}, $\delta$ strongly influences the balance between coverage and compactness.
%When $\delta = 1$, most edges are truncated (Sat $\approx 1$), resulting in overly coarse structures. As $\delta$ increases, Sat decreases rapidly, and Cont rises to near 1.0, indicating smoother and more interpretable layers.
%Across all datasets, moderate $\delta$ values around $\lfloor\bar d_e\rfloor$ or $d_{75}$ consistently achieve a desirable trade-off: moderate truncation (Sat $\approx 0.2–0.4$) and high continuity (Cont > 0.9).
%This provides practical guidance for parameter selection—users can simply tune $\delta$ until the saturation ratio falls into the 20–40\% range, which yields stable, well-layered decompositions without excessive computation.}

\noindent{\textbf{\revi{Exp-10: Dynamic maintenance.}}}
\revi{We simulate dynamic environments by randomly generating update batches containing 1–20\% of the original hyperedges (x-axis), including both insertions and deletions. Figure~\ref{expdy} compares the runtime of incremental maintenance and full recomputation on the \kw{SB} and \kw{TC} datasets.
To avoid the high cost of evaluating all $\delta$ values, we fix $\delta = \lfloor \bar{d}_e \rfloor$, a representative setting that yields strong decomposition quality (Exp-8).
The incremental method consistently outperforms full recomputation for insertions across all update ratios, exhibiting near-linear scalability. For deletions, it remains faster when updates are small ($\leq$10\%), but the advantage narrows as deletions increase, with performance converging at a 20\% ratio.
Overall, these results confirm that our approach efficiently handles frequent, small-scale updates in evolving hypergraphs, while full recomputation is only competitive under large update batches (e.g., bbeyond 20\% of hyperedges are updated).}

\begin{figure}
	\centering
	%\vskip -2pt
	\subfigure{
		\includegraphics[width=0.6\columnwidth]{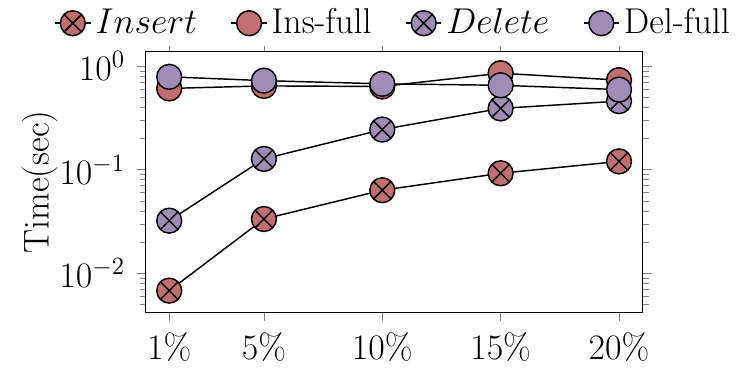}
		
	}
	\vskip -8pt
	\setcounter{subfigure}{0}
	\subfigure[\revi{Results on $\kw{SB}$}] {
		\label{cs1d}
		\includegraphics[width=0.47\columnwidth]{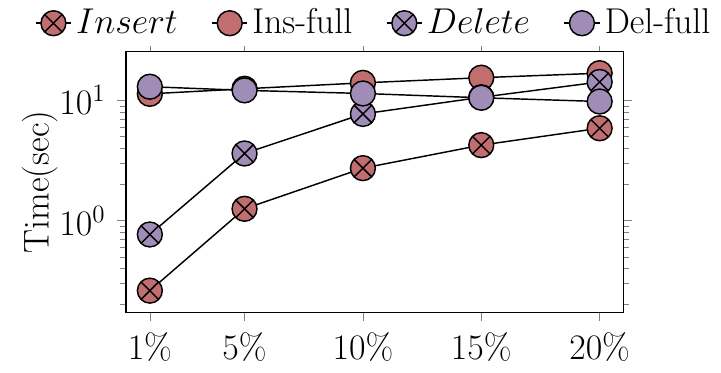}
	}
	\subfigure[\revi{Results on $\kw{TC}$}]{
		\label{cs1c}
		\includegraphics[width=0.47\columnwidth]{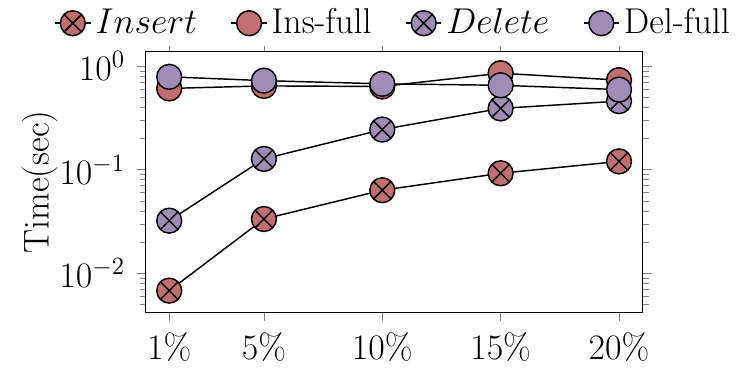}
	}
	\vskip -11pt
	\caption{\revi{Efficiency of dynamic maintenance (runtime vs. hyperedge update ratios). }}
	\label{expdy}

\end{figure}

\subsection{Case studies}

%\revi{These case studies illustrate why density-based decomposition is essential: real-world hypergraphs often exhibit overlapping and hierarchical clusters that cannot be captured by degree-based or motif-based decompositions.}

\stitle{\revis{Case Study A: Detecting Fraud in Multi-party Financial Transactions.}}
\revis{We evaluate our method on the AMLSim benchmark~\cite{jensen2023synthetic, DBLP:conf/nips/AltmanBNEAA23}, a public simulator that generates anti–money-laundering (AML) transaction graphs. Each alert represents a set of related transactions corresponding to a known fraud pattern, and all participating accounts are labeled as fraudulent. We model each alert as a hyperedge connecting all involved accounts, with additional background edges from normal transactions. This enables quantitative evaluation against ground-truth labels.}
\revis{We apply the $(k,5)$-dense decomposition ($\delta=5=\lfloor \bar{d}_e \rfloor$, validated in Exp-9) along with $k$-core and nbr-$k$-core baselines. Accounts from higher layers (larger $k$) are ranked and selected as top candidates under limited audit budgets. Detection quality is assessed using standard metrics: precision, recall, F1, ROC-AUC, and PR-AUC~\cite{DBLP:conf/nips/AltmanBNEAA23}. ROC-AUC measures ranking quality over all thresholds, while PR-AUC better reflects performance under severe class imbalance.}
\revis{As shown in Table~\ref{csak}, the $(k,5)$-dense model attains the best precision, F1, and PR-AUC, outperforming $k$-core and nbr-$k$-core. Notably, precision improves by 7.5\% over $k$-core under the same Top-50 audit budget, demonstrating that dense layers more effectively concentrate fraudulent accounts for real-world AML screening.}

\begin{table}[t]
\caption{\revis{Top-50 Fraud Screening on AMLSim.}}
\vskip -10pt
\centering
%\small
\scriptsize
\begin{tabular}{cccccccc}
\toprule
\revis{Methods} & \revis{precision} & \revis{recall} & \revis{F1}& \revis{ROC\_AUC} & \revis{PR\_AUC}\\
\midrule
%$(k,1)$-dense & \textbf{0.488} & \textbf{0.082} & \textbf{0.151} & \textbf{0.005} \\
%$(k,2)$-dense & \textbf{4.479} & \textbf{0.755} & \textbf{0.168} & \textbf{0.007}  \\
\revis{$(k,5)$-dense} & \revis{\textbf{0.860}} & \revis{\textbf{0.026}} & \revis{\textbf{0.050}} & \revis{\textbf{0.519}} & \revis{\textbf{0.191}}  \\
\revis{$k$-core }              & \revis{0.800 }& \revis{0.020}& \revis{0.042}& \revis{0.519}&\revis{0.189}  \\
\revis{nbr-$k$-core}      & \revis{0.516 }& \revis{0.020}& \revis{0.030} & \revis{0.563}&\revis{0.170}  \\

\bottomrule
\end{tabular}
\label{csak}
\end{table}

\stitle{\revis{Case Study B: Legislative Co-sponsorship Community Discovery.}}
We evaluate our method on the U.S.\ Senate Bill dataset, where each bill forms a hyperedge linking all sponsors and legislators are labeled by party affiliation. Legislators (vertices) are labeled by party affiliation (red: Republicans, blue: Democrats).  This dataset reflects real political collaborations and serves to assess both interpretability and predictive capability.
Figure~\ref{sb} compares our $(k,10)$-dense decomposition ($\delta=10=\lfloor\bar d_e\rfloor$) with $k$-core and nbr-$k$-core. The $(k,\delta)$-dense layers reveal fine-grained communities that distinguish parties and highlight influential bipartisan legislators (\textit{Hatch}, \textit{Kennedy}, \textit{Inouye}), within-party groups (\textit{Murray}, \textit{Feinstein}, \textit{Boxer}), and cross-party collaborations. In contrast, baseline methods collapse these into coarse layers, obscuring structure.
%To test predictive power, we train on the first 80\% of bills and predict co-sponsorships in the remaining 20\%. Evaluation uses two standard link-prediction metrics—AUC (ROC area) and AP \cite{zhao2024heterogeneous, DBLP:journals/tmlr/ZhangXD24}. The $(k,9)$-dense model attains AUC = 0.865 and AP = 0.873, outperforming $k$-core (0.820/0.863) and nbr-$k$-core (0.827/0.878), improving AUC by about $4 \sim 5$ points. These results show that our model reveals collaboration hierarchies consistent with real political ties and improves the prediction of future co-sponsorships.}% These results indicate that our model captures hierarchical collaboration patterns aligned with real political ties while enhancing predictive accuracy for future co-sponsorships.}

\begin{figure}

	\centering
	\vskip -11pt
	\subfigure[\revis{($k,10$)-dense}]{
		\label{figsba}
		\includegraphics[width=0.46\columnwidth]{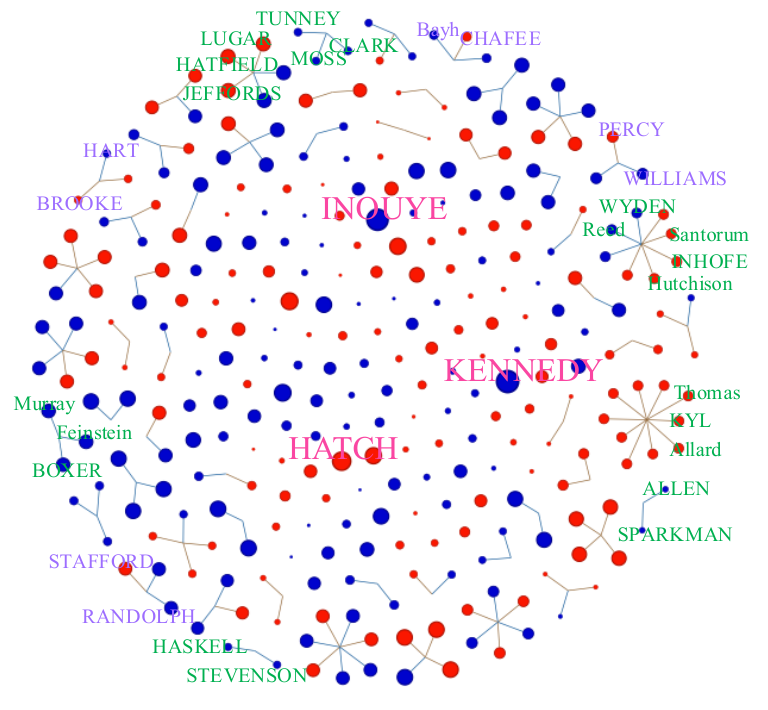}
	}
	\subfigure[\revis{nbr-$k$-core}]{
		\label{figsbb}
		\includegraphics[width=0.46\columnwidth]{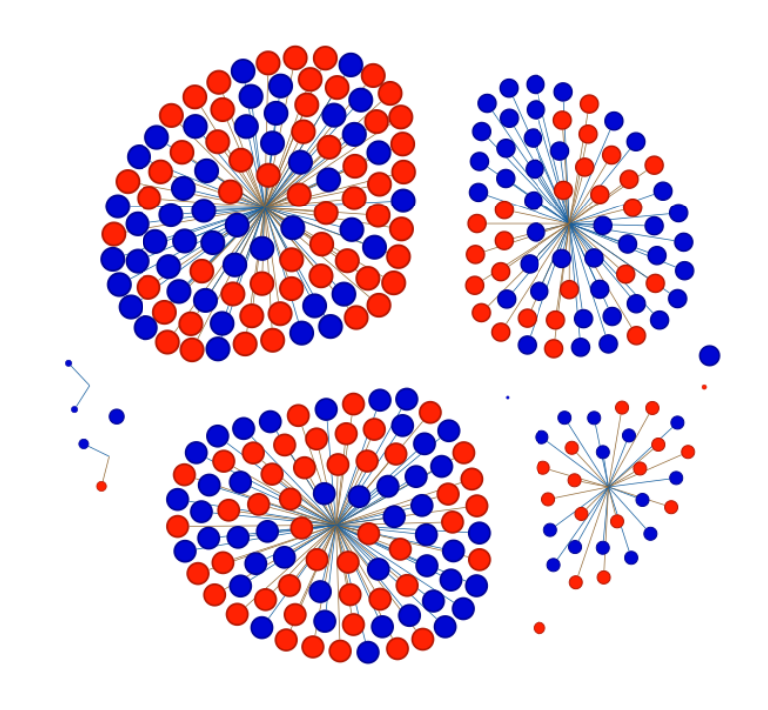}
	}
	\vskip -11pt
	\caption{\revis{Co-sponsorship community visualization.}}
	%($k,\delta$)-dense subhypergraph $D_{k,\delta}$
	\label{sb}

\end{figure}

\section{RELATED WORK}
%\subsection{Graph Decomposition}
%Subgraph decomposition has been widely studied, with density-based methods proposed for general~\cite{DBLP:journals/pvldb/ZhangLZQW24} and bipartite graphs~\cite{DBLP:journals/pacmmod/ZhangLZQQW25}.
%Beyond density decomposition, many methods partition graphs into hierarchies of cohesive subgraphs.
%Representative methods include $k$-core~\cite{DBLP:journals/corr/cs-DS-0310049, DBLP:journals/tkde/LiYM14}, $k$-truss~\cite{DBLP:journals/pvldb/WangC12, DBLP:conf/sigmod/HuangCQTY14}, nucleus~\cite{DBLP:conf/www/SariyuceSPC15, DBLP:journals/pvldb/ShiDS21}, $k$-edge-connected~\cite{DBLP:journals/vldb/ChangW24, DBLP:conf/sigmod/ChangYQLLL13}, locally densest subgraph \cite{DBLP:conf/kdd/QinLCZ15, DBLP:journals/tkdd/Tatti19}, distance-generalized core\cite{DBLP:conf/sigmod/BonchiKS19, DBLP:conf/cikm/DaiLQWYZ021}, and colorful $h$-star $k$-core decompositions~\cite{DBLP:conf/icde/GaoLQCYW22, DBLP:journals/pvldb/GaoQLH23}. 

%\subsection{Hypergraph Decomposition}

\stitle{Hypergraph core decomposition.} 
There are numerous models have been proposed for hypergraph core decomposition. The two most fundamental models are the degree-based $k$-core model \cite{DBLP:conf/ipps/RamadanTP04, bianconi2024nature} and the nbr-$k$-Core model \cite{DBLP:journals/pvldb/ArafatKRG23, DBLP:journals/pacmmod/ZhangYWLZL25}. In addition to the above two fundamental models, several multi-argument variants have been proposed. Some variants combine neighborhood size with additional constraints: the ($k, h$)-core \cite{DBLP:conf/icde/LuoYLZ0023} incorporate vertex degree, and the ($k, g$)-core \cite{DBLP:conf/cikm/KimKLJ23} accounts for co-occurrence. Other variants combine degree with additional constraints: the ($k, q$)-core \cite{DBLP:conf/cikm/Luo0000Y024}, which considers hyperedge size, and the ($k,t$)-core \cite{DBLP:journals/datamine/BuLS23}, which reflects the proportion of vertices involved in each hyperedge.

\stitle{Other hypergraph decomposition.}
Various aspects of hypergraph decomposition have also been explored. The k-hinge tree decomposition, introduced in \cite{jeavons1994structural}, has since been applied to the analysis of constraint satisfaction problems.
In addition, several generalizations of hypergraph acyclicity have been proposed by defining different forms of hypergraph decompositions, each characterized by a specific notion of width \cite{DBLP:journals/ai/GottlobLS00, DBLP:journals/jcss/CohenJG08}. Intuitively, the width quantifies how far a hypergraph deviates from being acyclic, with a width of 1 corresponding to fully acyclic hypergraphs. Among these, the most prominent decomposition frameworks are hypertree decompositions \cite{DBLP:journals/jcss/GottlobLS02}, generalized hypertree decompositions \cite{DBLP:journals/jcss/GottlobLS02}, and fractional hypertree decompositions \cite{DBLP:journals/corr/abs-1711-04506}.

\section{Conclusion}
In this paper, we conducted a comprehensive study of density decomposition on hypergraphs.
Existing approaches often fail to accurately capture the densest subhypergraphs or to flexibly adjust the depth of hierarchical structures.
To address these challenges, we proposed the ($k,\delta$)-dense subhypergraph model, which enhances density-aware decomposition through hypergraph redirection and enables adaptive, fine-grained hierarchical structuring, thereby unifying degree-based cores and densest-subgraph formulations under a single cohesive framework.
We further developed  four subhypergraph mining strategies that leverage hyperpath structures. Furthermore, we design two decomposition algorithms for identifying ($k, \delta$)-dense subhypergraphs, based on network flow optimization.
Extensive experiments on nine real-world datasets verified both the effectiveness and scalability of the proposed framework.

%\clearpage
\clearpage
\balance

\bibliographystyle{ACM-Reference-Format}
\bibliography{lxy}

@inproceedings{flow,
	author       = {Chrysanthi Kosyfaki and
	Nikos Mamoulis and
	Evaggelia Pitoura and
	Panayiotis Tsaparas},
	title        = {Flow Motifs in Interaction Networks},
	booktitle    = {EDBT},
	pages        = {241--252},
	year         = {2019},
}

@article{qian2024cascading,
  title={Cascading failures on interdependent hypergraph},
  author={Qian, Cheng and Zhao, Dandan and Zhong, Ming and Peng, Hao and Wang, Wei},
  journal={Communications in Nonlinear Science and Numerical Simulation},
  volume={138},
  pages={108237},
  year={2024},
}

@article{sun2021higher,
  title={Higher-order percolation processes on multiplex hypergraphs},
  author={Sun, Hanlin and Bianconi, Ginestra},
  journal={Physical Review E},
  volume={104},
  number={3},
  pages={034306},
  year={2021},
}

@article{bianconi2024nature,
  title={Nature of hypergraph k-core percolation problems},
  author={Bianconi, Ginestra and Dorogovtsev, Sergey N},
  journal={Physical Review E},
  volume={109},
  number={1},
  pages={014307},
  year={2024},
}

@article{DBLP:journals/pacmmod/ZhangYWLZL25,
  author       = {Wenqian Zhang and
                  Zhengyi Yang and
                  Dong Wen and
                  Wentao Li and
                  Wenjie Zhang and
                  Xuemin Lin},
  title        = {Accelerating Core Decomposition in Billion-Scale Hypergraphs},
  journal      = {Proc. {ACM} Manag. Data},
  volume       = {3},
  number       = {1},
  pages        = {6:1--6:27},
  year         = {2025},
}

@article{DBLP:journals/pvldb/ArafatKRG23,
  author       = {Naheed Anjum Arafat and
                  Arijit Khan and
                  Arpit Kumar Rai and
                  Bishwamittra Ghosh},
  title        = {Neighborhood-based Hypergraph Core Decomposition},
  journal      = {VLDB},
  volume       = {16},
  number       = {9},
  pages        = {2061--2074},
  year         = {2023},
}

@inproceedings{DBLP:conf/ipps/RamadanTP04,
  author       = {Emad Y. Ramadan and
                  Arijit Tarafdar and
                  Alex Pothen},
  title        = {A Hypergraph Model for the Yeast Protein Complex Network},
  booktitle    = {IPDPS},
  year         = {2004},
}

@inproceedings{DBLP:conf/icde/LuoYLZ0023,
  author       = {Qi Luo and
                  Dongxiao Yu and
                  Yu Liu and
                  Yanwei Zheng and
                  Xiuzhen Cheng and
                  Xuemin Lin},
  title        = {Finer-Grained Engagement in Hypergraphs},
  booktitle    = {ICDE},
  pages        = {423--435},
  year         = {2023},
}

@inproceedings{DBLP:conf/cikm/KimKLJ23,
  author       = {Dahee Kim and
                  Junghoon Kim and
                  Sungsu Lim and
                  Hyun Ji Jeong},
  title        = {Exploring Cohesive Subgraphs in Hypergraphs: The (k, g)-core Approach},
  booktitle    = {CIKM},
  pages        = {4013--4017},
  year         = {2023},
}

@article{DBLP:journals/datamine/BuLS23,
  author       = {Fanchen Bu and
                  Geon Lee and
                  Kijung Shin},
  title        = {Hypercore decomposition for non-fragile hyperedges: concepts, algorithms,
                  observations, and applications},
  journal      = {Data Min. Knowl. Discov.},
  volume       = {37},
  number       = {6},
  pages        = {2389--2437},
  year         = {2023},
}

@inproceedings{DBLP:conf/cikm/Luo0000Y024,
  author       = {Qi Luo and
                  Wenjie Zhang and
                  Zhengyi Yang and
                  Dong Wen and
                  Xiaoyang Wang and
                  Dongxiao Yu and
                  Xuemin Lin},
  title        = {Hierarchical Structure Construction on Hypergraphs},
  booktitle    = {CIKM},
  pages        = {1597--1606},
  year         = {2024},
}

@article{jeavons1994structural,
  title={A structural decomposition for hypergraphs},
  author={Jeavons, Peter and Cohen, David and Gyssens, Marc},
  journal={Contemporary Mathematics},
  volume={178},
  pages={161--161},
  year={1994},
}

@article{DBLP:journals/ai/GottlobLS00,
  author       = {Georg Gottlob and
                  Nicola Leone and
                  Francesco Scarcello},
  title        = {A comparison of structural {CSP} decomposition methods},
  journal      = {Artif. Intell.},
  volume       = {124},
  number       = {2},
  pages        = {243--282},
  year         = {2000},
}

@article{DBLP:journals/jcss/CohenJG08,
  author       = {David A. Cohen and
                  Peter Jeavons and
                  Marc Gyssens},
  title        = {A unified theory of structural tractability for constraint satisfaction
                  problems},
  journal      = {J. Comput. Syst. Sci.},
  volume       = {74},
  number       = {5},
  pages        = {721--743},
  year         = {2008},
}

@article{DBLP:journals/jcss/GottlobLS02,
  author       = {Georg Gottlob and
                  Nicola Leone and
                  Francesco Scarcello},
  title        = {Hypertree Decompositions and Tractable Queries},
  journal      = {J. Comput. Syst. Sci.},
  volume       = {64},
  number       = {3},
  pages        = {579--627},
  year         = {2002},
}

@article{DBLP:journals/corr/abs-1711-04506,
  author       = {Martin Grohe and
                  D{\'{a}}niel Marx},
  title        = {Constraint Solving via Fractional Edge Covers},
  journal      = {CoRR},
  volume       = {abs/1711.04506},
  year         = {2017},
}

@inproceedings{DBLP:conf/alenex/Blumenstock16,
  author       = {Markus Blumenstock},
  title        = {Fast Algorithms for Pseudoarboricity},
  booktitle    = {Proceedings of the Eighteenth Workshop on Algorithm Engineering and
                  Experiments, {ALENEX} 2016, Arlington, Virginia, USA, January 10,
                  2016},
  pages        = {113--126},
  publisher    = {{SIAM}},
  year         = {2016},
}

@inproceedings{abcore,
author = {Ding, Danhao and Li, Hui and Huang, Zhipeng and Mamoulis, Nikos},
title = {Efficient Fault-Tolerant Group Recommendation Using alpha-beta-core},
year = {2017},
publisher = {Association for Computing Machinery},
booktitle = {CIKM},
pages = {2047–2050},
series = {CIKM '17}
}

@INPROCEEDINGS{9458645,
  author={Luo, Qi and Yu, Dongxiao and Cai, Zhipeng and Lin, Xuemin and Cheng, Xiuzhen},
  booktitle={ICDE}, 
  title={Hypercore Maintenance in Dynamic Hypergraphs}, 
  year={2021},
  pages={2051-2056},
}

@inproceedings{DBLP:conf/cikm/HuWC17,
  author       = {Shuguang Hu and
                  Xiaowei Wu and
                  T.{-}H. Hubert Chan},
  title        = {Maintaining Densest Subsets Efficiently in Evolving Hypergraphs},
  booktitle    = {CIKM},
  pages        = {929--938},
  publisher    = {{ACM}},
  year         = {2017},
}

@article{DBLP:journals/pvldb/QinZLLYW25,
  author       = {Hongchao Qin and
                  Guang Zeng and
                  Ronghua Li and
                  Longlong Lin and
                  Ye Yuan and
                  Guoren Wang},
  title        = {Truss Decomposition in Hypergraphs},
  journal      = {Proc. {VLDB} Endow.},
  volume       = {18},
  number       = {7},
  pages        = {2185--2197},
  year         = {2025},
}

@article{DBLP:journals/corr/abs-2301-04235,
  author       = {Marco Mancastroppa and
                  Iacopo Iacopini and
                  Giovanni Petri and
                  Alain Barrat},
  title        = {Hyper-cores promote localization and efficient seeding in higher-order
                  processes},
  journal      = {CoRR},
  volume       = {abs/2301.04235},
  year         = {2023},
}

@article{DBLP:journals/corr/abs-2204-05646,
  author       = {Martina Contisciani and
                  Federico Battiston and
                  Caterina De Bacco},
  title        = {Principled inference of hyperedges and overlapping communities in
                  hypergraphs},
  journal      = {CoRR},
  volume       = {abs/2204.05646},
  year         = {2022},
}

@inproceedings{DBLP:conf/icde/QinL0WD23,
  author       = {Hongchao Qin and
                  Rong{-}Hua Li and
                  Ye Yuan and
                  Guoren Wang and
                  Yongheng Dai},
  title        = {Explainable Hyperlink Prediction: {A} Hypergraph Edit Distance-Based
                  Approach},
  booktitle    = {ICDE},
  pages        = {245--257},
  publisher    = {{IEEE}},
  year         = {2023},
}

@inproceedings{DBLP:conf/nips/AltmanBNEAA23,
  author       = {Erik R. Altman and
                  Jovan Blanusa and
                  Luc von Niederh{\"{a}}usern and
                  Beni Egressy and
                  Andreea Anghel and
                  Kubilay Atasu},
  title        = {Realistic Synthetic Financial Transactions for Anti-Money Laundering
                  Models},
  booktitle    = {Advances in Neural Information Processing Systems 36: Annual Conference
                  on Neural Information Processing Systems 2023, NeurIPS 2023, New Orleans,
                  LA, USA, December 10 - 16, 2023},
  year         = {2023},
}

@article{jensen2023synthetic,
  title={A synthetic data set to benchmark anti-money laundering methods},
  author={Jensen, Rasmus Ingemann Tuffveson and Ferwerda, Joras and J{\o}rgensen, Kristian Sand and Jensen, Erik Rathje and Borg, Martin and Krogh, Morten Persson and Jensen, Jonas Brunholm and Iosifidis, Alexandros},
  journal={Scientific data},
  volume={10},
  number={1},
  pages={661},
  year={2023},
}

\end{document}